\documentclass[letterpaper, 10 pt, conference]{ieeeconf}  % Comment this line out if you need a4paper

\IEEEoverridecommandlockouts                              % This command is only needed if 
\usepackage{amsmath} % assumes amsmath package installed
\usepackage{amssymb}  % assumes amsmath package installed
\usepackage{algorithm}
\usepackage{algorithmicx}
\usepackage{algpseudocode}
\usepackage{mathrsfs}
\usepackage{mathtools}
\let\labelindent\relax
\usepackage{enumitem}
\setlist[enumerate]{left=0pt, label=(\arabic*)}
\setlist[itemize]{left=0pt}
\usepackage{xcolor}

\newtheorem{theorem}{Theorem}
\newtheorem{corollary}{Corollary}
\newtheorem{lemma}{Lemma}
\newtheorem{remark}{Remark}
\newtheorem{definition}{Definition}
\newtheorem{assumption}{Assumption}
\newtheorem{proposition}{Proposition}
\makeatletter
\newcommand{\LongState}[1]{%
  \State \parbox[t]{\dimexpr\linewidth-\ALG@thistlm\relax}{#1}%
}
\makeatother

\title{\LARGE \bf
Non-asymptotic Bounds of Learning-based Linear MPC With Input Constraints and Unbounded Stochastic Noise
}

\author{Changyi Lei, Seth Siriya, Dragan Ne\v{s}i\'{c} and Ye Pu % <-this % stops a space
\thanks{*The work of Changyi Lei was supported by the Australian Government Research Training Program (RTP) Scholarship and the Australian Research Council under Project DE220101527.  
The work of Dragan Nešić was supported by the Australian Research Council through the Discovery Project under Grant DP250100300. 
The work of Ye Pu was supported by the Australian Research Council under Project DE220101527.}% <-this % stops a space
\thanks{The authors are with the Department of Electrical and Electronic Engineering, University of Melbourne, Parkville, VIC 3010, Australia 
(e-mail: chalei1@student.unimelb.edu.au; siriya@irt.uni-hannover.de; \{dnesic, ye.pu\}@unimelb.edu.au).}%
}

\begin{document}

\maketitle
\thispagestyle{empty}
\pagestyle{empty}

%%%%%%%%%%%%%%%%%%%%%%%%%%%%%%%%%%%%%%%%%%%%%%%%%%%%%%%%%%%%%%%%%%%%%%%%%%%%%%%%
\begin{abstract}

This paper studies learning-based model predictive control (MPC) for stabilizing unknown discrete-time linear systems with hard input constraints and additive unbounded sub-Gaussian disturbances. We adopt a certainty-equivalence (CE) design that combines a switching MPC control law with online regularized least-squares (RLS) parameter estimation. The resulting switching control law blends the MPC with a saturated deadbeat controller, ensuring global closed-loop stability. Building upon non-asymptotic error bound of least-squares, we derive non-asymptotic, high-probability stability bounds for the closed-loop system under the proposed switching controller. Numerical experiments illustrate and support the theoretical findings.

\end{abstract}

%%%%%%%%%%%%%%%%%%%%%%%%%%%%%%%%%%%%%%%%%%%%%%%%%%%%%%%%%%%%%%%%%%%%%%%%%%%%%%%%
\section{INTRODUCTION}
Model predictive control (MPC) is a widely used methodology for the control of multivariable systems with constraints. At each time instant, MPC solves a finite-horizon optimal control problem based on a model of the plant and applies the first portion of the optimal input sequence. This optimization-based framework allows one to systematically handle hard input and state constraints and optimize multi-step performance objectives \cite{borrelli2017predictive}. A fundamental limitation of MPC, however, is its strong dependence on an accurate model of the plant \cite{rawlings2017model}. Broadly speaking, two main approaches have emerged to address model uncertainty in MPC: robust MPC \cite{mayne2005robust} and learning-based MPC \cite{hewing2020learning}. This paper focuses on the latter perspective.

Learning-based MPC typically integrates data-driven model refinement with predictive control. Depending on how predictive information is obtained from data, existing methods can be broadly grouped into nonparametric model-based methods, direct data-driven methods, and parametric model-based methods \cite{hewing2020learning}. Nonparametric approaches construct flexible surrogate models, such as Gaussian-process, RKHS/kernel, or kinky-inference models, and then use the associated uncertainty quantification to support robust or cautious MPC design \cite{koller2018learning,hewing2019cautious,maddalena2021kpc,limon2017learning}. While these methods are attractive for general nonlinear systems, they often face scalability and uncertainty-propagation challenges over the MPC horizon \cite{scampicchio2025gaussian}. Direct data-driven MPC  \cite{coulson2019data,berberich2020data} bypasses explicit model identification by using measured input-output trajectories directly for prediction and have been shown to recover model-based MPC behavior for unknown LTI systems with many variants \cite{berberich2021terminal,bongard2022robust,kloppelt2025novel}. However, this method is sensitive to the noise level in the data \cite{sassella2022noise}. Therefore, for the rest of the paper, we focus on the parametric model-based method.

The third category consists of parametric model-based learning for MPC design. These methods assume a known model structure and use online data either to estimate unknown parameters or to refine a set of admissible models. Early adaptive MPC schemes for constrained linear systems, such as \cite{di2016indirect}, combined parameter estimation with robust MPC ingredients with terminal constraints to obtain recursive feasibility, constraint satisfaction, and stability. Set-membership approaches further update a polytopic uncertainty set from input-output data and enforce constraints robustly over the updated set \cite{tanaskovic2014adaptive}. Building on this viewpoint, later works incorporated stabilizing terminal ingredients, homothetic tubes, recursive least-squares updates, and robust adaptive MPC designs to establish recursive feasibility, robust constraint satisfaction, and closed-loop stability \cite{kohler2019linear,lorenzen2019robust,zhang2020adaptive}. Extensions to nonlinear discrete-time and continuous-time systems were developed in \cite{kohler2021robust,kohler2026certainty,sasfi2023robust}, while \cite{parsi2023dual} introduced a dual adaptive MPC formulation that accounts for future set-membership updates inside the control optimization. A related direction is Koopman-based MPC, which learns a finite-dimensional lifted predictor of nonlinear dynamics. Existing results then apply linear, bilinear, or robust MPC in the lifted space \cite{zhang2022robust,worthmann2024data}. However, the finite-dimensional Koopman projection error remains the main limiting factor \cite{schimperna2025stability}.

Despite the method-specific disadvantages discussed above, existing parametric model-based MPC approaches share several common restrictions. First, they typically assume an a priori known compact parameter set containing the true system parameter. While this is convenient for set-membership estimation and robust constraint tightening, it is restrictive in practice, especially for open-loop unstable systems, where classical identification results often rely on stability, mixing, or non-explosiveness assumptions \cite{ljung1995system,yu1994rates}. Second, robust adaptive MPC schemes usually require stabilizing ingredients before online learning begins, e.g. \cite{lorenzen2019robust}. Thus, stability is partly presupposed rather than achieved through learning and control. Moreover, terminal constraints are difficult to construct, particularly for nonlinear systems, and typically yield only regional stability guarantees because the required invariance and reachability conditions hold on restricted state sets \cite{limon2005enlarging,rawlings2017model}. Consequently, existing parametric learning-based MPC results do not fully resolve stability under genuine model uncertainty. Although later works such as \cite{bold2024data} avoids terminal constraints in an EDMD-based MPC setting, its guarantee remains semiglobal and depends on region-dependent constants that are difficult to verify explicitly.

The above limitations are closely tied to how model uncertainty is handled in the stability analysis of learning-based MPC. With a pre-stabilizing controller, a compact uncertainty set, and bounded noise, learning can be confined to a prescribed region. In the setting considered here, however, learning and stabilization must occur simultaneously, and the transient state may become arbitrarily large due to instability and unbounded disturbances. Stability therefore cannot rely on compactness, terminal ingredients, or regional Lyapunov arguments. Instead, it requires 1) finite-time identification bounds valid along unstable trajectories, and 2) a global analysis that propagates the resulting model error through the MPC value function and feedback law.

The first requirement connects the present problem to non-asymptotic system identification. In contrast to classical consistency theory, which describes limiting estimator behavior \cite{ljung1995system}, non-asymptotic results provide finite-time, high-probability error bounds of the form $|\hat{\theta}_T-\theta^\star| \leq \epsilon_T$. Such bounds have been developed for linear dynamical systems from single-trajectory data \cite{simchowitz2018learning}, including unstable systems \cite{faradonbeh2018finite,sarkar2019near,siriya2024non}, partially observed systems \cite{oymak2019non,zheng2020non}, and control-oriented settings \cite{wagenmaker2020active,dean2020sample}. These results are essential here because they allow model uncertainty to be quantified during learning, even before closed-loop stability has been established. In the context of adaptive and learning-based control, these bounds allow the mismatch between the true and estimated dynamics to be treated as a finite-time, vanishing perturbation in the Lyapunov drift, thereby enabling high-probability stability guarantees during simultaneous learning and control \cite{siriya2025framework}.

Motivated by the above discussions, this paper studies learning-based control for unknown linear systems subject to hard input saturation and additive unbounded sub-Gaussian disturbances. The objective is to achieve simultaneous online system identification and global stochastic stabilization using MPC, without assuming a pre-stabilizing controller, a known compact uncertainty set for robust MPC design, or stabilizing terminal ingredients. The main contributions are summarized as follows.

\textbf{(1)} We propose a certainty-equivalent adaptive MPC scheme for unknown linear systems with hard input saturation and unbounded stochastic disturbances. The controller combines recursive least-squares estimation with an excitation-based MPC input while enforcing $|u_t|\le u_{\max}$ at all times. Unlike standard adaptive MPC designs, it does not require a pre-stabilizing controller, a known compact uncertainty set, or terminal constraints/terminal costs. The resulting stability guarantee is global with respect to the initial condition.

% \textbf{(1)} We propose a certainty-equivalent adaptive MPC scheme for unknown linear systems with hard input saturation and unbounded stochastic disturbances. The controller combines recursive least-squares parameter estimation with an excited control input computed from the current parameter estimate, while enforcing the input constraint $\|u_t\|\le u_{\max}$ at every time step. In contrast to many existing adaptive and learning-based MPC schemes, the proposed design does not rely on a pre-stabilizing controller, a known compact uncertainty set for robust constraint tightening, or terminal constraints/terminal costs to certify stability. The resulting closed-loop guarantee is global in the sense that the initial condition is arbitrary.

\textbf{(2)} To make the certainty-equivalent design amenable to global stability analysis, we introduce a hysteresis switching mechanism \cite{nesic2004framework} between two controllers. This mechanism avoids abrupt switching near the boundary and yields global Lipschitz continuity of the closed-loop control law with respect to the estimated parameters, which is essential for propagating finite-time identification errors through the control input and the MPC value function.

% : an MPC law used inside a compact inner region $\mathcal{X}_{\mathrm{in}}$ and a radially saturated deadbeat-type controller used outside a larger compact region $\mathcal{X}_{\mathrm{out}}\supset \mathcal{X}_{\mathrm{in}}$. The annular region $\mathcal{X}_{\mathrm{out}}\setminus \mathcal{X}_{\mathrm{in}}$ serves as a hysteresis band in which the active controller is held fixed. 

\textbf{(3)} We establish non-asymptotic high-probability stability bounds for the resulting learning-based closed loop. The analysis combines finite-time RLS identification bounds \cite{siriya2024non} with a global stochastic stability argument based on MPC without terminal constraints \cite{grune2012nmpc} and perturbation analysis. This yields explicit high-probability bounds characterizing the closed-loop behavior under simultaneous learning and control.

\textit{Notation}: Let $\mathbb{N}_0:=\mathbb{N}\cup{0}$ and let $\mathbb{R}_{\geq0}$ denote the set of nonnegative real numbers. For a matrix $A$, let $|A|$, $|A|_F$, $|A|_\infty$, $A^\intercal$, and $A^\dagger$ denote its spectral norm, Frobenius norm, infinity norm, transpose, and Moore--Penrose inverse, respectively. For a vector $x$, $|x|$ denotes its Euclidean norm. For positive definite $A$, $\lambda_{\max}(A)$ denotes its largest eigenvalue. Let $\mathcal{S}^{d-1}$ denote the unit sphere in $\mathbb{R}^d$, and let $\mathrm{Id}$ denote the identity map. We use standard definitions of class $\mathcal{K}$, $\mathcal{K}_\infty$, and $\mathcal{L}$ functions. For $r>0$, define
\(
\operatorname{sat}_r(x):=
\begin{cases}
x, & |x|\le r,\\
rx/|x|, & |x|>r.
\end{cases}
\)
We write $X\overset{d}{\sim}Y$ for equality in distribution. Given $\bar A\in\mathbb{R}^{n\times n}$ and $\bar B\in\mathbb{R}^{n\times m}$, define
\(
\mathcal{R}_{\kappa}(\bar A,\bar B)
:=
\begin{bmatrix}
\bar B & \cdots & \bar A^{\kappa-1}\bar B
\end{bmatrix}
\in\mathbb{R}^{n\times \kappa m}.
\)

\section{Problem Setup}
We consider the following stochastic discrete-time linear system
\begin{equation}
    x_{t+1} = Ax_t + Bu_t + w_t \label{sys1}
\end{equation}
for $t \in \mathbb{N}_0$ where $x_t \in \mathbb{R}^n$ is the state, $u_t \in \mathbb{U} \subseteq \mathbb{R}^m$ is the control input, $w_t \in \mathbb{W} \subseteq \mathbb{R}^n$ is the random disturbance, and $x_0 \in \mathbb{R}^n$ is the initial state. The set $\mathbb{U}= \{u \big| \lVert u \rVert_\infty \leq u_{\max} \} $ is the hard input constraint set with a constant upper bound $u_{\max}>0$. Denote $\theta^* = [A \ B]$ as the true but unknown parameter of the system where $A \in \mathbb{R}^{n \times n}$ and $B \in \mathbb{R}^{n \times m}$. We also introduce $\hat{\theta}_t = [\hat{A}_t \ \hat{B}_t]$ that denotes the estimated parameter and $\tilde{\theta}_t = \hat{\theta}_t - \theta^*$ denotes the estimation error, both at time $t$. The following assumptions are imposed on system (\ref{sys1}) with some justification in Remark \ref{remark:1}.

\begin{assumption} (Disturbance)
    The (potentially unbounded) disturbance sequence $(w_t)_{t \in \mathbb{N}_0}$ is an 0-mean independent and identically distributed (i.i.d.) sequence across time $t$, and satisfies $\sigma_W^2$-sub-Gaussian distribution \cite[Definition 3.4.1]{VershyninHDP2}, with a diagonal covariance matrix $\Sigma_W$. Further, there exists a known lower bound $\underline{\sigma}_W^2>0$ on the minimal eigenvalue of $\Sigma_W$.
    \label{assump1}
\end{assumption}

\begin{assumption} (System Matrices)
    Matrix $A$ satisfies $\lVert A \rVert \leq 1$ and $(A,B)$ is $\kappa$-step reachable, namely $\text{rank}(\mathcal{R}_{\kappa}(A,B))=n$ for some $\kappa \in \mathbb{N}$ and $\sigma_{min}(\mathcal{R}_{\kappa}(A,B)) \geq \underline{\sigma}_B > 0$ where $\underline{\sigma}_B>0$ is a known constant. For simplicity, we denote $\mathcal{R}_*:=\mathcal{R}_{\kappa}(A,B)$.
    \label{assump2}
\end{assumption}

% \begin{remark}
% \label{remark:1}
%     In Assumption \ref{assump1}, we allow the process noise to admit a sub-Gaussian distribution that is potentially unbounded, which encompasses a wide range of stochastic noise. Moreover, the lower bound of the variance is assumed known for the proof of persistency of excitation as in \cite{li2023non}. This lower bound assumption is for simpler exposition of our results and does not restrict the generality of the system class considered. In particular, when the system is $\kappa$-step reachable as stated in Assumption~\ref{assump1}, sufficient excitation over every $\kappa$ steps can alternatively be ensured through suitable input perturbations (e.g. random injected noise). 

%     Assumption~\ref{assump2} ensures global stabilizability of the system under input saturation by $\|A\|\leq 1$, which is required since $w_k$ can take arbitrarily large values and may drive the state to any region of $\mathbb R^n$.
% \end{remark}

\begin{remark}
\label{remark:1} 
Assumption~\ref{assump1} is mild and does not significantly restrict the class of systems considered. It allows possibly unbounded sub-Gaussian disturbances, covering a broad class of stochastic noise. The known lower bound on the noise variance is used to establish persistency of excitation, as in \cite{li2023non}. This lower-bound assumption is adopted mainly for simpler exposition and can be relaxed. For example, when the system is \(\kappa\)-step reachable, sufficient excitation over every \(\kappa\) steps can alternatively be enforced through suitable input perturbations, such as injected random noise. In Assumption~\ref{assump2}, the condition $\|A\| \leq 1$ is necessary to guarantee global stabilizability under input saturation \cite{chatterjee2012mean}. The lower bound related to the reachability matrix is sufficient for guaranteeing the existence of control policies with input saturation that render the system mean-square bounded \cite{chatterjee2012mean}, which gives us hope that our learning-based MPC policy can render the system stochastic stability.
\end{remark}

\textbf{Control Objective: }Our goal is to design a learning-based MPC controller for system (\ref{sys1}) with unknown parameters $\theta^* = [A \ B]$ and additive unbounded stochastic noise $w_t$, to achieve simultaneous stabilization (i.e. control) and system identification (i.e. learning). Specifically, the system states should converge to a neighborhood proportional to the size of the noise around the origin under user-defined hard input constraints, alongside with the system parameters being estimated.

\section{Learning-based MPC-DB Switching Control Law and Theoretical Guarantees}
In this section, the proposed control algorithm and the main theorem of closed-loop stability are provided. The proofs of all theoretical results are deferred to the next section. We first define the $\kappa$-step sub-sampled system, and the sub-sampled state sequence $(\bar{x}_\tau)_{\tau \in \mathbb{N}_0}$ satisfying $\bar{x}_\tau = x_{\tau \kappa}$ evolves as
\begin{equation}
    \bar{x}_{\tau+1}=A^\kappa \bar{x}_\tau + \mathcal{R}_* \bar{u}_\tau + \mathcal{R}_\kappa(A,I) \bar{w}_\tau,
    \label{sys1_subsampled}
\end{equation}
where
\[\bar{u}_\tau = \begin{bmatrix}
        u_{\tau \kappa} \\
        u_{\tau \kappa+1} \\
        \vdots \\
        u_{(\tau+1) \kappa-1}
    \end{bmatrix},
\bar{w}_\tau = \begin{bmatrix}
        w_{\tau \kappa} \\
        w_{\tau \kappa+1} \\
        \vdots \\
        w_{(\tau+1) \kappa-1}
    \end{bmatrix}.
\]
The controller design will be based on this sub-sampled system, which is 1-step reachable.

Overall, the MPC-DB switching control law has four components: 
\begin{itemize}
    \item a least square estimator $\hat{\theta}=[\hat{A} \ \hat{B}] \in \mathbb{R}^{n\times (n+m)}$ in \eqref{RLS}
    \item an additive excitation $v$ with magnitude $C \in (0,u_{\max})$ in \eqref{controlall}
    \item a saturated deadbeat (DB) controller $\pi_{\mathrm{DB}}(x,\hat{\theta},C)$ in \eqref{controllaw_sat}
    \item an MPC controller $\pi_{\mathrm{MPC}}(x,\hat{\theta})$ by solving \eqref{MPCProblem}
\end{itemize}
where the DB and MPC are combined according to a switching logic. At each timestep, the Regularized Least Square (RLS) algorithm is used to estimate the parameters. The estimated parameters will be fed into the switching controller for control law calculation. 

\subsection{Regularized Least Square}
Let $\hat{\theta}_t=[\hat{A}_t \ \hat{B}_t]$ denote the estimated parameter at time $t$.The RLS estimate is then derived as
\begin{equation}
    \hat{\theta}_t = \arg \min_{\theta \in \mathbb{R}^{n \times (n+m)}} \sum_{s=1}^t \left| x_s - \theta \begin{bmatrix}
        x_{s-1} \\ u_{s-1}
    \end{bmatrix}   \right|^2 + \psi \lVert \theta \rVert^2_F
    \label{RLS}
\end{equation}
where $\psi>0$ is the coefficient for the regularization term. Note that the initial estimates $\hat{A}_0,\hat{B}_0$ are chosen arbitrarily.  

\subsection{MPC-DB Switching Control Law}
For user-defined excitation level $C \in (0,u_{\max})$, the saturated deadbeat controller has the form 
\begin{equation}
    \pi_{\mathrm{DB}}(x,\hat{\theta},C) =\text{sat}_{u_{\max}-C}(-\mathcal{R}_\kappa(\hat{A}, \hat{B})^\dagger \hat{A}^\kappa x).
     \label{controllaw_sat}
\end{equation}
The MPC solution $\pi_{\mathrm{MPC}}(x,\hat{\theta}) \in \mathbb{R}^{\kappa m}$ that satisfies $ \lVert \pi_{\mathrm{MPC}}(x_t,\hat{\theta}) \rVert_\infty \leq u_{\max}-C$  is obtained by solving the problem (\ref{MPCProblem}) using the parameters $\hat{\theta}$. We employ stage costs in exponential form for technical reasons that will become clear in the proof of Lemma \ref{lem:lyasat}, with some explanation in Remark \ref{remark:expcosts}. Denote $\sigma(x)=\left(e^{|x|} - 1\right)^2$. Let the stage cost be defined as $\ell(x,u) := Q\sigma(x) + u^\intercal R u$, and the terminal cost as
$\ell_f(x) := Q\sigma(x)$, where $Q \in \mathbb{R}_{>0}$ and $R \in \mathbb{R}^{m \times m},\ R \succ 0$ are user-defined coefficients that adjust the relative weights of the costs. Let $N \in \mathbb{N}$ be the prediction horizon. 

\begin{flushleft}
\textbf{MPC Problem \(\mathbf{P}_N(x,\hat{A}, \hat{B})\)}:
\end{flushleft}
\begin{equation}
\label{MPCProblem}
\begin{aligned}
Y_{\mathrm{MPC}}(\hat{\theta},x)=&\min_{\{\hat{u}_i\}_{i=0}^{N-1}}  \ \sum_{i=0}^{N-1} \ell(\hat{x}_i, \hat{u}_i) + \ell_f(\hat{x}_N), \\
\text{s.t.} \ & \hat{x}_{0} = x, \\
& \hat{x}_{i+1}=\hat{A}_{\kappa \tau}^\kappa \hat{x}_i + \mathcal{R}_\kappa(\hat{A}_{\kappa \tau}, \hat{B}_{\kappa \tau}) \hat{u}_i, \\
& \hspace{6em}i \in \{0,\dots,N-1\} \\
& \|\hat{u}_i\|_\infty \leq u_{\max} - C.
\end{aligned}
\end{equation}

% \begin{remark}(Feasibility, exponential costs and state-independent prediction horizon) \label{remark:expcosts}
% Different from many existing works, we do not adopt terminal constraints to ensure stability to avoid the feasibility restrictions due to input constraints and to obtain global guarantees for all initial states. To compute a horizon length \(N\) that works uniformly over the whole domain, we employ an exponential stage cost. This choice ensures that the optimal finite-horizon cost is bounded by a constant multiple of the current stage cost. We use this property to derive a state-independent lower bound on \(N\) that guarantees closed-loop stability for all \(x\in\mathbb{R}^n\) (see Lemma \ref{MPCwithoutterminal}).
% \end{remark}
\begin{remark}(Feasibility, exponential costs, and state-independent prediction horizon)
\label{remark:expcosts}
To obtain global stability guarantees, we do not impose terminal constraints. Unlike many existing works, this avoids the feasibility restrictions induced by input constraints and prevents the stability guarantee from being limited to local.

To derive a prediction horizon \(N\) that is long enough to guarantee stability without terminal constraints, we use an exponential stage cost. This choice ensures that the optimal finite-horizon cost is bounded by a constant multiple of the current stage cost. Based on this property, we derive a state-independent constant lower bound on \(N\) that guarantees closed-loop stability (see Lemma~\ref{MPCwithoutterminal}).
\end{remark}

We denote $Y_{\mathrm{MPC}}(\hat{\theta}_{\kappa \tau},x)$ as a certainty-equivalence (CE) value function for the closed-loop system under policy $\pi_{\mathrm{MPC}}(\cdot,\hat{\theta}_{\kappa \tau})$, which is the minimal value attained by solving problem \(\mathbf{P}_N(x,\hat{A}_{\kappa \tau}, \hat{B}_{\kappa \tau})\).
% \begin{equation}
% \begin{aligned}
% Y_{\mathrm{MPC}}(\hat{\theta}_{\kappa \tau},x) \triangleq & \min_{\{\pi_i\}_{i=0}^{N-1}}
% \left[ \sum_{i=0}^{N-1} \ell(\hat{x}_i, \pi_i) + \ell_f(\hat{x}_N) \right], \\
% \text{s.t.} \quad 
% & \hat{x}_0 = x, \\
% & \hat{x}_{i+1} = \hat{A}_t \hat{x}_i + \hat{B}_t \pi_i, \quad i = 0, \dots, N-1, \\
% & \|\pi_i\|_\infty \le u_{\max}-C, \quad i = 0, \dots, N-1.
% \end{aligned} \label{eq:Y1hat}
% \end{equation}
Similarly, $Y_{\mathrm{DB}}(\hat{\theta}_{\kappa \tau},x)$ is defined as a CE realized cost for the closed-loop system under $\pi_{\mathrm{DB}}(\cdot,\hat{\theta}_{\kappa \tau},C)$:
\begin{equation}
\begin{split}
    &Y_{\mathrm{DB}}(\hat{\theta}_{\kappa \tau},x) = \sum_{i=0}^{N-1}\ell(\hat{x}_i,\pi_{\mathrm{DB}}(\hat{x}_i,\hat{\theta}_{\kappa \tau},C)) + \ell_f(\hat{x}_N),\\
&\text{s.t.} \  \hat{x}_0 = x, \\
&\quad \ \ \hat{x}_{i+1}=\hat{A}_{\kappa \tau}^\kappa \hat{x}_i + \mathcal{R}_\kappa(\hat{A}_{\kappa \tau}, \hat{B}_{\kappa \tau}) \pi_{\mathrm{DB}}(\hat{x}_i,\hat{\theta}_{\kappa \tau},C),\\
&  \hspace{12em}i \in \{0,\dots,N-1\}.
\end{split} \label{eq:Y2hat}
\end{equation}

We will now show the MPC-DB switching control law with hysteresis switching. Let the mode variable be \(s_\tau \in \{\text{MPC},\text{DB}\}\) and denote the next mode by \(s_{\tau+1}\).
Fix user-chosen constants \(c_{\mathrm{MPC}}>0\), \(c_{\mathrm{DB}}>0\), and a hysteresis width \(c_3>0\) such that
\begin{equation}
c_{\mathrm{MPC}} - c_3 \;\ge\; c_{\mathrm{DB}} .
\label{eq:hyswitchlevelsetconstants}
\end{equation}
For every $\kappa$ steps, the mode update is
\begin{equation}
s_{\tau+1} =
\begin{cases}
\text{MPC} , & \text{if } Y_{\mathrm{DB}}(\hat{\theta}_{\kappa \tau},\bar{x}_\tau) < c_{\mathrm{DB}} \\
\text{DB}, & \text{if } Y_{\mathrm{MPC}}(\hat{\theta}_{\kappa \tau},\bar{x}_\tau) \ge c_{\mathrm{MPC}}\\
s_\tau, & \text{otherwise} \\
\end{cases}
\label{eq:switch}
\end{equation}
where the third line implements the hysteresis action. Given an estimate \(\hat\theta_{\kappa \tau}\), the CE control law is
\begin{equation}
\pi_{s_\tau}(\bar{x}_\tau,\hat\theta_{\kappa \tau}) =
\begin{cases}
\pi_{\mathrm{MPC}}(\bar{x}_\tau,\hat\theta_{\kappa \tau})& \text{if } s_\tau=\text{MPC}\\
\pi_{\mathrm{DB}}(\bar{x}_\tau,\hat\theta_{\kappa \tau},C)& \text{if } s_\tau=\text{DB}
\end{cases}
\label{eq:nominal-controller}
\end{equation}
and the applied input injects exploration noise:
\begin{equation}
\begin{split}
    \bar{u}_\tau &= \alpha(\bar{x}_\tau,\bar{v}_\tau,\hat\theta_{\kappa \tau}) = \pi_{s_{\tau+1}}(\bar{x}_\tau,\hat\theta_{\kappa \tau}) + \bar{v}_\tau,\\
\bar{v}_\tau &\overset{\mathrm{i.i.d.}}{\sim} \mathrm{Uni}([-C,C]^{\kappa m}).
\end{split}
\label{controlall}
\end{equation}
Equivalently, \(\bar{v}_\tau \in \mathbb{V}:=\{v\in\mathbb{R}^{\kappa m}:\|v\|_\infty\le C\}\) with a uniform density over the hypercube. The hard input constraints are satisfied as $\lVert u_t \rVert_\infty \leq \lVert \bar{u}_\tau \rVert_\infty \leq \lVert \pi_s(\cdot) \rVert_\infty+\lVert \bar{v}_\tau \rVert_\infty \leq u_{\max}$. For implementation, we now map the augmented control input $\bar{u}_\tau\in \mathbb{R}^{\kappa m}$ in \eqref{controlall} to each $u_t \in \mathbb{R}^m$ of system \eqref{sys1}. Define the selection matrix $S_i =
\begin{bmatrix}
0_{m\times (i-1)m} & I_m & 0_{m\times (\kappa-i)m}
\end{bmatrix},$ which picks out the $i^{th}$ subsequence of size m. At each timestep, we use the matrix $S_i$ to select the corresponding input $u_t$ of size $m$ from $\bar{u}_\tau$ and apply to the system in sequence:
\begin{equation}
    u_{\tau \kappa + i}=S_i\pi_{s_{\tau+1}}(\bar{x}_\tau,\hat\theta_{\kappa \tau}) +S_i \bar{v}_\tau, \, i=1,2,...,\kappa.
    \label{eq:controlpolicyperstep}
\end{equation}
\begin{remark}(Hysteresis switching for global Lipschitz continuity)
\label{remark:hys_lip_con}
The hysteresis switching between the two controllers is introduced to ensure that the resulting control law is globally Lipschitz continuous with respect to the system parameters for all $x \in \mathbb{R}^n$. This property is important for the stability analysis, since it prevents small parametric uncertainty from causing arbitrarily large changes in the closed-loop stability. This can be illustrated by a simple counterexample showing that the MPC law need not be globally Lipschitz continuous in the parameters. Consider
\(
x_{t+1}=
\begin{bmatrix}
1 & \epsilon\\
0 & 1
\end{bmatrix}x_t+u_t,
\)
with finite-horizon cost
\(
J(x_t,u_t)=|x_t|^2+|u_t|^2+|x_{t+1}|^2.
\)
When \(\epsilon=0\), the optimal input is
\(
u_t^*=-\frac{1}{2}x_t.
\)
When \(\epsilon>0\), the optimal input becomes
\(
\bar u_t^*=
-\frac{1}{2}
\begin{bmatrix}
x_1+\epsilon x_2\\
x_2
\end{bmatrix}.
\)
Hence,
\(
\frac{\partial \bar u_t^*}{\partial \epsilon}
=
-\frac{1}{2}
\begin{bmatrix}
x_2\\
0
\end{bmatrix},
\)
whose norm is proportional to \(|x_2|\). Therefore, the MPC controller is Lipschitz continuous in \(\epsilon\) only on bounded subsets of the state space.
\end{remark}

With all the components described, Alg.\ref{algoall} summarizes our proposed control strategy. 
\begin{algorithm}
\caption{Learning-based MPC-DB Switching Control}
\begin{algorithmic}[1]
\State \textbf{Input:} $u_{\max}>0, C \in (0,u_{\max}), \hat{\theta}_0 \in \mathbb{R}^{n \times (n+m)}, \kappa \in \mathbb{N}$
\State \textbf{Select:} $s_0=\text{DB},N \in \mathbb{N},Q\in \mathbb{R}_{>0},R\in \mathbb{R}^{\kappa m\times \kappa m}$, and $c_{\mathrm{MPC}}, c_{\mathrm{DB}},c_3$ satisfying \eqref{eq:hyswitchlevelsetconstants}

\For{$\tau=0,1,...$}
    \State Measure $\bar{x}_\tau$
    \LongState{Calculate $Y_{\mathrm{MPC}}(\hat{\theta}_{\kappa\tau},\bar{x}_\tau)$ and
    $Y_{\mathrm{DB}}(\hat{\theta}_{\kappa\tau},\bar{x}_\tau)$ by solving problem
    $\mathbf{P}_N(\bar{x}_\tau,\hat{A}_{\kappa\tau},\hat{B}_{\kappa\tau})$ and \eqref{eq:Y2hat}}
    \State Determine $\pi_{s_{\tau+1}}(\bar{x}_\tau,\hat\theta_{\kappa \tau})$ using (\ref{eq:nominal-controller})
    \State Sample $\bar{v}_\tau$ i.i.d from $\mathrm{Uni}([-C,C]^{\kappa m})$

    \For{$i=0,1,..., \kappa-1$}
        \LongState{Apply the $\kappa$-step control law in sequence
        $u_{\tau\kappa+i}=S_i\pi_{s_{\tau+1}}(\bar{x}_\tau,\hat{\theta}_{\kappa\tau})+S_i\bar{v}_\tau$}
        \State Measure $x_{\tau \kappa + i+1}$ from (\ref{sys1})
        \State Compute $\hat{\theta}_{\tau \kappa + i+1}$ following (\ref{RLS})
    \EndFor
\EndFor
\end{algorithmic} \label{algoall}
\end{algorithm}

\subsection{Theoretical Results}
We are now ready to state our main theorem about the stability bound for the system (\ref{sys1}) under our proposed control strategy.
\begin{theorem}
Consider system~\eqref{sys1} under Assumptions~\ref{assump1}--\ref{assump2}, and the control laws in Algorithm~\ref{algoall} with switching thresholds $c_{\mathrm{MPC}},\,c_{\mathrm{DB}},\,c_3$ satisfying~\eqref{eq:hyswitchlevelsetconstants}. Given parameters $Q>0,R \succ 0$, confidence level $\delta \in (0,\frac{1}{2})$, initial state $x_0 \in \mathbb{R}^n$, and threshold $\frac{1}{2}c_3 < c_4 < c_{\mathrm{MPC}}$, suppose that $N\ge 2$ and satisfies 
\begin{equation}
q_3^{\max}\left( Q \gamma -Q \right)m_2^2 - q_3^{\min}Q + \frac{q_3^{\max} Q \gamma^2 }{N-1} <0  
\label{condition_ISS}
\end{equation} 
where
\begin{align}
\bar{E}_w 
&= \exp\Bigl(
\Big|
\ln\left( \mathbb{E} e^{|\mathcal{R}_* \bar{v}_\tau|}\right)
+ \ln\left( \mathbb{E} e^{| \mathcal{R}_\kappa(A,I) \bar{w}_\tau|} \right)
\Big|
\Bigr) - 1
\label{eq:Ew}
\\
m_2^2 
&= \tfrac{(Q + \lambda_{\max}(R)\| \mathcal{R}_*^\dagger A^\kappa \|^2)^2}
{Q^2 (N-1) (1-p^2)^2}
+ \tfrac{(Q + \lambda_{\max}(R)\| \mathcal{R}_*^\dagger A^\kappa \|^2)}
{Q (1-p^2)}
- 1 \ge 0,
\label{eq:m2}
\\
\gamma 
&= \tfrac{(Q + \lambda_{\max}(R)\| \mathcal{R}_*^\dagger A^\kappa \|^2)}
{Q(1-p^2)},
\label{eq:gamma}
\\
p 
&= \exp\bigl(-\sigma_{\min}(\mathcal{R}_*)(u_{\max}-C)\bigr), \\
q_2 &= \max\left\{\frac{4\, \gamma \, c_{\mathrm{MPC}}}{c_3}, \frac{1}{2 \gamma}, 2\gamma   \right\},\\
        q_3^{\min}&=\min \left\{1+\frac{1}{4\gamma}, q_2, \frac{1}{2\gamma}, 1 \right\}, \\
        q_3^{\max}& =\max \left\{1+\frac{1}{4\gamma}, q_2, \frac{1}{2\gamma}, 1 \right\}.
\label{eq:p}
\end{align}
 Then there exists a function $\tilde{\eta} \in \mathcal{L}$ and a constant $\tilde{c} \geq 0$ such that
\begin{equation}
\label{eq:per_step_state_bound}
    \mathbb{P}\bigl(\,|x_t| \le \tilde{\eta}(t) + \tilde{c}\,\bigr) \ge 1 - 3\delta, \forall t \ge  t_1
\end{equation}
where
\(
t_1 = t_1\!\bigl(A, B, Q, R, N, c_{\mathrm{MPC}}, c_{\mathrm{DB}}, c_3, c_4, u_{\max}, C\bigr).
\)
\label{stabilitytheorem}
\end{theorem}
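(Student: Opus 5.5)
The plan follows the framework of \cite{siriya2025framework}. The argument has three layers: a high-probability finite-time identification bound for the RLS estimate, a certainty-equivalence Lyapunov drift inequality for the sub-sampled closed loop \eqref{sys1_subsampled}, and a perturbation argument. The perturbation argument transfers the drift from the estimated to the true parameters using global Lipschitz continuity in $\hat\theta$.

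\textbf{Step 1 (identification).} First I would verify persistency of excitation, and I would do this uniformly over the closed-loop trajectory. The uniform excitation $\bar v_\tau$, the noise lower bound $\underline\sigma_W^2$ from Assumption~\ref{assump1}, and the fact that $|u_t|\le u_{\max}$ together give a block-martingale small-ball condition. Combined with the possibly growing but at most polynomially growing state (since $\|A\|\le1$ and the inputs are bounded), the non-asymptotic RLS bound of \cite{siriya2024non} should yield the following: with probability at least $1-\delta$, $|\tilde\theta_t|\le \epsilon(t)$ for all $t\ge t_0$, where $\epsilon$ is decreasing to $0$, roughly like $\sqrt{\log t/t}$. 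A second $\delta$ is spent on a crude a priori high-probability growth bound $|x_t|\lesssim \sqrt t+\log(1/\delta)$ up to the time the estimate becomes accurate. From the estimation bound, $\sigma_{\min}(\mathcal{R}_\kappa(\hat A,\hat B))\ge \underline\sigma_B/2$ once $\epsilon(t)$ is small, so $\pi_{\mathrm{DB}}$ is well defined.

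\textbf{Step 2 (nominal drift).} Next I would build a hybrid Lyapunov function $W(s,\bar x,\theta)$. It equals $Y_{\mathrm{MPC}}$ in mode MPC and $q_2\,Y_{\mathrm{DB}}$ (or a scaled version) in mode DB. The constants $q_2,q_3^{\min},q_3^{\max}$ sandwich the switched function so that $q_3^{\min}\ell\le W\le q_3^{\max}\gamma\,\ell$.

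For the MPC mode, I would use Lemma~\ref{MPCwithoutterminal}. The exponential cost gives $Y_{\mathrm{MPC}}\le\gamma Q\sigma(x)$ with $\gamma$ as in \eqref{eq:gamma}, because $p<1$ encodes the geometric contraction of $e^{|x|}$ under saturated deadbeat. Grüne's relaxed dynamic programming then gives a decrease whenever \eqref{condition_ISS} holds.

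For the DB mode, I would show $\mathbb{E}\,\sigma(\bar x_{\tau+1})\le p^2\sigma(\bar x_\tau)+\bar E_w$ using independence and the moment generating function bound \eqref{eq:Ew}; this is Lemma~\ref{lem:lyasat}. The hysteresis band with $c_4>c_3/2$ ensures that a mode switch cannot increase $W$ by more than a controlled amount. Together these yield
\[
\mathbb{E}[W_{\tau+1}\mid\mathcal F_\tau]\le (1-\rho)W_\tau + b
\]
at the true parameter $\theta^*$.

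\textbf{Step 3 (perturbation).} Then I would show that $\pi_{s}(\bar x,\cdot)$ and $W(s,\bar x,\cdot)$ are globally Lipschitz in $\theta$, with a constant allowed to grow like $\sigma(\bar x)$. This uses Remark~\ref{remark:hys_lip_con}: MPC is active only on the compact set $\{Y\le c_{\mathrm{MPC}}\}$, and the saturated DB law is globally Lipschitz in $\hat\theta$ when $\mathcal{R}_\kappa(\hat A,\hat B)$ is well conditioned. The estimated drift then equals the true drift plus an error $\le L\,\epsilon(\kappa\tau)(W_\tau+1)$. For $\tau\ge t_1/\kappa$ this error is absorbed into $\rho/2$.

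Iterating the drift inequality gives $\mathbb{E}W_\tau\le (1-\rho/2)^{\tau-\tau_1}W_{\tau_1}+2b/\rho$ on the good event. Markov's inequality with the third $\delta$, the inversion $|x|\le\sigma^{-1}(W/(q_3^{\min}Q))$, and the bounded intersample growth within $\kappa$ steps then give \eqref{eq:per_step_state_bound}. The initial-state dependence from the growth bound goes into $\tilde\eta$.

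\textbf{Main obstacle.} I expect Step 3 to be the hardest. The exponential cost makes Lipschitz constants scale with $e^{2|x|}$, so the conditional expectation must be localized. I would work on the event where the estimate is accurate, handling the non-adaptedness of that event by stopping-time or indicator truncation as in \cite{siriya2025framework}. If that fails, the fallback is to prove the drift only for $\sigma(x)$ in DB mode, where Lipschitzness is linear in $|x|$.
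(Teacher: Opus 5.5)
Your overall architecture matches the paper's: the PE condition and the RLS bound of \cite{siriya2024non}, exponential-cost sandwich bounds with relaxed dynamic programming, Lipschitz perturbation of the control law with MPC confined to a compact set, a stochastic-stability conversion, and an intersample bound. Replacing \cite[Theorem 2]{siriya2025framework} by stopped-process iteration plus Markov is a legitimate variant. But two steps fail as written.

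\textbf{The hybrid Lyapunov function.} At an MPC$\to$DB switch you only know $Y_{\mathrm{MPC}}(\hat\theta,\bar x_\tau)\ge c_{\mathrm{MPC}}$. Because $\bar w$ is unbounded, $\bar x_\tau$ can be arbitrarily large there.
\begin{itemize}
\item With $W=Y_{\mathrm{MPC}}$ before and $W=q_2Y_{\mathrm{DB}}$ after, the sandwich bounds $Q\sigma\le Y_{\mathrm{MPC}}\le Y_{\mathrm{DB}}\le\gamma Q\sigma$ only give $\mathbb{E}W_{\tau+1}-W_\tau\le\bigl(q_2(\gamma-1)-1\bigr)Q\sigma(\bar x_\tau)+q_2\sigma_{\mathrm{DB}}$.
\item Since $q_2\ge2\gamma$, this coefficient is positive as soon as $\gamma>(1+\sqrt3)/2$.
\item So the jump is of order $\sigma(\bar x_\tau)$, not a ``controlled amount'' absorbable into $b$, and $\mathbb{E}[W_{\tau+1}\mid\mathcal F_\tau]\le(1-\rho)W_\tau+b$ does not follow.
\end{itemize}
The missing idea is the paper's nonlinear rescaling: $V=\rho\circ Y_{\mathrm{MPC}}$ in MPC mode and $V=Y_{\mathrm{DB}}$ in DB mode. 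Here $\rho$ is the convex $\mathcal K_\infty$ function of Lemma~\ref{lem:rho_nesic}, with $\rho(s)\le\frac{2}{3\gamma}s$ on $[0,c_{\mathrm{MPC}}-c_3+c_4]$ and $\rho(s)\ge2\gamma s$ on $[c_{\mathrm{MPC}}-c_4,\infty)$.
\begin{itemize}
\item These two ranges must be disjoint, which needs $c_4<\tfrac12c_3$; the inequality $c_4>\tfrac12c_3$ you invoke is not what controls the switching.
\item Combined with $Y_{\mathrm{MPC}}\le Y_{\mathrm{DB}}\le\gamma Y_{\mathrm{MPC}}$, and with Lemma~\ref{lem:levelsetuncertainty} moving the $\hat\theta$-based switching tests onto level sets of the true $Y_{\mathrm{MPC}}$, this makes $V$ decrease across both kinds of switch.
\item Convexity of $\rho$ with $\rho'\in[\frac1{2\gamma},q_2]$ passes the $Y_{\mathrm{MPC}}$ drift through $\rho$.
\end{itemize}
A linear scaling can be made to work, but in the opposite direction to yours. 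Scale the DB branch \emph{down}, e.g.\ to $Y_{\mathrm{DB}}/\gamma$; then MPC$\to$DB switches give $\mathbb{E}W_{\tau+1}-W_\tau\le\frac1\gamma(-Q\sigma(\bar x_\tau)+\sigma_{\mathrm{DB}})$. DB$\to$MPC switches then cost at most $\gamma c_{\mathrm{DB}}-Q\sigma(\bar x_\tau)+\sigma_{\mathrm{MPC}}$, which is fine because they only occur in the compact set $\{Q\sigma(\bar x)<c_{\mathrm{DB}}\}$. Also, $q_3^{\min},q_3^{\max}$ are not sandwich constants; they are the extreme decrease multipliers over the four switching cases.

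\textbf{The noise term in the exponential cost.} Your DB-mode estimate $\mathbb{E}\sigma(\bar x_{\tau+1})\le p^2\sigma(\bar x_\tau)+\bar E_w$ is false, because the disturbance acts multiplicatively on $e^{|x|}$. A counterexample:
\begin{itemize}
\item Take $n=m=\kappa=1$, $A=B=1$, $r=u_{\max}-C$ (so $p=e^{-r}$) and $\xi=\bar v_\tau+\bar w_\tau$.
\item For large $\bar x_\tau=x$ you get $\mathbb{E}\sigma(\bar x_{\tau+1})\approx e^{2(x-r)}\mathbb{E}e^{2\xi}$.
\item Hence $\mathbb{E}\sigma(\bar x_{\tau+1})-p^2\sigma(x)$ grows like $e^{2(x-r)}(\mathbb{E}e^{2\xi}-1)$, which is unbounded since $\mathbb{E}e^{2\xi}>1$.
\end{itemize}
With an additive noise term, your decrease condition would not depend on the noise level at all, which cannot be right. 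The paper's drift bounds in Lemmas~\ref{lem:MPCwithoutterminal_CE} and~\ref{DB_CE} instead carry the factor $(1+E_w)^2$ in front of $m_2^2\sigma(x)$. The condition actually invoked in the proofs of Propositions~\ref{SLya1} and~\ref{stabilityproposition} is the $(1+\bar E_w)^2$-weighted inequality \eqref{ineq:1}. It couples the noise level to $u_{\max}-C$, $R/Q$ and $N$.

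Your Steps 2 and 3 need to be redone with this multiplicative structure. The Lipschitz perturbation enters the same way, as the extra factor $e^{D\|\mathcal{R}_*\|\|\tilde\theta\|}$ inside $E_w$, and that part of your Step 3 is consistent with the paper. A minor point: your a priori growth bound should be linear in $t$, as in $\overline{x}(t,\delta,x_0)$, not $\sqrt t$. Before identification, the bounded input can push the state outward at a linear rate.
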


\begin{remark}
Condition~\eqref{condition_ISS} balances (i) the noise growth via $\bar{E}_w$, and (ii) the receding-horizon term $\frac{1}{Q(N-1)(1-p^2)^2}$. The conclusion asserts a high-probability ultimate boundedness: after a finite transient $t_1$, the states remain within a decaying tube of radius $\tilde{\eta}(t)$ around a fixed offset $\tilde{c}$ with probability at least $1-\delta$. 
\end{remark}

For more intuitive understanding and practical implementation, the following corollary follows from Theorem \ref{stabilitytheorem}.
\begin{corollary}
Assume that $N\ge 2$ and Assumptions~\ref{assump1}--\ref{assump2} are satisfied. 
Using the same notation as Theorem~\ref{stabilitytheorem}, assign
\(
    c_{\mathrm{MPC}}=2c_3 .
\)
Select any $\xi$ such that
\(
    0<\xi<
    \min\left\{
    \frac{1}{(1+\bar{E}_w)^2},
    \frac{\sqrt{2}-1}{2}
    \right\}.
\)
Suppose that
\begin{equation}
\label{eq:umax-sufficient-new}
u_{\max}
\ge
C
+
\frac{1}{2\sigma_{\min}(\mathcal{R}_*)}
\ln\left(
\frac{2(1+\xi)}{\xi}
\right),
\end{equation}
\begin{equation}
\label{eq:R-sufficient-new}
\lambda_{\max}(R)
\le
\frac{\xi Q}
{2\|\mathcal{R}_*^\dagger A^\kappa\|^2},
\end{equation}
\begin{equation}
\label{eq:N-lb-new}
N>
\max\left\{
2,\,
1+
\frac{(1+\xi)^3}
{\frac{1}{16(1+\xi)^2}-\xi^2}
\right\}.
\end{equation}
Then condition~\eqref{condition_ISS} holds.

\label{corollarysufficientcondition}
\end{corollary}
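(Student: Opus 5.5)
The plan is to verify condition~\eqref{condition_ISS} by pure algebra. First bound $\gamma$ above and below using the three hypotheses~\eqref{eq:umax-sufficient-new}--\eqref{eq:N-lb-new}. Then show that the choice $c_{\mathrm{MPC}}=2c_3$ pins down $q_3^{\max}$ and $q_3^{\min}$ explicitly. Finally, reduce~\eqref{condition_ISS} to the inequality~\eqref{eq:N-lb-new} on $N$. Throughout I abbreviate $a:=\|\mathcal{R}_*^\dagger A^\kappa\|^2$.

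\textbf{Step 1 (bounding $p$ and $\gamma$).} From~\eqref{eq:umax-sufficient-new}, $\sigma_{\min}(\mathcal{R}_*)(u_{\max}-C)\ge \tfrac12\ln\bigl(2(1+\xi)/\xi\bigr)$. Hence $p^2\le \xi/(2(1+\xi))$, i.e.\ $1-p^2\ge (2+\xi)/(2(1+\xi))$. From~\eqref{eq:R-sufficient-new}, $(Q+\lambda_{\max}(R)a)/Q\le 1+\xi/2$. Plugging both into~\eqref{eq:gamma} gives
\[
\gamma\le \Bigl(1+\tfrac{\xi}{2}\Bigr)\tfrac{2(1+\xi)}{2+\xi}=1+\xi .
\]
Trivially $\gamma\ge 1$, since both factors in~\eqref{eq:gamma} are at least $1$. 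Rewriting~\eqref{eq:m2} as $m_2^2=\gamma^2/(N-1)+\gamma-1$, this yields $m_2^2\ge 0$ and
\[
m_2^2\le (1+\xi)^2/(N-1)+\xi .
\]

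\textbf{Step 2 (identifying $q_3^{\max},q_3^{\min}$).} With $c_{\mathrm{MPC}}=2c_3$, the first entry of $q_2$ is $4\gamma c_{\mathrm{MPC}}/c_3=8\gamma$. Since $\gamma\ge1$, we have $8\gamma\ge 2\gamma\ge 1/(2\gamma)$, so $q_2=8\gamma$. Among $\{1+\tfrac1{4\gamma},\,8\gamma,\,\tfrac1{2\gamma},\,1\}$, again using $\gamma\ge 1$, the maximum is $8\gamma$ and the minimum is $1/(2\gamma)\le \tfrac12$. Thus $q_3^{\max}=8\gamma$ and $q_3^{\min}=1/(2\gamma)$.

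\textbf{Step 3 (reduction).} Dividing~\eqref{condition_ISS} by $Q$ and multiplying by $2\gamma>0$, the condition becomes
\[
16\gamma^2(\gamma-1)m_2^2+\frac{16\gamma^4}{N-1}<1 .
\]
The left side is monotone increasing in $\gamma-1\ge0$ and in $m_2^2\ge0$, so it suffices to check it with the bounds of Step~1. Substituting $\gamma\le 1+\xi$, $\gamma-1\le\xi$ and the bound on $m_2^2$, then dividing by $16(1+\xi)^2$, it suffices that
\[
\xi^2+\frac{\xi(1+\xi)^2+(1+\xi)^2}{N-1}=\xi^2+\frac{(1+\xi)^3}{N-1}<\frac{1}{16(1+\xi)^2}.
\]
This is exactly~\eqref{eq:N-lb-new}, provided the denominator $\tfrac{1}{16(1+\xi)^2}-\xi^2$ is positive. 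That positivity is equivalent to $\xi(1+\xi)<\tfrac14$, i.e.\ $\xi<(\sqrt2-1)/2$, which is the second restriction on $\xi$. The restriction $\xi<1/(1+\bar E_w)^2$ is not needed for~\eqref{condition_ISS} itself; I would only note that it keeps the choice compatible with the rest of Theorem~\ref{stabilitytheorem}.

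The main obstacle is bookkeeping rather than any conceptual step. One must get the exact algebraic identity $\gamma\le1+\xi$ from Step~1, since any slack there would break the exact match with~\eqref{eq:N-lb-new}. One must also justify the monotonicity substitution in Step~3, which relies on every coefficient being nonnegative; this in turn needs $\gamma\ge1$.
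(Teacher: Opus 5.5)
Your proposal is correct and follows the paper's proof essentially step for step. It uses the same bounds $1-p^2\ge (2+\xi)/(2(1+\xi))$ and $1\le\gamma\le 1+\xi$, the same identification $q_3^{\max}=8\gamma$, $q_3^{\min}=1/(2\gamma)$, and the same final reduction to $\xi^2+(1+\xi)^3/(N-1)<1/(16(1+\xi)^2)$. The only cosmetic difference is that you multiply through by $2\gamma$ instead of dividing by $q_3^{\max}$ and using $q_3^{\min}/q_3^{\max}=1/(16\gamma^2)$. You also make two points explicit that the paper's argument leaves implicit: the denominator in \eqref{eq:N-lb-new} is positive precisely when $\xi<(\sqrt{2}-1)/2$, and the hypothesis $\xi<1/(1+\bar{E}_w)^2$ is never used.
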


% \begin{corollary} Assume that $N\ge 2$ and Assumptions~\ref{assump1}--\ref{assump2} are satisfied. Using the same notation as Theorem \ref{stabilitytheorem}. Fix any $\xi$ with $0<\xi<\frac{1}{(1+\bar{E}_w)^2}$ and assign $c_{\text{MPC}}=2c_3$. Suppose the following hold:
% \begin{equation}\label{eq:umax-sufficient}
% u_{\max}\;\ge\;C\;+\;\frac{1}{2\,\sigma_{\min}(\mathcal{R}_*)}\,
% \ln\!\left(
% \tfrac{1+\frac{1}{(1+\bar{E}_w)^2}}{\frac{1}{(1+\bar{E}_w)^2}-\xi}
% \right),
% \end{equation}
% \begin{equation}\label{eq:R-sufficient}
% \lambda_{\max}(R)\;\le\;\frac{\xi\,Q}{\;\| \mathcal{R}_*^\dagger A^\kappa \|^2}\,,
% \end{equation}
% \begin{equation}\label{eq:N-lb}
% N \ge \max\left\{
% 2,  
% 1+\tfrac{\gamma^2}{\Bigl(1+\tfrac{1}{(1+\bar{E}_w)^2}\Bigr)-\gamma}
% \right\}.
% \end{equation}
% Then, \eqref{condition_ISS} holds.

% \label{corollarysufficientcondition}
% \end{corollary}

\begin{remark}
    Corollary \ref{corollarysufficientcondition} says that if we penalize more on the states deviation compared with the inputs (i.e. small enough $\frac{\lambda_{\max}(R)}{Q}$) , and if the input constraint is large enough, then there exists long enough horizon that recovers the non-asymptotic bounds as in Theorem \ref{stabilitytheorem} after some time. Note that there are many ways in how to interpret Theorem \ref{stabilitytheorem}, and Corollary \ref{corollarysufficientcondition} is just one possibility.
\end{remark}

\section{Proof of Main Result}
In this section, we provide detailed proofs of Theorem \ref{stabilitytheorem} and all the auxiliary results needed. In Section \ref{estimationerrobound}, the non-asymptotic estimation error bound using RLS in (\ref{RLS}) is presented. Based on the error bound guarantee, we prove the stability of the closed-loop system in Section \ref{stabilitybound}. 

\subsection{Estimation Error Bound} \label{estimationerrobound}
In this section, the non-asymptotic estimation error bound for the RLS in (\ref{RLS}) is established. We first show that system (\ref{sys1}) with controller (\ref{controlall}) satisfies some excitation condition through the following lemma. 
 
\begin{lemma} 
Consider system~\eqref{sys1} under Assumption~\ref{assump1} and controller~\eqref{eq:controlpolicyperstep}. 
There exist constants $c_{\mathrm{PE}}>0$ and $p_{\mathrm{PE}}>0$ such that, 
for all $x \in \mathbb{R}^n$, $\zeta \in \mathcal{S}^{n+\kappa m-1}, s\in \{\mathrm{MPC},\mathrm{DB}\}$, and $\theta \in \mathbb{R}^{n\times (n+m)}$,
\[
\mathbb{P}\left(
  \Big| \zeta^\top[(x+w);(\pi_s(x+w,\theta)+v)] \Big|^2
  \geq c_{\mathrm{PE}}
\right) \geq p_{\mathrm{PE}},
\]
where $w \overset{d}{\sim} w_t$ and $v \overset{d}{\sim} v_t$.
\label{lemma_BMSB}
\end{lemma}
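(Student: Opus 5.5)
The plan is to split the unit direction $\zeta=(\zeta_x,\zeta_u)$, with $\zeta_x\in\mathbb{R}^n$, $\zeta_u\in\mathbb{R}^{\kappa m}$ and $|\zeta_x|^2+|\zeta_u|^2=1$. We then handle two regimes separately, depending on whether $|\zeta_u|\ge\varepsilon$ or $|\zeta_u|<\varepsilon$, for a threshold $\varepsilon\in(0,1)$ fixed later. In the first regime the excitation $v$ supplies the randomness; in the second the disturbance $w$ does. Two facts are used throughout. First, $v$ is independent of $w$. Second, the feedback is uniformly bounded: $\|\pi_s(\cdot,\theta)\|_\infty\le u_{\max}-C$ for both modes and all $\theta$, by \eqref{controllaw_sat} and the constraint in \eqref{MPCProblem}. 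Hence $|\pi_s|\le\sqrt{\kappa m}\,u_{\max}$, independently of $x$, $\theta$ and $s$. This uniformity is what makes the constants independent of $x,\theta,s$.

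\emph{Case $|\zeta_u|\ge\varepsilon$.} Condition on $w$. Then $a:=\zeta_x^\top(x+w)+\zeta_u^\top\pi_s(x+w,\theta)$ is a fixed scalar, and the quantity equals $|a+\zeta_u^\top v|^2$. The variable $Z=\zeta_u^\top v$ is a linear combination of independent $\mathrm{Uni}([-C,C])$ variables. Its density is bounded by $1/(2C\max_i|\zeta_{u,i}|)\le\sqrt{\kappa m}/(2C\varepsilon)$, since $\max_i|\zeta_{u,i}|\ge|\zeta_u|/\sqrt{\kappa m}$ and convolving with further uniforms cannot raise the supremum of a density. Therefore
\[
\mathbb{P}(|a+Z|<r)\le r\sqrt{\kappa m}/(C\varepsilon)
\]
uniformly in $a$. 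Choosing $r=C\varepsilon/(2\sqrt{\kappa m})$ gives probability at least $1/2$ that $|a+Z|^2\ge r^2$. Integrating over $w$ gives the claim in this regime.

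\emph{Case $|\zeta_u|<\varepsilon$.} Here $|\zeta_x|\ge\sqrt{1-\varepsilon^2}$, and the control contribution is uniformly small: $|\zeta_u^\top(\pi_s+v)|\le\varepsilon\sqrt{\kappa m}(u_{\max}+C)=:\varepsilon M$. Set $Y=\zeta_x^\top w$ and $b=\zeta_x^\top x$. I will use Paley--Zygmund on $(Y+b)^2$. The second moment satisfies $\mathbb{E}(Y+b)^2=\zeta_x^\top\Sigma_W\zeta_x+b^2\ge\underline\sigma_W^2(1-\varepsilon^2)+b^2$. For the fourth moment, sub-Gaussianity gives $\mathbb{E}Y^4\le K\sigma_W^4$, so $\mathbb{E}(Y+b)^4\le 8(K\sigma_W^4+b^4)$. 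The Paley--Zygmund ratio $(\underline\sigma_W^2/2+b^2)^2/(8(K\sigma_W^4+b^4))$ is bounded below by a positive constant $\rho$ uniformly in $b\in\mathbb{R}$. Indeed, it is continuous and positive, and it tends to $1/8$ as $|b|\to\infty$. Hence
\[
\mathbb{P}\bigl(|Y+b|\ge \tfrac12\underline\sigma_W/\sqrt2\bigr)\ge \tfrac34\rho .
\]
On this event, $|\zeta^\top[\cdot]|\ge\tfrac12\underline\sigma_W/\sqrt2-\varepsilon M$. I now fix $\varepsilon$ small enough that $\varepsilon M\le\underline\sigma_W/(4\sqrt2)$, and also $\varepsilon^2\le 1/2$, which was used above in $\underline\sigma_W^2(1-\varepsilon^2)\ge\underline\sigma_W^2/2$.

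Finally, take $c_{\mathrm{PE}}$ to be the minimum of the two squared thresholds and $p_{\mathrm{PE}}=\min\{1/2,\tfrac34\rho\}$. The main obstacle is the second case. The anticoncentration of $\zeta_x^\top(x+w)$ must hold uniformly over the arbitrary shift $b=\zeta_x^\top x$ and must survive the $w$-dependent feedback term; the uniform bound on $\pi_s$ combined with the small-$\varepsilon$ choice resolves this. A minor point to check is which disturbance the lemma's $w$ refers to in the subsampled setting. If it is $\mathcal{R}_\kappa(A,I)\bar w_\tau$, the covariance lower bound becomes $\underline\sigma_W^2\sigma_{\min}(\mathcal{R}_\kappa(A,I))^2$, which is positive because the $I$ block gives full row rank. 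The argument is otherwise unchanged.
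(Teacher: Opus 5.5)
Your argument is correct, and it yields explicit constants: $c_{\mathrm{PE}}=\min\{C^2\varepsilon^2/(4\kappa m),\,\underline\sigma_W^2/32\}$, with $p_{\mathrm{PE}}$ a fixed fraction of $\min\{1,\rho\}$. There is one arithmetic slip. The threshold $|Y+b|^2\ge\underline\sigma_W^2/8$ is at most $\tfrac14\,\mathbb{E}(Y+b)^2$, so Paley--Zygmund gives $(1-\tfrac14)^2\rho=\tfrac{9}{16}\rho$, not $\tfrac34\rho$. This only rescales $p_{\mathrm{PE}}$. Your closing caveat is also harmless. Since $\mathcal{R}_\kappa(A,I)$ contains an identity block, $\mathcal{R}_\kappa(A,I)\mathcal{R}_\kappa(A,I)^\top\succeq I$, so the aggregated disturbance still has covariance bounded below by $\underline\sigma_W^2 I$.

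The paper does not prove this lemma in the text; it defers to an external reference. Its sketch treats the regressor as a shifted sub-Gaussian vector driven jointly by $(\bar w_\tau,\bar v_\tau)$. Anti-concentration then comes from the sub-Gaussian tail bound combined with the variance lower bound $\underline\sigma_W^2$.

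Your route is structurally different. Instead of one joint variance/tail argument, you split the sphere according to $|\zeta_u|$ and use a different source of randomness in each regime:
\begin{itemize}
\item For input-heavy directions, you condition on $w$ and use only the bounded density of the uniform excitation. This is a small-ball estimate that never uses the variance or tails of $v$.
\item For state-heavy directions, you apply Paley--Zygmund to $\zeta_x^\top(x+w)$ alone, with the sub-Gaussian fourth moment playing the role of the tail bound. The whole input block is treated as a deterministic perturbation of size at most $\varepsilon\sqrt{\kappa m}(u_{\max}+C)$.
\end{itemize}

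The advantage is that you dispose explicitly of the one delicate feature. The ``shift'' $\pi_s(x+w,\theta)$ is a function of $w$ and could partially cancel $\zeta_x^\top w$, so the regressor is not literally a deterministic shift of $(w,v)$. In your first regime this shift is frozen by conditioning. In your second it is bounded uniformly in $x,\theta,s$ by the saturation and the MPC box constraint. A joint shifted-sub-Gaussian argument has to handle that correlation separately. In exchange, it avoids the auxiliary split parameter.

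Your constants are somewhat conservative as a result. In the input-heavy regime $c_{\mathrm{PE}}$ scales like $C^2\varepsilon^2/(\kappa m)$, with $\varepsilon\propto\underline\sigma_W/(\sqrt{\kappa m}(u_{\max}+C))$. Also, $\rho$ carries the absolute sub-Gaussian moment constant $K$.
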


\begin{remark}
    The proof can be found in \cite{lei2026learningbasedhomothetictubempc}. The main proof idea is to consider the closed-loop system as a shifted version of sub-Gaussian random variables, represented by $\bar{v}_\tau,\bar{w}_\tau$. Then, by leveraging the tail bound and the lower bound of the variance in Assumption \ref{assump1}, an anti-concentration inequality can be derived for the closed-loop system. Our proof is partially inspired by \cite{li2023non}, but does not require boundedness on either the state variables or the process noise. Also, as opposed to existing results, e.g. \cite{siriya2023stability}, we can retrieve explicit values of $c_{\mathrm{PE}},p_{\mathrm{PE}}$.
\end{remark}

To establish a non-asymptotic error bound for parameter identification for each timestep $t$, a few definitions are required as introduced below. The high probability process noise bound and state norm bound are defined as \cite{siriya2024non}
\[
\overline{w}(t,\delta) :=
\begin{cases}
\sigma_w \sqrt{2n \ln\left( \frac{n \pi^2 t^2}{3\delta} \right)}, & t \in \mathbb{N}, \\
0, & t = 0,
\end{cases}
\]
\[
\overline{x}(t,\delta, x_0) := |x_0|+\|B\| \, t \, u_{\max} + \sum_{k=0}^{t-1}\,\overline{w}(t,\delta)
\]
The Gramian upper bound is defined as 
\begin{equation}
    \beta_{\max}(t,x_0)=\sum_{i=1}^t \left[ \bar{x}^2(t,\delta,x_0) + u_{\max}^2 \right] + \psi
    \label{eq:betamax}
\end{equation}
with $\psi$ from (\ref{RLS}). Denote $\Phi(T,\delta,x_0)
\triangleq \frac{2}{(1-\ln 2)\,p_{\mathrm{PE}}}\!\Bigg[
\ln\!\frac{\pi^2}{2\delta}
+ 2\ln T + (n+m) \,\ln\!\Bigg(1+\frac{16}{c_{\mathrm{PE}}p_{\mathrm{PE}}(T-1)}
\sum_{i=1}^{T}\big(\bar{x}^2(i,\delta,x_0)+u_{\max}^2\big)\Bigg)
\Bigg]+1$, and the burn-in time is $T_{\mathrm{burn\text{-}in}}(\delta,x_0)
\triangleq \min\Big\{T\in\mathbb{N}\, \Big|\, T\ge \Phi(T,\delta,x_0)\Big\}.$ We can conclude $T_{\text{burn-in}}(\delta,x_0) < \infty$ because $\Phi(T,\delta,x_0)=O(\ln(T))$ which grows slower than $O(T)$.
% Define the burn-in time as $T_{\mathrm{burn\text{-}in}}(\delta,x_0)\triangleq\inf\Big\{T\in\mathbb{N}\ \big|\ t\ge \tfrac{2}{(1-\ln2)p_{\mathrm{PE}}}\big[ \ln\tfrac{\pi^2(t-T+1)^2}{2\delta}+(n+m)\ln\!\big(1+\tfrac{16\sum_{i=1}^t \left[ \bar{x}^2(t,\delta, x_0) + u_{\max}^2 \right]}{c_{\mathrm{PE}}p_{\mathrm{PE}}(t-1)}\big)\big]+1,\ \forall t\ge T \Big\}$.

Using the above definitions, we have the following results.
\begin{proposition} (\cite{siriya2024non})
    For system (\ref{sys1}) with Assumptions \ref{assump1}-\ref{assump2}, under the RLS (\ref{RLS}) and controller (\ref{eq:controlpolicyperstep}), then for any $\delta \in (0,1)$ and initial state $x_0 \in \mathbb{R}^n$, $T_{\text{burn-in}}(\delta,x_0) < \infty$. Besides, we have 
    \[
        \mathbb{P}\left( \lVert \hat{\theta}_t - \theta^*  \rVert \leq \epsilon_t, \forall t \geq T_{\text{burn-in}}(\delta, x_0) \right) \geq 1-\delta
    \]
    where the error bound is defined as 
    \begin{equation}
    \begin{split}
        \epsilon_t &:= \epsilon(t,\delta,x_0) = \frac{\psi^{1/2} \lVert \theta^* \rVert_F}{\sqrt{\frac{c_{\mathrm{PE}}p_{\mathrm{PE}}}{4}(t -1)}+\psi} \\
        & + \frac{\sigma_W \sqrt{2n\ln(\tfrac{3n}{\delta})+\tfrac{n+m}{2}\ln(\tfrac{\beta_{\max}(t,x_0)}{\psi})} }{\sqrt{\frac{c_{\mathrm{PE}}p_{\mathrm{PE}}}{4}(t-1)}+\psi}
    \end{split}
    \end{equation}
    with $\psi$ from (\ref{RLS}), $\beta_{\max}$ from \eqref{eq:betamax} and $c_{\mathrm{PE}},p_{\mathrm{PE}}$ from Lemma \ref{lemma_BMSB}.

    \label{non-asymp bound}
\end{proposition}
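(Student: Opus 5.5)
The statement is Proposition~\ref{non-asymp bound}, which is quoted from \cite{siriya2024non}. The proof therefore amounts to checking that the closed loop of Algorithm~\ref{algoall} satisfies the hypotheses of the finite-time RLS bound there, and then instantiating the constants.

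\textbf{Step 1: excitation.} The analysis in \cite{siriya2024non} needs a conditional block-martingale small-ball condition on the regressor $z_t=[x_t;u_t]$. Lemma~\ref{lemma_BMSB} provides this uniformly over the state, the mode $s$ and the estimate $\theta$. Let $\mathcal{F}_t$ be the filtration generated by the noise and excitation up to time $t$. Conditioned on $\mathcal{F}_{t-1}$, the pair $(x_t,u_t)$ has the structure in the lemma: a shift of the fresh noise $w$ together with the fresh excitation $v$. The controller's dependence on the $\mathcal{F}_{t-1}$-measurable quantities $\hat\theta$ and $s$ is absorbed by the uniformity of the lemma in $\theta$ and $s$. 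Hence
\[
\mathbb{P}\bigl(|\zeta^\top z_t|^2\ge c_{\mathrm{PE}}\mid\mathcal{F}_{t-1}\bigr)\ge p_{\mathrm{PE}}
\]
for every unit vector $\zeta$. This step needs care with the $\kappa$-step sub-sampling. Within a block, $v$ is shared, but its coordinates $S_i\bar v_\tau$ are independent, so the conditional small-ball bound still holds per step.

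\textbf{Step 2: Gramian lower bound.} Apply the covering/union-bound argument from \cite{siriya2024non}. The small-ball condition with a Chernoff bound on a Bernoulli($p_{\mathrm{PE}}$) sum gives $\lambda_{\min}\bigl(\sum_s z_sz_s^\top\bigr)\gtrsim \tfrac{c_{\mathrm{PE}}p_{\mathrm{PE}}}{4}(t-1)$ with high probability. The net is over $\mathcal{S}^{n+m-1}$, scaled by the upper bound on the Gramian, which is where the $(n+m)\ln(\cdot)$ term in $\Phi$ comes from. The factor $\pi^2/(2\delta)$ together with $2\ln T$ comes from a union bound over $t$ with weights $\delta/t^2$. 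The burn-in time $T_{\text{burn-in}}$ is exactly the first $T$ at which this bound becomes valid. Finiteness follows because $\Phi=O(\ln T)$: by construction $\bar x(i,\delta,x_0)$ grows only polynomially in $i$, since it uses $|u_t|\le u_{\max}$ and $\|A\|\le1$ from Assumption~\ref{assump2}.

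\textbf{Step 3: Gramian upper bound and self-normalized noise.} On the event $|w_t|\le\bar w(t,\delta)$ for all $t$, which holds with the stated probability by a sub-Gaussian tail and a union bound, we have $\|A\|\le1$. Then $|x_t|\le\bar x(t,\delta,x_0)$, which gives $\beta_{\max}$. The self-normalized martingale bound (Abbasi-Yadkori type) controls the noise term, giving the $\sigma_W\sqrt{2n\ln(3n/\delta)+\tfrac{n+m}{2}\ln(\beta_{\max}/\psi)}$ term. The regularization bias $\psi^{1/2}\|\theta^*\|_F$ appears from the RLS normal equations.

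\textbf{Step 4: combination and main obstacle.} Divide by $\sqrt{\lambda_{\min}}$ and allocate $\delta$ across the three events; the stated probability $1-\delta$ follows with suitably split confidence levels. The main obstacle is Step 1: ensuring that the adaptive, switching, state-dependent policy does not break the conditional small-ball property. This is exactly why Lemma~\ref{lemma_BMSB} is stated uniformly in $(x,\theta,s)$.
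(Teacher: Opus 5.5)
Your route is the paper's. The paper proves this proposition in one sentence, by combining Lemma~\ref{lemma_BMSB} with Corollary~1 and Example~2 of \cite{siriya2024non}, and your Steps~2--4 unpack that cited result.

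The problem is the justification in Step~1, which is exactly where you locate the main obstacle. You claim that $\hat\theta$ and $s$ enter the controller as $\mathcal{F}_{t-1}$-measurable quantities, so that the uniformity of Lemma~\ref{lemma_BMSB} in $(\theta,s)$ absorbs them. This is false at the first step of every block:
\begin{itemize}
\item At $t=\kappa\tau$ the input uses $\hat\theta_{\kappa\tau}$. By \eqref{RLS} and Algorithm~\ref{algoall}, this estimate already contains the regression term for $x_{\kappa\tau}$.
\item It also uses the mode $s_{\tau+1}$, which is decided by $Y_{\mathrm{MPC}}(\hat\theta_{\kappa\tau},\bar x_\tau)$ and $Y_{\mathrm{DB}}(\hat\theta_{\kappa\tau},\bar x_\tau)$ with $\bar x_\tau=x_{\kappa\tau}$.
\end{itemize}
Both therefore depend on the same fresh noise $w_{\kappa\tau-1}$ that supplies the randomness of $x_t$. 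The feedforward term is really $y\mapsto\pi_{s(y)}(y,\vartheta(y))$, with $s(\cdot)$ and $\vartheta(\cdot)$ measurable functions of $y$ and the past data. A bound that holds for each fixed $(\theta,s)$ does not transfer to this composite map. Intersecting the two fixed-mode events only gives probability $2p_{\mathrm{PE}}-1$, and in $\theta$ there is a continuum of values.

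The conclusion is still reachable, but Step~1 needs a stronger excitation statement. The small-ball bound for $\bigl[x+w;\,\mu(x+w)+v\bigr]$ must hold uniformly over all measurable maps $\mu$ with $\|\mu(\cdot)\|_\infty\le u_{\max}-C$. An anti-concentration argument of the kind described after Lemma~\ref{lemma_BMSB} gives exactly this, because it only uses three facts:
\begin{itemize}
\item the feedforward part is bounded;
\item conditionally on $w$, $v$ anti-concentrates, which handles directions with a non-negligible input component;
\item $w$ anti-concentrates up to a bounded $w$-dependent perturbation, which handles the remaining directions.
\end{itemize}
The same strengthened form also covers the intra-block steps $t=\kappa\tau+i$, $i\ge1$. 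There your measurability claim and your coordinate-independence remark are correct. However, the input is an $\mathcal{F}_{t-1}$-measurable vector plus fresh coordinates of $\bar v_\tau$, which is not literally of the form $\pi_s(x+w,\theta)+v$ in Lemma~\ref{lemma_BMSB} either. The paper's one-line proof relies on this same extension without stating it, so your Step~1 should make it explicit.

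A smaller point concerns Step~4. Dividing by $\sqrt{\lambda_{\min}\bigl(\sum_s z_sz_s^\top+\psi I\bigr)}$ gives the denominator $\sqrt{\tfrac{c_{\mathrm{PE}}p_{\mathrm{PE}}}{4}(t-1)+\psi}$. The displayed denominator $\sqrt{\tfrac{c_{\mathrm{PE}}p_{\mathrm{PE}}}{4}(t-1)}+\psi$ is larger once $\psi\ge1$, so your derivation does not literally produce the stated constant.
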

\begin{proof}
    Proposition \ref{non-asymp bound} can be proved directly using Lemma \ref{lemma_BMSB}, Corollary 1 and Example 2 of \cite{siriya2024non}. 
\end{proof}    

% \begin{remark}
%     There are two ways to appreciate the non-asymptotic guarantees in Proposition \ref{non-asymp bound}. Firstly, for any given $\delta, x_0$, the error bound decreases monotonically with the sample size $t$ at a rate of $O(\frac{1}{\sqrt{t}})$. On the other hand, for any given error bound level $e$ and $x_0$, the probability that the error bound holds increases monotonically with the sample size $t$.
% \end{remark}

\subsection{Stability Bound} \label{stabilitybound}

In this section, using the non-asymptotic error bound in Section \ref{estimationerrobound}, we prove that the closed-loop system is practically stable as in Theorem \ref{stabilitytheorem}. There are two key steps: 1) we prove that as long as the prediction horizon is long enough, the disturbance-free version of (\ref{sys1}) can be stabilized by solving the MPC problem (\ref{MPCProblem}) without terminal constraints (Lemma \ref{MPCwithoutterminal}); 2) Secondly, we show the existence of a Convex-Decay Stochastic ISS Function (defined later) for the closed-loop system using a perturbation analysis on the optimal value function of the MPC problem (\ref{MPCProblem}) ; 3) Lastly, leveraging an existing result from \cite{siriya2025framework}, we show that the existence of a Convex-Decay Stochastic ISS Function leads to the bound in Theorem \ref{stabilitybound}.

We now give two definitions related to stability. The first one is the RPI set, describing the region where the system states remain. Building on this, a Convex-Decay Stochastic ISS Function is defined under the influence of noises and estimation error.
\begin{definition}
    The set \(\mathcal{X}_{\text{RPI}} \subseteq \mathbb{X}\) is said to be a \emph{robustly positive invariant (RPI)} set with error constant \(\bar{\theta} > 0\) for the system (\ref{sys1_subsampled}) under the influence of the policy \(\alpha\) if, for all \(x \in \mathcal{X}_{\text{RPI}}\), \(\bar{w} \in \mathbb{W}^\kappa\), \(\bar{v} \in \mathbb{V}^\kappa\), and $\hat{\theta}$ satisfying \(\|\hat{\theta}-\theta^*\|\leq \bar{\theta}\), it holds that:
\[
A^\kappa x + \mathcal{R}_* \alpha(x, \bar{v}, \hat{\theta})+ \mathcal{R}(A,I)\bar{w}\in \mathcal{X}_{\text{RPI}}.
\]
For convenience, we say that system (\ref{sys1}) with policy \(\alpha\) is \((\mathcal{X}_{\text{RPI}}, \bar{\theta}, w, v)\)-RPI.

\end{definition}
\begin{definition} (Convex-Decay Stochastic ISS Function)
    A Borel measurable function \(V : \mathcal{X}_{\text{RPI}} \to \mathbb{R}_{\geq 0}\), defined over the domain \(\mathcal{X}_{\text{RPI}} \subseteq \mathbb{X}\), is called a Convex-Decay Stochastic ISS Function with an estimation error \(\tilde{\theta}\) for the closed-loop dynamics $g:=g(x, w, \alpha(x, v, \tilde{\theta} + \theta_{*}))$ under the influence of the policy $\alpha(x, v, \tilde{\theta} + \theta_{*})$ and the stochastic noise \(w, v\), if:

1. System (\ref{sys1}) with policy \(\alpha\) is \((\mathcal{X}_{\text{RPI}}, \bar{\theta}, w, v)\)-RPI.

2. There exist class-\(\mathcal{K}_\infty\) functions \(\alpha_1, \alpha_2, \alpha_3\), class-$\mathcal{K}$ function \(\sigma_3\), a constant \(\tilde{d} \geq 0\) and $\bar{\theta}>0$, such that the following conditions hold for all \(x \in \mathcal{X}_{\text{RPI}}\) and $\|\tilde{\theta}\| \leq \bar{\theta}$:

\[
\alpha_1(|x|) \leq V(x) \leq \alpha_2(|x|),
\]
\[
\mathbb{E}[V(g) ] - V(x) \leq -\alpha_3(|x|) + \tilde{d} + \sigma_3(|\tilde{\theta}|),
\]
where \(\alpha_3 \circ \alpha_2^{-1}\) is lower-bounded by some convex function in \(\mathcal{K}_\infty\) for all $x \in \mathcal{X}_{\text{RPI}}$.
\label{SLF}
\end{definition}

Before proceeding to the analysis of the closed-loop system \ref{sys1}, two intermediate lemmas are derived. Lemma \ref{lem:lyasat} says that for known 1-step reachable linear system, the same finite horizon cost as in \eqref{MPCProblem} evaluated at the closed-loop trajectory under the saturated deadbeat controller serves as a Lyapunov function. Similarly, Lemma \ref{MPCwithoutterminal} says that for known 1-step reachable linear system under the MPC control policy, the finite-horizon optimal cost function serves as a Lyapunov function of the closed-loop system if the prediction horizon is long enough. This is enabled by leveraging the continuity, detectability, and controllability of the cost function. 
% Unlike widely used quadratic costs, a special cost in exponential form is implemented, which results in a uniform lower bound for the prediction horizon that guarantees stability of the closed-loop system. 

\begin{lemma}
    For system $\bar{x}_{\tau+1}=A^\kappa \bar{x}_\tau + \mathcal{R}_* \bar{u}_\tau$ under Assumption \ref{assump1} and the control policy $\pi_{\mathrm{DB}}(x,\theta^*,C)$ for any $u_{\max}-C>0$ and $x \in \mathbb{R}^n$, if $N > 1 + \frac{\big(Q+\lambda_{\max}(R)\,\|\mathcal{R}_*^\dagger A^\kappa\|^2\big)^2}{(1-p^2)^2\,Q^2} $, with $p$ being defined in \eqref{eq:p}, then the cost function 
    \begin{equation}
\begin{split}
    Y_{\mathrm{DB}}(x) &= \sum_{i=0}^{N-1}\ell(\hat{x}_i,\pi_i) + \ell_f(\hat{x}_N),\\
\text{s.t.} \quad & \hat{x}_0 = x, \\
& \hat{x}_{i+1}=A^\kappa \hat{x}_i + \mathcal{R}_* \pi_{\mathrm{DB}}(\hat{x}_i,\theta_*,C) \\
\end{split} \label{eq:optimalcostsat}
\end{equation}
    is a Lyapunov function of the closed-loop system.
    \label{lem:lyasat}
\end{lemma}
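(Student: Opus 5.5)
The plan is a direct, trajectory-based argument: the saturated deadbeat law contracts the exponential stage cost geometrically along the nominal closed loop. Sandwich bounds and a one-step decrease of $Y_{\mathrm{DB}}$ then follow by telescoping. Throughout, write $r:=u_{\max}-C>0$, $\sigma_*:=\sigma_{\min}(\mathcal{R}_*)$, $K:=\mathcal{R}_*^\dagger A^\kappa$ and $z:=-Kx$. Reachability in Assumption~\ref{assump2} gives $\mathcal{R}_*\mathcal{R}_*^\dagger=I$, $\|\mathcal{R}_*^\dagger\|=1/\sigma_*$ and $\|A\|\le1$. The input also satisfies the constraint, since $\|\operatorname{sat}_r(z)\|_\infty\le|\operatorname{sat}_r(z)|\le r$.

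\textbf{Step 1 (one-step contraction).} Let $x^+:=A^\kappa x+\mathcal{R}_*\pi_{\mathrm{DB}}(x,\theta^*,C)$.
\begin{itemize}
\item If $|z|\le r$, then $x^+=A^\kappa x-\mathcal{R}_*\mathcal{R}_*^\dagger A^\kappa x=0$.
\item Otherwise $x^+=(1-r/|z|)A^\kappa x$. Since $|z|\le|A^\kappa x|/\sigma_*$, this gives $|x^+|\le|A^\kappa x|-r\sigma_*\le|x|-r\sigma_*$.
\end{itemize}
In both cases $e^{|x^+|}\le\max\{1,\,p\,e^{|x|}\}$. When $pe^{|x|}\ge1$, the inequality $pe^{|x|}-1\le p(e^{|x|}-1)$ holds because $p\le1$. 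Hence $\sigma(x^+)\le p^2\sigma(x)$, and along the closed loop $\sigma(\hat x_i)\le p^{2i}\sigma(x)$.

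\textbf{Step 2 (stage-cost domination and sandwich bounds).} The input satisfies $|\pi_{\mathrm{DB}}(x)|\le|Kx|\le\|K\||x|$. Using $|x|^2\le(e^{|x|}-1)^2=\sigma(x)$, we get
\[
\ell(x,\pi_{\mathrm{DB}}(x))\le(Q+\lambda_{\max}(R)\|K\|^2)\,\sigma(x).
\]
Summing with the geometric decay from Step~1 yields
\[
Q\sigma(x)\le Y_{\mathrm{DB}}(x)\le\frac{Q+\lambda_{\max}(R)\|K\|^2}{1-p^2}\,\sigma(x)=\gamma Q\,\sigma(x).
\]
Since $\sigma(x)=(e^{|x|}-1)^2$ is a $\mathcal{K}_\infty$ function of $|x|$, these are the required $\mathcal{K}_\infty$ sandwich bounds.

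\textbf{Step 3 (decrease via shifted trajectory).} Because $\pi_{\mathrm{DB}}$ is a time-invariant state feedback, the trajectory from $x^+=\hat x_1$ is the original trajectory shifted by one step. Telescoping gives
\[
Y_{\mathrm{DB}}(x^+)-Y_{\mathrm{DB}}(x)=-\ell(\hat x_0,\pi_0)+\ell(\hat x_N,\pi_N)+Q\sigma(\hat x_{N+1})-Q\sigma(\hat x_N).
\]
Bound $-\ell(\hat x_0,\pi_0)\le-Q\sigma(x)$, $\ell(\hat x_N,\pi_N)\le(Q+\lambda_{\max}(R)\|K\|^2)\sigma(\hat x_N)$ and $\sigma(\hat x_{N+1})\le p^2\sigma(\hat x_N)\le\sigma(\hat x_N)$. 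This gives
\[
Y_{\mathrm{DB}}(x^+)-Y_{\mathrm{DB}}(x)\le-\bigl(Q-(Q+\lambda_{\max}(R)\|K\|^2)p^{2N}\bigr)\sigma(x)=-Q\bigl(1-\gamma(1-p^2)p^{2N}\bigr)\sigma(x).
\]

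\textbf{Step 4 (horizon condition; the delicate step).} It remains to turn the stated bound $N>1+\gamma^2$ into strict positivity of the decrease rate. I would use Bernoulli's inequality, $p^{-2N}\ge1+N(p^{-2}-1)\ge1+N(1-p^2)$. This gives
\[
\gamma(1-p^2)p^{2N}\le\frac{\gamma(1-p^2)}{1+N(1-p^2)}<\frac{\gamma}{N}.
\]
Since $\gamma\ge1$, the hypothesis gives $N>1+\gamma^2>\gamma$, so $\gamma/N<1$ and the decrease rate is a positive multiple of $\sigma(x)$, i.e.\ $-\alpha_3(|x|)$ with $\alpha_3\in\mathcal{K}_\infty$.

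The main obstacle is matching exactly the paper's form of the horizon condition, which is phrased through $\gamma^2/(N-1)$ as in relaxed dynamic programming; the slack here suggests the bound above is sufficient. If the intended Lyapunov decrease must instead be relative to $Y_{\mathrm{DB}}$ itself, Step~2 converts $\sigma(x)\ge Y_{\mathrm{DB}}(x)/(\gamma Q)$. A minor point to verify is that the lemma's reference to Assumption~\ref{assump1} should be Assumption~\ref{assump2}, which is what supplies $\|A\|\le1$ and full-row-rank $\mathcal{R}_*$.
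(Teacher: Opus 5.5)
Your proof is correct, but it reaches the decrease condition by a different route than the paper.

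The contraction $\sigma(x^+)\le p^2\sigma(x)$ and the sandwich bound $Q\sigma(x)\le Y_{\mathrm{DB}}(x)\le\gamma Q\sigma(x)$ match the paper's. Your explicit saturated-case formula $x^+=(1-r/|z|)A^\kappa x$ is actually a cleaner justification of the paper's first inequality in that chain.

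For the decrease, the paper imports the relaxed dynamic programming argument of Gr\"une (Variant 1):
\begin{itemize}
\item it uses the sandwich bound to find an index $q\in\{1,\dots,N-1\}$ with $\ell_q\le\frac{\gamma}{N}Q\sigma(\bar x_0)$;
\item it splits $Y_{\mathrm{DB}}(\bar x_1)$ at $q$ and bounds the tail through $Y_{\mathrm{DB}}(\bar x_q)\le\gamma Q\sigma(\bar x_q)$;
\item it arrives at $-Q\sigma+\frac{\gamma^2}{N-1}Q\sigma$, which is where the hypothesis $N>1+\gamma^2$ comes from.
\end{itemize}

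You instead use that $Y_{\mathrm{DB}}$ is the realized cost of a time-invariant feedback. The trajectory from $x^+$ is then the one-step shift, and the difference telescopes exactly to $-\ell_0+\ell_N+Q\sigma(\hat x_{N+1})-Q\sigma(\hat x_N)$. The geometric contraction then controls the single appended term.

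What your route buys:
\begin{itemize}
\item It is more elementary and avoids the index bookkeeping of the $q$-argument. For example, comparing $\ell_{N+1}$ with $Y_{\mathrm{DB}}(\bar x_q)$ needs separate treatment when $q=1$, since then the last term of $Y_{\mathrm{DB}}(\bar x_1)$ is only the terminal cost.
\item It gives a sharper rate. Your bound yields strict decrease as soon as $(Q+\lambda_{\max}(R)\|\mathcal{R}_*^\dagger A^\kappa\|^2)p^{2N}<Q$, which holds beyond an explicit logarithmic threshold in $N$. Your Bernoulli step shows $N>\gamma$ already suffices.
\item It can replace the paper's bound downstream. Since $\gamma/N\le\gamma^2/(N-1)$ for $\gamma\ge1$, your estimate implies the paper's form $-Q\sigma+\frac{\gamma^2}{N-1}Q\sigma$, so it can be substituted directly where Lemma~\ref{DB_CE} reuses this decrease.
\end{itemize}

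What the paper's route buys is uniformity. The same template, and the same constant $\gamma^2/(N-1)$ that enters $m_2^2$ and condition~\eqref{condition_ISS}, is used for the MPC value function in Lemma~\ref{MPCwithoutterminal}. There the re-optimized plan from the successor state is not the tail of the previous plan, so your exact telescoping would not apply verbatim.

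Your remark that the statement should cite Assumption~\ref{assump2} rather than Assumption~\ref{assump1} is also right. The paper's own proof invokes $\|A\|\le1$ from Assumption~\ref{assump2}.
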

\begin{proof}
    Firstly, we establish the sandwich bounds for $Y_{\mathrm{DB}}(x)$. Define the following closed-loop evolution $\bar{x}_{\tau+1}=g^{\mathrm{sat}}(\bar{x}_\tau) :=  A^\kappa \bar{x}_\tau + \mathcal{R}_* \text{sat}_{u_{\max}-C}(-\mathcal{R}_*^\dagger A^\kappa \bar{x}_\tau)$. From the properties of the saturated deadbeat controller, we have
    \begin{equation}
    \label{temp2}
    \begin{aligned}
    &e^{|g^{\mathrm{sat}}(\bar{x}_\tau)|} - 1 \\
    &\leq \max \big\{0, \exp\big(\|A\|^\kappa |\bar{x}_\tau| - \sigma_{\min}(\mathcal{R}_*)(u_{\max}-C)\big) - 1 \big\} \\
    &\leq \max \big\{0, \exp\big(|\bar{x}_\tau| - \sigma_{\min}(\mathcal{R}_*)(u_{\max}-C)\big) - 1 \big\} \\
    & \leq \exp\big(|\bar{x}_\tau| - \sigma_{\min}(\mathcal{R}_*)(u_{\max}-C)\big) - 1\\
    &\leq \exp\big(-\sigma_{\min}(\mathcal{R}_*)(u_{\max}-C)\big) \,\big(e^{|\bar{x}_\tau|}-1\big) \\
    &=: p\,\big(e^{|\bar{x}_\tau|} - 1\big),
    \end{aligned}
    \end{equation}
    where $0 < p < 1$. The first inequality is from the consideration of two situations: when the deadbeat controller is not saturated, the system reaches $0$ in the next step; otherwise, we have the term $\|A\|^\kappa\,|\bar{x}_\tau| - \sigma_{\min}(\mathcal{R}_*)(u_{\max}-C)>0$. The second inequality is because $\|A\|\leq 1$ in Assumption \ref{assump2}. The last inequality follows from asic algebraic manipulations. We also have the following bound:
    \begin{equation}
    \begin{aligned}
        &|\pi^\intercal_{\mathrm{DB}}(\bar{x}_\tau,\theta^*,C) \,R\, \pi_{\mathrm{DB}}(\bar{x}_\tau,\theta^*,C)| \\
        & \quad \leq \lambda_{\max}(R) \, |\pi_{\mathrm{DB}}(\bar{x}_\tau,\theta^*,C)|^2 \\
        &  \quad \leq \lambda_{\max}(R) \, \max \{|\mathcal{R}_*^\dagger A^\kappa  \bar{x}_\tau|^2, (u_{\max}-C)^2 \} \\
        & \quad \leq \lambda_{\max}(R) \,|\mathcal{R}_*^\dagger A^\kappa  \bar{x}_\tau|^2 \\
        &  \quad \leq \lambda_{\max}(R) \|\mathcal{R}_*^\dagger A^\kappa\|^2  |\bar{x}_\tau|^2 \\
        &  \quad \leq \lambda_{\max}(R) \|\mathcal{R}_*^\dagger A^\kappa\|^2  \sigma(\bar{x}_\tau)
    \end{aligned}
    \label{eq:piDBbound}
    \end{equation}
    where the second inequality considers the cases when the deadbeat controller is saturated and unsaturated respectively, and the third inequality follows from the saturation definition of $\pi_\mathrm{DB}(\bar{x}_\tau,\theta^*,C)$. The last inequality is because $\sigma(x) \geq |x|^2$.
    
    For $\bar{x}_\tau := (g^{\mathrm{sat}})^{\tau}(\bar{x}_0)$ which is the closed-loop state obtained after $\tau$ successive applications of $g^{\mathrm{sat}}$, starting from $\bar{x}_0$, we have
    \begin{equation}
    \sigma(g^{\mathrm{sat}}(\bar{x}_\tau)) \le p^2 \sigma(\bar{x}_\tau), \ 
    \sigma(\bar{x}_\tau) \le p^{2\tau} \sigma(\bar{x}_0).
    \label{eq:sigma_decreasebound}
    \end{equation}
    Combining \eqref{eq:piDBbound}\eqref{eq:sigma_decreasebound} and the definitions of the stage and terminal costs, we have
    \begin{equation}
    \label{temp3}
    \begin{aligned}
    & \ell(\bar{x}_\tau, \pi_{\mathrm{DB}}(\bar{x}_\tau, \theta^*,C)) \\
    & \quad \overset{\eqref{eq:piDBbound}}{\le} Q \sigma(\bar{x}_\tau) + \lambda_{\max}(R)\|\mathcal{R}_*^\dagger A^\kappa\|^2 \sigma(\bar{x}_\tau) \\
    & \quad \overset{\eqref{eq:sigma_decreasebound}}{\le} \big(Q + \lambda_{\max}(R)\|\mathcal{R}_*^\dagger A^\kappa\|^2\big) p^{2\tau} \sigma(\bar{x}_0), \\
    &\ell_f(x_N) = \ell(x_N, \mathbf{0}_{\kappa m}) \\
    & \quad \le \big(Q + \lambda_{\max}(R)\|\mathcal{R}_*^\dagger A^\kappa\|^2\big) p^{2N} \sigma(\bar{x}_0).
    \end{aligned}
    \end{equation}
    Taking the sum from $\tau=0$ to infinity of the above geometric sequence, together with the positive definiteness of the stage costs, the following holds for any $N \in \mathbb{N}$:
    \begin{equation}
    \begin{aligned}
        Q\sigma(x) &\leq Y_{\mathrm{DB}}(x)\leq \tfrac{(Q + \lambda_{\max}(R)\| \mathcal{R}_*^\dagger A^\kappa \|^2) \, \sigma(x) }{1-p^2}. 
    \end{aligned}
    \label{eq:sandboundVsat}
    \end{equation}
 
    The decrease condition is established below, following the same flow as Variant 1 of \cite{grune2012nmpc}. For the closed-loop evolution of the system $\bar{x}_\tau$ defined above, abbreviate the stage costs as $\ell_\tau:=\ell(\bar{x}_\tau,\pi_{\mathrm{DB}}(\bar{x}_\tau,\theta^*,C))$. This implies that $Y_{\mathrm{DB}}(\bar{x}_0) \geq  \sum_{t=0}^{N-1}\ell_\tau$. Denoting $\gamma:= \tfrac{(Q + \lambda_{\max}(R)\| \mathcal{R}_*^\dagger A^\kappa \|^2)}{Q(1-p^2)}$ as in Theorem \ref{stabilitytheorem}, then the upper bound (\ref{eq:sandboundVsat}) implies that $Y_{\mathrm{DB}}(\bar{x}_0) \leq \gamma Q \sigma(\bar{x}_0)$ and thus $\ell_q \leq \frac{\gamma}{N} Q\sigma(\bar{x}_0)$ for at least one $q \in \{ 0,...,N-1\}$. Since $\gamma > 1$, the condition in the premise of this lemma $N > 1+\gamma^2 \;\Rightarrow\; N > \gamma$, so $\ell_q \leq \frac{\gamma}{N} Q\sigma(\bar{x}_0)<Q\sigma(\bar{x}_0)$ which does not hold for $q=0$. Therefore, $q \geq 1$. On the other hand, we have $Q\sigma(\bar{x}_q) \leq \ell_q$ due to positive-definiteness, so
    \begin{equation}
        \sigma(\bar{x}_q) \leq \frac{\gamma}{N} Q\sigma(\bar{x}_0).
        \label{eq:temp7}
    \end{equation}
    Combining with (\ref{eq:sandboundVsat}), we also have 
    \begin{equation}
    \begin{aligned}
        \sum_{t=0}^{N-q}\ell_{q+t} + \ell_{N+1} & \leq Y_{\mathrm{DB}}(\bar{x}_q) \leq \gamma Q \sigma(\bar{x}_q)\\
        Q \sigma(\bar{x}_q) &\leq \ell_q \leq \frac{\gamma}{N} Q\sigma(\bar{x}_0)
    \end{aligned}
    \label{eq:temp6}
    \end{equation}
    for the same $q$. Then
    \[
    \begin{aligned}
        Y_{\mathrm{DB}}(\bar{x}_1) & = \sum_{t=1}^{q-1}\ell_\tau + \sum_{t=0}^{N-q}\ell_{q+t} + \ell_f(\bar{x}_{N+1})\\
        &\leq \sum_{t=1}^{q-1}\ell_\tau + \sum_{t=0}^{N-q}\ell_{q+t} + \ell_{N+1} \\
        & = \sum_{t=1}^{q}\ell_\tau - \ell_q + \sum_{t=0}^{N-q}\ell_{q+t} + \ell_{N+1}\\
        & \overset{\eqref{eq:temp6}}{\leq} \sum_{t=1}^{q}\ell_\tau - Q \sigma(\bar{x}_q) + \gamma Q \sigma(\bar{x}_q) \\
        & \leq \sum_{t=1}^{q}\ell_\tau + Q (\gamma - 1)\sigma(\bar{x}_q) \\
        & \overset{\eqref{eq:temp7}}{\leq} Y_{\mathrm{DB}}(\bar{x}_0)- \ell_0 + \frac{\gamma(\gamma-1)}{N}Q \sigma(\bar{x}_0) \\
        & \leq Y_{\mathrm{DB}}(\bar{x}_0)- Q \sigma(\bar{x}_0) + \frac{\gamma(\gamma-1)}{N}Q \sigma(\bar{x}_0)
    \end{aligned}
    \]
    By definition of $\bar{x}_1 = g^{\mathrm{sat}}(\bar{x}_0)$, we finally have
    \begin{equation}
    \begin{split}
        &Y_{\mathrm{DB}}(g^{\mathrm{sat}}(\bar{x}_0))-Y_{\mathrm{DB}}(\bar{x}_0) = Y_{\mathrm{DB}}(\bar{x}_1)-Y_{\mathrm{DB}}(\bar{x}_0) \\
        &\leq - Q \sigma(\bar{x}_0) + \frac{\gamma(\gamma-1)}{N}Q \sigma(\bar{x}_0) \\
        & \leq - Q \sigma(\bar{x}_0) + \frac{\gamma^2}{N}Q \sigma(\bar{x}_0) \\
        & \leq -Q\,\sigma(\bar{x}_\tau)+ \frac{\big(Q+\lambda_{\max}(R)\|\mathcal{R}_*^\dagger A^\kappa\|^2\big)^2\,\sigma(\bar{x}_\tau)}{(1-p^2)^2\,Q\,(N-1)}
    \end{split} \label{eq:onestepdecrease_sat}
    \end{equation}
    Since $\bar{x}_\tau$ is arbitrary, then if 
    \(
N > \frac{\big(Q+\lambda_{\max}(R)\,\|\mathcal{R}_*^\dagger A^\kappa\|^2\big)^2}{Q^2\,(1-p^2)^2}+1,
\)
we have 
\(
Y_{\mathrm{DB}}(g^{\mathrm{sat}}(x))-Y_{\mathrm{DB}}(x) < 0, \forall x \in \mathbb{R}^n.
\)
\end{proof}

\begin{lemma}
    For system $\bar{x}_{\tau+1}=A^\kappa \bar{x}_\tau + \mathcal{R}_* \bar{u}_\tau$ under Assumption \ref{assump1} and the control policy $\pi_{\mathrm{MPC}}(x,\theta^*)$ derived from solving the following optimization problem
\begin{equation}
\label{eq:OptimalCost}
\begin{aligned}
Y_{\mathrm{MPC}}(x) \triangleq & \min_{\{\pi_i\}_{i=0}^{N-1}}
\left[ \sum_{i=0}^{N-1} \ell(\hat{x}_i, \pi_i) + \ell_f(\hat{x}_N) \right], \\
\text{s.t.} \quad 
& \hat{x}_0 = x, \\
& \hat{x}_{i+1}=A^\kappa \hat{x}_i + \mathcal{R}_* \pi_i, \\
& \|\pi_i\|_\infty \leq u_{\max} - C
\end{aligned}
\end{equation}
    for any $u_{\max}-C>0$ and $x \in \mathbb{R}^n$, if $N > 1 + \frac{\big(Q+\lambda_{\max}(R)\,\|\mathcal{R}_*^\dagger A^\kappa\|^2\big)^2}{(1-p^2)^2\,Q^2} $ with $p$ being defined in \eqref{eq:p}, then $Y_{\mathrm{MPC}}(x)$ is a Lyapunov function of the closed-loop system.
    \label{MPCwithoutterminal}
\end{lemma}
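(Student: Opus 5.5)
The plan is to follow the structure of Lemma~\ref{lem:lyasat}, i.e. Variant~1 of \cite{grune2012nmpc} for MPC without terminal constraints. The saturated deadbeat law is used only as a feasible comparison policy to obtain a bound on the optimal value. This step is made state-independent by the exponential cost.

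\textbf{Step 1 (sandwich bounds).} The saturated deadbeat sequence $\pi_{\mathrm{DB}}(\hat x_i,\theta^*,C)$ satisfies $\|\cdot\|_\infty\le u_{\max}-C$. It is therefore feasible for \eqref{eq:OptimalCost}, so $Y_{\mathrm{MPC}}(x)\le Y_{\mathrm{DB}}(x)$. By \eqref{eq:sandboundVsat},
\[
Q\sigma(x)\le Y_{\mathrm{MPC}}(x)\le \gamma Q\sigma(x).
\]
The lower bound holds because $\ell(x,u)\ge Q\sigma(x)$. The same bound applies to every shorter horizon $k\le N$: the value function $Y_k$ with horizon $k$ and terminal cost $\ell_f=Q\sigma$ satisfies $Q\sigma(x)\le Y_k(x)\le\gamma Q\sigma(x)$. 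This follows since truncating the deadbeat trajectory and using \eqref{temp3} bounds the truncated cost by the same geometric series. Since $\sigma(x)=(e^{|x|}-1)^2$ is continuous and radially unbounded, the bounds give $\mathcal K_\infty$ functions $\alpha_1(|x|)=Q(e^{|x|}-1)^2$ and $\alpha_2=\gamma\alpha_1$.

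\textbf{Step 2 (relaxed decrease).} Let $u^*=(u_0^*,\dots,u_{N-1}^*)$ be an optimal sequence at $x$, with predicted states $x_0^*=x,\dots,x_N^*$ and stage costs $\ell_i^*$. Existence of $u^*$ follows from compactness of the input set and continuity of the cost. The closed-loop successor is $x_1^*$. Since $\sum_i\ell_i^*\le\gamma Q\sigma(x)$, some index $q\in\{0,\dots,N-1\}$ has $\ell_q^*\le\frac{\gamma}{N}Q\sigma(x)$. As in Lemma~\ref{lem:lyasat}, the premise on $N$ gives $N>\gamma$, which excludes $q=0$. Hence $q\ge1$ and $Q\sigma(x_q^*)\le\frac{\gamma}{N}Q\sigma(x)$.

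At $x_1^*$, build a feasible horizon-$N$ candidate as follows. Apply $u_1^*,\dots,u_{q-1}^*$ to reach $x_q^*$. Then apply the optimal horizon-$(N-q+1)$ sequence from $x_q^*$; this is admissible because the constraint set is the same. Optimality of $Y_{\mathrm{MPC}}(x_1^*)$ and Step~1 for $Y_{N-q+1}$ then give
\[
Y_{\mathrm{MPC}}(x_1^*)\le \sum_{i=1}^{q-1}\ell_i^*+\gamma Q\sigma(x_q^*)
\le Y_{\mathrm{MPC}}(x)-\ell_0^*-\ell_q^*+\gamma Q\sigma(x_q^*).
\]
Using $\ell_q^*\ge Q\sigma(x_q^*)$ and $\ell_0^*\ge Q\sigma(x)$ yields
\[
Y_{\mathrm{MPC}}(x_1^*)-Y_{\mathrm{MPC}}(x)\le -Q\sigma(x)+\frac{\gamma(\gamma-1)}{N}Q\sigma(x).
\]
This is exactly the form of \eqref{eq:onestepdecrease_sat}.

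\textbf{Step 3 (conclusion).} The premise $N>1+\gamma^2$ gives $\frac{\gamma(\gamma-1)}{N}<\frac{\gamma^2}{N-1}<1$. The decrease is therefore bounded by $-\alpha_3(|x|)$ with $\alpha_3(s)=(1-\frac{\gamma^2}{N-1})Q(e^s-1)^2\in\mathcal K_\infty$, uniformly for all $x\in\mathbb R^n$. Together with Step~1, this shows that $Y_{\mathrm{MPC}}$ is a global Lyapunov function.

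\textbf{Main obstacle.} The delicate point is Step~2's tail argument: the tail value must be bounded by $\gamma Q\sigma(x_q^*)$ uniformly in the remaining horizon length. That bound depends on the deadbeat comparison of Step~1 holding for every shorter horizon and every starting state, including states where the controller is saturated. It is exactly here that the exponential cost with contraction factor $p$ from \eqref{temp2} is needed. A quadratic cost would give a $\gamma$ that depends on the state under saturation, and the argument would break.
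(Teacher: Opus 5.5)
Your proof is correct. For the decrease step it takes a different route from the paper's proof of this lemma.

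The paper gets the sandwich bound the same way you do, using the saturated deadbeat law as a feasible comparison. For the decrease \eqref{Y1decreasetrueparam}, however, it invokes Theorem~1 of \cite{grimm2005model}. It checks continuity of $\ell$ and a trivial ($W=0$) detectability condition, and takes the factor $\gamma^2/(N-1)$ from that theorem without deriving it.

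You instead run the relaxed dynamic-programming argument directly on the optimal predicted trajectory:
\begin{itemize}
\item a pigeonhole step gives an index $q\ge 1$ with $\ell_q^*\le \frac{\gamma}{N}Q\sigma(x)$;
\item a concatenated candidate at $x_1^*$ uses the optimal prefix followed by an optimal horizon-$(N-q+1)$ tail;
\item the tail cost is bounded by $Y_{N-q+1}(x_q^*)\le\gamma Q\sigma(x_q^*)$.
\end{itemize}
This mirrors the pattern the paper uses for Lemma~\ref{lem:lyasat}, but here it genuinely exploits optimality.

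Your version buys three things:
\begin{itemize}
\item It is self-contained.
\item It makes explicit the hypothesis that is really needed, and that the cited theorem also requires: the comparison bound $Y_k\le\gamma Q\sigma$ must hold uniformly for every tail horizon $k\le N$. This holds because \eqref{temp3} bounds any truncation of the deadbeat cost by the same geometric series.
\item It yields the slightly sharper factor $\gamma(\gamma-1)/N\le\gamma^2/(N-1)$, so the stated premise $N>1+\gamma^2$ is more than sufficient.
\end{itemize}
The paper's route is shorter, but it relies on matching the hypotheses of an external result without showing where the constant comes from.

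One minor detail: excluding $q=0$ uses $\sigma(x)>0$, so the case $x=0$ should be handled separately. There the unique optimal input sequence is zero, since $R\succ 0$, so $x_1^*=0$ and the decrease holds trivially.
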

\begin{proof}
From the positive-definiteness of the stage costs, we have $Y_{\mathrm{MPC}}(x) \geq Q\sigma(x), \forall \, x \in \mathbb{R}^n$. By the optimality property of MPC, any feasible control sequence together with its realized cost provides an upper bound on the optimal value function. Consequently, if a feasible solution can be constructed, its corresponding realized cost must be greater than or equal to the optimal cost $Y_{\mathrm{MPC}}$. In particular, the saturated deadbeat controller yields a feasible solution to the MPC problem \eqref{eq:OptimalCost}. Moreover, its associated realized cost has been derived in \eqref{temp3}. Therefore, we obtain
\begin{equation}
\label{eq:sandboundVMPC}
\begin{split}
    Q\sigma(x) & \le Y_{\mathrm{MPC}}(x) \le Y_{\mathrm{DB}}(x) \le \tfrac{\big(Q+\lambda_{\max}(R)\|\mathcal{R}_*^\dagger A^\kappa\|^2\big)\,\sigma(x)}{1-p^2}.
\end{split}
\end{equation}

With the sandwich bound above, we now show a decrease condition. We define the closed-loop transition map $g^{\mathrm{MPC}}(\bar{x}_\tau) := A^\kappa \bar{x}_\tau + \mathcal{R}_* \pi_{\mathrm{MPC}}(\bar{x}_\tau, \theta^*)$, representing the one-step evolution of the system with initial condition $\bar{x}_\tau$ under MPC policy $\pi_{\mathrm{MPC}}(\bar{x}_\tau, \theta^*)$ with the true parameters $\theta^*$. The following conditions hold, allowing us to apply Theorem~1 of \cite{grimm2005model}:

\begin{enumerate}
    \item[(i)] The stage cost $\ell(x,u)$ is continuous in its arguments.
    
    \item[(ii)]  The function $\sigma(x)$ satisfies the detectability condition with 
    $W(x)=0$, $\bar{\alpha}_W(\cdot)=0 \cdot \mathrm{Id}$, 
    $\alpha_W = Q \cdot \mathrm{Id}$, and $\gamma_W=\mathrm{Id}$ such that
    \[
    \begin{aligned}
    W(\bar{x}_\tau) &\leq \bar{\alpha}_W(\sigma(\bar{x}_\tau)), \\
    W(g^{\mathrm{MPC}}) - W(\bar{x}_\tau) &\le -\alpha_W(\bar{x}_\tau) + \gamma_W(\ell(\bar{x}_\tau,\bar{u}_\tau)).
    \end{aligned}
    \]

\end{enumerate}
Therefore, with (i)-(ii) above and \eqref{eq:sandboundVMPC}, applying Theorem~1 of \cite{grimm2005model}, we obtain
\begin{equation}
\begin{aligned}
Y_{\mathrm{MPC}}&(g^{\mathrm{MPC}}(\bar{x}_\tau)) - Y_{\mathrm{MPC}}(\bar{x}_\tau) 
 \\
&\quad \le -Q\,\sigma(\bar{x}_\tau) + \tfrac{\big(Q+\lambda_{\max}(R)\|\mathcal{R}_*^\dagger A^\kappa\|^2\big)^2\,\sigma(\bar{x}_\tau)}
{(1-p^2)^2\,Q\,(N-1)}.
\end{aligned} \label{Y1decreasetrueparam}
\end{equation}
 Since $\bar{x}_\tau$ is arbitrary, given that $N > 1 + \tfrac{\big(Q+\lambda_{\max}(R)\,\|\mathcal{R}_*^\dagger A^\kappa\|^2\big)^2}{(1-p^2)^2\,Q^2}$ as in the premise, it follows that 
\(
Y_{\mathrm{MPC}}(g^{\mathrm{MPC}}(x)) - Y_{\mathrm{MPC}}(x) < 0, x \in \mathbb{R}^n.
\)
\end{proof}

We state the following lemma, which establish the semi-global Lipschitz continuity of MPC controller in the system parameters.

\begin{lemma} [Semi-Global Lipschitz Continuity of MPC]
    For each compact set $\mathbb{X}_l =\{ x \mid |x| \leq \bar{x}_s \}$ with a constant $\bar{x}_s>0$, there exists a positive constants $\bar{\epsilon}_1>0$ and $D_{\mathrm{MPC}}(\bar{x}_s) < \infty$ such that $\forall x \in \mathbb{X}_l, \forall  \hat{\theta}$ satisfying $\|\hat{\theta} - \theta^*\| \leq \bar{\epsilon}_1,$ 
    \begin{equation}
        | \pi_{\mathrm{MPC}}(x, \hat{\theta}) - \pi_{\mathrm{MPC}}(x, \theta^*) | \leq D_{\mathrm{MPC}}(\bar{x}_s) \, \|\hat{\theta} - \theta^* \|.  \label{lipschMPC}
    \end{equation}
    Besides, $D_{\mathrm{MPC}}(\bar{x}_s)$ is non-decreasing in $\bar{x}_s$.
    \label{localxlipschitzMPC}
\end{lemma}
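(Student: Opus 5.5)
The plan is to exploit the fact that, for fixed $\hat\theta$, problem $\mathbf{P}_N(x,\hat A,\hat B)$ is a strongly convex program in the stacked decision variable $\mathbf{u}:=(\hat u_0,\dots,\hat u_{N-1})\in\mathbb{R}^{N\kappa m}$ over a fixed box. Lipschitz dependence of its unique minimizer on $\hat\theta$ then follows from a standard variational-inequality (strong monotonicity) argument. The constant will depend on $\bar x_s$ only through local bounds on gradients, which is what makes the result semi-global rather than global (cf. Remark~\ref{remark:hys_lip_con}).

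\textbf{Step 1 (convexity and smoothness).} The predicted states are affine in $\mathbf{u}$:
\[
\hat x_i=\hat A^{\kappa i}x+\sum_{j<i}\hat A^{\kappa(i-1-j)}\mathcal{R}_\kappa(\hat A,\hat B)\hat u_j=:G_i(\hat\theta)x+H_i(\hat\theta)\mathbf{u},
\]
where the entries of $G_i,H_i$ are polynomials in $\hat\theta$. The function $\sigma(x)=(e^{|x|}-1)^2$ is the square of a nonnegative convex function, so it is convex. It is also $C^1$, with $\nabla\sigma(x)=2(e^{|x|}-1)e^{|x|}x/|x|$ extended by $0$ at the origin. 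Near $0$ this gradient behaves like $2x$, so it is locally Lipschitz. Hence the objective $J(\mathbf{u};x,\hat\theta)=\sum_i Q\sigma(\hat x_i)+\sum_i\hat u_i^\intercal R\hat u_i$ has the following properties: it is $\mu$-strongly convex in $\mathbf{u}$ with $\mu=2\lambda_{\min}(R)$, uniformly in $(x,\hat\theta)$; it is $C^1$; and its gradient is
\[
\nabla_{\mathbf{u}}J=\sum_i QH_i(\hat\theta)^\intercal\nabla\sigma\bigl(G_i(\hat\theta)x+H_i(\hat\theta)\mathbf{u}\bigr)+2(I_N\otimes R)\mathbf{u}.
\]
The feasible set $\mathbb{U}_N:=\{\|\hat u_i\|_\infty\le u_{\max}-C\}$ is compact, convex and independent of $\hat\theta$. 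So the minimizer $\mathbf{u}^\star(x,\hat\theta)$ exists and is unique, and $\pi_{\mathrm{MPC}}(x,\hat\theta)$ is its first block.

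\textbf{Step 2 (perturbation bound).} Write $\mathbf{u}_1=\mathbf{u}^\star(x,\hat\theta)$ and $\mathbf{u}_2=\mathbf{u}^\star(x,\theta^*)$. The first-order optimality conditions read $\langle\nabla_{\mathbf{u}}J(\mathbf{u}_1;\hat\theta),\mathbf{u}_2-\mathbf{u}_1\rangle\ge0$ and $\langle\nabla_{\mathbf{u}}J(\mathbf{u}_2;\theta^*),\mathbf{u}_1-\mathbf{u}_2\rangle\ge0$. Adding them and using strong monotonicity of $\nabla_{\mathbf{u}}J(\cdot;\hat\theta)$ gives
\[
\mu|\mathbf{u}_1-\mathbf{u}_2|^2\le\bigl\langle\nabla_{\mathbf{u}}J(\mathbf{u}_2;\theta^*)-\nabla_{\mathbf{u}}J(\mathbf{u}_2;\hat\theta),\,\mathbf{u}_1-\mathbf{u}_2\bigr\rangle,
\]
hence
\[
|\mathbf{u}_1-\mathbf{u}_2|\le\mu^{-1}\bigl|\nabla_{\mathbf{u}}J(\mathbf{u}_2;\hat\theta)-\nabla_{\mathbf{u}}J(\mathbf{u}_2;\theta^*)\bigr|.
\]

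\textbf{Step 3 (the main obstacle).} It remains to show that $\theta\mapsto\nabla_{\mathbf{u}}J(\mathbf{u};x,\theta)$ is Lipschitz on $\{\|\theta-\theta^*\|\le\bar\epsilon_1\}$, uniformly over $|x|\le\bar x_s$ and $\mathbf{u}\in\mathbb{U}_N$. Fix any $\bar\epsilon_1>0$, for instance $\bar\epsilon_1=1$. On the compact set $K:=\{|x|\le\bar x_s\}\times\mathbb{U}_N\times\{\|\theta-\theta^*\|\le\bar\epsilon_1\}$ the following hold:
\begin{itemize}
\item the matrices $G_i,H_i$ and their $\theta$-derivatives are bounded, being polynomials on a compact set;
\item the predicted states satisfy $|\hat x_i|\le M(\bar x_s)$, where $M$ is nondecreasing in $\bar x_s$;
\item $\nabla\sigma$ is bounded and Lipschitz on the ball of radius $M(\bar x_s)$, with constants that grow (exponentially) with $M$.
\end{itemize}
Combining these through the product and chain rules yields a finite Lipschitz constant $L(\bar x_s)$ for the map $\theta\mapsto\nabla_{\mathbf{u}}J$, and this constant is nondecreasing in $\bar x_s$. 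The care needed here is to verify the local Lipschitz property of $\nabla\sigma$ at the origin, which is the only point where $|x|$ is nonsmooth. This follows because $(e^{r}-1)e^{r}/r$ is smooth and bounded for $r\in[0,M]$.

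\textbf{Conclusion.} Set $D_{\mathrm{MPC}}(\bar x_s):=L(\bar x_s)/\mu$. The first block of $\mathbf{u}^\star$ is no farther apart than the whole stacked vector, so $|\pi_{\mathrm{MPC}}(x,\hat\theta)-\pi_{\mathrm{MPC}}(x,\theta^*)|\le|\mathbf{u}_1-\mathbf{u}_2|\le D_{\mathrm{MPC}}(\bar x_s)\|\hat\theta-\theta^*\|$, which is \eqref{lipschMPC}. To make monotonicity in $\bar x_s$ explicit, one may define $D_{\mathrm{MPC}}(\bar x_s)$ directly as the supremum of the Lipschitz ratio over $K$. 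Since $K$ grows with $\bar x_s$, this supremum is nondecreasing in $\bar x_s$.
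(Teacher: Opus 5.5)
Your proof is correct, but it reaches the result by a different route than the paper.

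The paper stacks the inputs and appeals to Hager's Lipschitz-stability theorem for constrained programs (Appendix D of \cite{hager1979lipschitz}). It then verifies that theorem's hypotheses one by one:
$\mathbf{C}^2$ smoothness of the cost, uniqueness of the minimizer, boundedness of the data, a rank condition on the constraint Jacobian, and a uniformly positive-definite Hessian. For the Hessian it uses an auxiliary perturbation lemma (full rank of $\mathcal{R}_N(\hat A,\hat B)$ near $\theta^*$), and this is where its $\bar\epsilon_1$ comes from. Monotonicity of $D_{\mathrm{MPC}}$ is then obtained exactly as in your last step, via a supremum over a growing ball.

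You instead use two structural facts. The box constraint set does not depend on $\hat\theta$, and $R\succ 0$ makes the cost $2\lambda_{\min}(R)$-strongly convex uniformly in $(x,\hat\theta)$. With these, a short variational-inequality comparison reduces everything to the Lipschitz dependence of $\nabla_{\mathbf{u}}J$ on $\theta$ over a compact set.

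This buys you a good deal:
\begin{itemize}
\item You need only $\mathbf{C}^1$ with a locally Lipschitz gradient, rather than $\mathbf{C}^2$.
\item You need no constraint qualification. The paper's condition $\|M^\intercal\lambda\|\ge\beta\|\lambda\|$ for all $\lambda$ actually fails for $M=[I;-I]$ and must be read for active constraints only.
\item You need no reachability perturbation lemma, since the $2R$ terms alone already give positive definiteness.
\item Your $\bar\epsilon_1$ can be arbitrary, and you get the explicit constant $D_{\mathrm{MPC}}(\bar x_s)=L(\bar x_s)/(2\lambda_{\min}(R))$.
\end{itemize}
In exchange, Hager's framework is the more general tool: it would handle parameter-dependent constraints and also deliver Lipschitz multipliers, but neither is needed here.
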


\begin{lemma}
    Consider system \eqref{sys1} under Assumption \ref{assump1} and the certainty-equivalence MPC control policy $\pi_{\mathrm{MPC}}(x,\hat{\theta})$ by solving \eqref{MPCProblem}. For each constant $0<\bar{x}_s<+\infty$, and for all $\|\hat{\theta}-\theta^*\| \leq \bar{\epsilon}_1$ with $\bar{\epsilon}_1, \bar{x}_s$ defined in Lemma \ref{localxlipschitzMPC}, the optimal value function $Y_{\mathrm{MPC}}(x)$ in \eqref{eq:OptimalCost} satisfies the following for all $|x|\leq \bar{x}_s$ and any $\mu_1 > 0$:
    \begin{equation}
    Q\sigma(x) \le Y_{\mathrm{MPC}}(x) \le \tfrac{\big(Q+\lambda_{\max}(R)\|\mathcal{R}_*^\dagger A^\kappa\|^2\big)\,\sigma(x)}{1-p^2}
    \label{eq:sandboundVMPC_CE} 
    \end{equation}
    \begin{equation}
    \begin{aligned}
        &\mathbb{E}[Y_{\mathrm{MPC}}(g_1(x))] - Y_{\mathrm{MPC}}(x) \leq -Q\sigma(x) + \sigma_{\mathrm{MPC}}
\end{aligned} \label{eq:Y1decrease_CE}
    \end{equation}
    where 
    \begin{align*}
    & \sigma_{\mathrm{MPC}}:= \sigma_{\mathrm{MPC}}(x,\theta^*,\hat{\theta},Q,R,\bar{w},\bar{v},\mu_1,\bar{x}_s) \\
        & = \left\{ \left[ \tfrac{(Q + \lambda_{\max}(R)\| \mathcal{R}_*^\dagger A^\kappa \|^2)  }{1-p^2} (E_{w1}+1)^2\right]m_2^2 \right\} \\
        & \quad \times \sigma(x) +  \tfrac{(Q + \lambda_{\max}(R)\| \mathcal{R}_*^\dagger A^\kappa \|^2)^2 }{Q\max(N-1,1)(1-p^2)^2} \\
        & \quad \times \sigma(x) + \mu_1 \tfrac{(Q + \lambda_{\max}(R)\| \mathcal{R}_*^\dagger A^\kappa \|^2)   }{1-p^2} \\
        & \quad \times 8m^2_2\sigma(x) + \tfrac{(Q + \lambda_{\max}(R)\| \mathcal{R}_*^\dagger A^\kappa \|^2)   }{1-p^2} E_{w1}^2  \\
        & \quad + \tfrac{1}{\mu_1}\tfrac{(Q + \lambda_{\max}(R)\| \mathcal{R}_*^\dagger A^\kappa \|^2)   }{1-p^2}(E_{w1}^4+E_{w1}^2)
    \end{align*}
    with 
    \[
    \begin{aligned}
        g_1(x)&:= A^\kappa x+\mathcal{R}_*\pi_{\mathrm{MPC}}(x,\hat{\theta})+\mathcal{R}_*\bar{v}+ \mathcal{R}_\kappa(A,I)\bar{w}, \\
        h&:= \ln\left[ \mathbb{E} e^{|\mathcal{R}_* \bar{v}|}\right] +\ln\left[ \mathbb{E} e^{| \mathcal{R}_\kappa(A,I)\bar{w}|} \right],\\
        E_{w1} &:= \exp\Bigl(D_{\mathrm{MPC}}(\bar{x}_s)\|\mathcal{R}_*\| \|\hat{\theta} - \theta^*\| + |h|\Bigr) - 1 ,\\
        m^2_2 & = -1 +\tfrac{(Q + \lambda_{\max}(R)\| \mathcal{R}_*^\dagger, A^\kappa \|^2)^2 }{Q^2\max(N-1,1) (1-p^2)^2}  + \tfrac{(Q + \lambda_{\max}(R)\| \mathcal{R}_*^\dagger A^\kappa \|^2)  }{Q(1-p^2)}, \\
            & \bar{w} \overset{d}{\sim} \bar{w}, \bar{v} \overset{d}{\sim} \bar{v}.
    \end{aligned}
    \]
    
    \label{lem:MPCwithoutterminal_CE}
\end{lemma}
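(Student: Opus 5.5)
The plan is in two parts. The sandwich bound \eqref{eq:sandboundVMPC_CE} needs no new work: it concerns the true-parameter value function $Y_{\mathrm{MPC}}$ of \eqref{eq:OptimalCost}, and it is exactly \eqref{eq:sandboundVMPC} from the proof of Lemma~\ref{MPCwithoutterminal}. All the effort therefore goes into the expected decrease \eqref{eq:Y1decrease_CE}. I would write
\[
g_1(x)=g^{\mathrm{MPC}}(x)+\Delta,\qquad \Delta:=\mathcal{R}_*\bigl(\pi_{\mathrm{MPC}}(x,\hat\theta)-\pi_{\mathrm{MPC}}(x,\theta^*)\bigr)+\mathcal{R}_*\bar v+\mathcal{R}_\kappa(A,I)\bar w ,
\]
so that the closed loop is the nominal true-parameter MPC step plus a perturbation. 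Lemma~\ref{localxlipschitzMPC} applies because $|x|\le\bar x_s$ and $\|\hat\theta-\theta^*\|\le\bar\epsilon_1$. It gives the deterministic part of $\Delta$ norm at most $D_{\mathrm{MPC}}(\bar x_s)\|\mathcal{R}_*\|\|\hat\theta-\theta^*\|$.

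The key step is to control $Y_{\mathrm{MPC}}(g^{\mathrm{MPC}}(x)+\Delta)$ in terms of $\sigma(g^{\mathrm{MPC}}(x))$, using the exponential form of the cost.
\begin{itemize}
\item By the upper bound, $Y_{\mathrm{MPC}}(y)\le \gamma Q\,\sigma(y)$ for every $y$.
\item Since $e^{|a+b|}\le e^{|a|}e^{|b|}$, we get $e^{|a+b|}-1\le (e^{|a|}-1)e^{|b|}+(e^{|b|}-1)$.
\item Squaring and using Young's inequality with parameter $\mu_1$ on the cross term gives
\[
\sigma(a+b)\le e^{2|b|}\sigma(a)+(e^{|b|}-1)^2+2\sqrt{\sigma(a)}\,e^{|b|}(e^{|b|}-1).
\]
\end{itemize}
Taking expectations over $\bar v,\bar w$ with independence, the moment generating factors combine into $\exp(D_{\mathrm{MPC}}\|\mathcal{R}_*\|\|\tilde\theta\|+|h|)=E_{w1}+1$. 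This explains why $E_{w1}$ and $h$ appear; Jensen's inequality, or bounding $\mathbb{E}e^{2|\cdot|}$ through the same quantities, converts squared exponentials into the stated powers $E_{w1}^2$ and $E_{w1}^4$.

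Next I need a bound on $\sigma(g^{\mathrm{MPC}}(x))$. Rearranging the nominal decrease \eqref{Y1decreasetrueparam} together with $Q\sigma(g^{\mathrm{MPC}}(x))\le Y_{\mathrm{MPC}}(g^{\mathrm{MPC}}(x))$ gives
\[
Q\sigma(g^{\mathrm{MPC}}(x))\le \Bigl(\gamma Q-Q+\tfrac{\gamma^2Q}{N-1}\Bigr)\sigma(x)=Q\,m_2^2\,\sigma(x).
\]
This is how $m_2^2$ enters, with $\max(N-1,1)$ covering the case $N=1$. I would then split
\[
\mathbb{E}Y_{\mathrm{MPC}}(g_1)-Y_{\mathrm{MPC}}(x)=\bigl[Y_{\mathrm{MPC}}(g^{\mathrm{MPC}})-Y_{\mathrm{MPC}}(x)\bigr]+\bigl[\mathbb{E}Y_{\mathrm{MPC}}(g_1)-Y_{\mathrm{MPC}}(g^{\mathrm{MPC}})\bigr].
\]
The first bracket gives $-Q\sigma(x)+\frac{\gamma^2Q}{N-1}\sigma(x)$. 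For the second, I bound it crudely by $\mathbb{E}Y_{\mathrm{MPC}}(g_1)\le\gamma Q\,\mathbb{E}\sigma(g^{\mathrm{MPC}}+\Delta)$ and insert the perturbation inequality with $\sigma(g^{\mathrm{MPC}})\le m_2^2\sigma(x)$. This yields the terms $\gamma Q(E_{w1}+1)^2m_2^2\sigma(x)$, $\gamma Q E_{w1}^2$, and, from the Young split of the cross term, $\mu_1\gamma Q\cdot 8m_2^2\sigma(x)+\mu_1^{-1}\gamma Q(E_{w1}^4+E_{w1}^2)$. Collecting these reproduces $\sigma_{\mathrm{MPC}}$.

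The main obstacle is the cross-term and moment bookkeeping. One has to handle $\sqrt{\sigma(a)}\,e^{|b|}(e^{|b|}-1)$ so that its expectation factorizes into powers of $E_{w1}$ with exactly the constant $8$ and the exponents $2$ and $4$. Related to this, $\mathbb{E}e^{2|b|}$ must be related to $(\mathbb{E}e^{|b|})^2$; this requires care, or a redefinition of $h$, since Jensen's inequality goes the wrong way. I would first attempt it by applying $(a+b+c)^2\le$ sums with Young weights, and then use the sub-Gaussian and uniform moment generating bounds to absorb the remaining constants.
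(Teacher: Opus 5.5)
Your architecture matches the paper's. The sandwich bound is \eqref{eq:sandboundVMPC}, and the paper also splits through the nominal step $\tilde g_1(x)=g^{\mathrm{MPC}}(x)$, derives $\sigma(\tilde g_1(x))\le m_2^2\sigma(x)$ from \eqref{Y1decreasetrueparam}, expands $(e^{|a|}e^{|b|}-1)^2$, and applies Young with $\mu_1$.

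The genuine gap is the step you flagged and left open. Write $b:=\mathcal{R}_*\bar v+\mathcal{R}_\kappa(A,I)\bar w$. You need $\mathbb{E}e^{2|b|}$ and $\mathbb{E}[e^{|b|}(e^{|b|}-1)]$ bounded by $(E_{w1}+1)^2$ and powers of $E_{w1}$, but $h$ only contains first exponential moments. No Young weighting or sub-Gaussian moment bound can do this, because \eqref{eq:Y1decrease_CE} with $h$ as defined is false in general.

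To see this, take $x=0$ and $\hat\theta=\theta^*$. Then $\pi_{\mathrm{MPC}}(0,\theta^*)=0$ and $Y_{\mathrm{MPC}}(0)=0$, so every $\sigma(x)$-term vanishes. Since $\mu_1$ is arbitrary, the claim reduces to
\[
\mathbb{E}Y_{\mathrm{MPC}}(b)\le \gamma Q\,(e^{h}-1)^2,\qquad \gamma Q=\tfrac{Q+\lambda_{\max}(R)\|\mathcal{R}_*^\dagger A^\kappa\|^2}{1-p^2},
\]
whereas always $\mathbb{E}Y_{\mathrm{MPC}}(b)\ge Q\,\mathbb{E}\sigma(b)\ge Q\,\mathbb{E}|b|^2$. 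Now take $n=m=\kappa=1$, $A=0$, $B=1$, $w\sim\mathcal{N}(0,s^2)$ and $C\ll s\ll 1$. Then $\mathbb{E}|b|^2\approx s^2$ but $(e^h-1)^2\approx 2s^2/\pi$. Meanwhile $\gamma=1/(1-p^2)$ with $p=e^{-(u_{\max}-C)}$ is as close to $1$ as you like, so the inequality fails.

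The paper does not supply the missing idea either. In \eqref{eq:g1_g1tilde} it cites Jensen's inequality to replace the random $|\mathcal{R}_*\bar v+\mathcal{R}_\kappa(A,I)\bar w|$ inside the convex $\sigma$ by the deterministic $|h|$. Convexity gives the opposite direction: for deterministic $a\ge 0$,
\[
\mathbb{E}\bigl(e^{a+|b|}-1\bigr)^2=\bigl(e^{a}\,\mathbb{E}e^{|b|}-1\bigr)^2+e^{2a}\,\mathrm{Var}\bigl(e^{|b|}\bigr),
\]
and the variance term is not controlled by $h$. So the statement itself must change before your argument, or the paper's, can close. The same applies to Lemma~\ref{DB_CE}.

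One workable version uses second exponential moments. Set $c:=D_{\mathrm{MPC}}(\bar x_s)\|\mathcal{R}_*\|\|\hat\theta-\theta^*\|$ and
\[
h_2:=\tfrac12\ln\mathbb{E}e^{2|\mathcal{R}_*\bar v|}+\tfrac12\ln\mathbb{E}e^{2|\mathcal{R}_\kappa(A,I)\bar w|},
\]
which is finite under Assumption~\ref{assump1}. Let $E:=e^{c+h_2}-1$ and $Z:=e^{c+|b|}\ge 1$.
\begin{itemize}
\item Independence gives $\mathbb{E}Z^2\le (E+1)^2$.
\item Since $\mathbb{E}Z\ge 1$, this yields $\mathbb{E}Z(Z-1)\le E^2+2E$ and $\mathbb{E}(Z-1)^2\le E^2+2E$.
\end{itemize}
Your expansion
\[
\sigma(g_1(x))\le Z^2\sigma(\tilde g_1(x))+2\sqrt{\sigma(\tilde g_1(x))}\,Z(Z-1)+(Z-1)^2,
\]
followed by $\sqrt{\sigma(\tilde g_1(x))}\le m_2(e^{|x|}-1)$ and Young, then goes through. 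The price is that $\sigma_{\mathrm{MPC}}$ acquires terms linear in $E$, which the stated formula does not contain.
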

\begin{proof}
    The sandwich bound \eqref{eq:sandboundVMPC_CE} has been proved in \eqref{eq:sandboundVMPC}. We focus now on the proof of \eqref{eq:Y1decrease_CE}. Define and notice the difference between three closed-loop evolutions 
        \begin{equation}
        \begin{split}
            g_1(x)&:= A^\kappa x+\mathcal{R}_*\pi_{\mathrm{MPC}}(x,\hat{\theta})\\
            & \qquad\qquad+\mathcal{R}_*\bar{v}+ \mathcal{R}_\kappa(A,I)\bar{w}, \\
            \hat{g}_1(x)&:=  A^\kappa x+\mathcal{R}_*\pi_{\mathrm{MPC}}(x, \hat{\theta}), \\
            \tilde{g}_1(x)&:= A^\kappa x+\mathcal{R}_*\pi_{\mathrm{MPC}}(x, \theta^*),
        \end{split}
        \end{equation}
        which represent different next states under the true parameters, with or without stochastic disturbance, and using the ideal or CE controller. We focus on the analysis of the following value
        \begin{equation}
        \begin{split}
        &\mathbb{E}[Y_{\mathrm{MPC}}(g_1(x))] - Y_{\mathrm{MPC}}(x) \\
        &= \mathbb{E}[Y_{\mathrm{MPC}}(g_1(x))] - Y_{\mathrm{MPC}}(\tilde{g}_1(x)) \\
        & \qquad + Y_{\mathrm{MPC}}(\tilde{g}_1(x))  - Y_{\mathrm{MPC}}(x).  
        \end{split} \label{perturbationY1}
        \end{equation}
        
        Firstly, we bound $\mathbb{E}[Y_{\mathrm{MPC}}(g_1(x))] - Y_{\mathrm{MPC}}(\tilde{g}_1(x))$. We have
            \begin{equation}
            \label{eq:g1_g1tilde}
            \begin{split}
                &\mathbb{E} [\sigma(g_1(x))]\\
                &  = \mathbb{E} [\sigma(\hat{g}_1(x) + \mathcal{R}_*\bar{v}+ \mathcal{R}_\kappa(A,I)\bar{w})] \\
                & \leq \mathbb{E} [\sigma(\tilde{g}_1(x) + \mathcal{R}_*(\pi_s(x, \hat{\theta}) - \pi_s(x, \theta^*)) \\
                & \qquad\qquad\qquad+ \mathcal{R}_*\bar{v}+ \mathcal{R}_\kappa(A,I)\bar{w})] \\
                & = \mathbb{E} [\sigma(|\tilde{g}_1(x) + \mathcal{R}_*(\pi_s(x, \hat{\theta}) - \pi_s(x, \theta^*)) \\
                & \qquad\qquad\qquad+ \mathcal{R}_*\bar{v}+ \mathcal{R}_\kappa(A,I)\bar{w}|)] \\
                & \leq \mathbb{E} [\sigma(|\tilde{g}_1(x)| + D_{\mathrm{MPC}}(\bar{x}_s) \|\mathcal{R}_*\| \|\hat{\theta} -   \theta^* \|  + |h|)] \\
                & = \sigma(|\tilde{g}_1(x)| + D_{\mathrm{MPC}}(\bar{x}_s) \|\mathcal{R}_*\| \|\hat{\theta} -   \theta^* \|  + |h|)\\
                & =\sigma(\tilde{g}_1(x)) + E_{w1}^2 + \sigma(\tilde{g}_1(x))E_{w1}^2 \\
                & \quad + 2(e^{|\tilde{g}_1(x)|}-1)E_{w1} + 2\sigma(\tilde{g}_1(x))E_{w1} \\
                & \quad+ 2(e^{|\tilde{g}_1(x)|}-1)E_{w1}^2
            \end{split}
            \end{equation}
            where the second inequality is due to Jensen’s inequality and Lemma \ref{localxlipschitzMPC}. The Jensen’s inequality holds for $\sigma(\cdot)$ because it is convex. Next, we derive the relationship between $\sigma(\tilde{g}_1(x))$ and $\sigma(x)$. We have
                \begin{equation}
                \begin{split}
                    &Q\sigma(\tilde{g}_1(x)) - \tfrac{(Q + \lambda_{\max}(R)\| \mathcal{R}_*^\dagger A^\kappa \|^2)  \sigma(x) }{1-p^2} \\
                    & \overset{\eqref{eq:sandboundVMPC}}{\leq} Y_{\mathrm{MPC}}(\tilde{g}_1(x))  - Y_{\mathrm{MPC}}(x)  \\
                    & \overset{\eqref{Y1decreasetrueparam}}{\leq} - Q\sigma(x)+\tfrac{(Q + \lambda_{\max}(R)\| \mathcal{R}_*^\dagger A^\kappa \|^2)^2 \sigma(x)}{Q\max(N-1,1) (1-p^2)^2}.
                \end{split} 
                \end{equation}
            By rearranging, we arrive at
            \begin{equation}
            \begin{split}
                \sigma(\tilde{g}_1(x))
                &\leq - \sigma(x)+\tfrac{(Q + \lambda_{\max}(R)\| \mathcal{R}_*^\dagger A^\kappa \|^2)^2 \sigma(x)}{Q^2\max(N-1,1) (1-p^2)^2}  \\
                & \qquad \qquad + \tfrac{(Q + \lambda_{\max}(R)\| \mathcal{R}_*^\dagger A^\kappa \|^2)  \sigma(x) }{Q(1-p^2)} \\
                & = m^2_2 \, \sigma(x). 
            \end{split}
            \label{eq:relationm2}
            \end{equation}
            By inspection, $m_2 \geq 0$. Therefore, we have
            \begin{equation}
            \label{eq:temp2}
                \begin{split}
                    &\mathbb{E}[Y_{\mathrm{MPC}}(g_1(x))] - Y_{\mathrm{MPC}}(\tilde{g}_1(x)) \\
                    &\overset{\eqref{eq:sandboundVMPC}}{\leq} \dfrac{(Q + \lambda_{\max}(R)\| \mathcal{R}_*^\dagger A^\kappa \|^2)  }{1-p^2} \mathbb{E}[\sigma(g_1(x))] - Q\sigma(\tilde{g}_1(x)) \\
                    & \overset{\eqref{eq:g1_g1tilde}\eqref{eq:relationm2}}{\leq} \left[ \dfrac{(Q + \lambda_{\max}(R)\| \mathcal{R}_*^\dagger A^\kappa \|^2)  }{1-p^2} (E_{w1}+1)^2 - Q \right] \\
                    & \times m_2^2 \,\sigma(x) + \dfrac{(Q + \lambda_{\max}(R)\| \mathcal{R}_*^\dagger A^\kappa \|^2)   }{1-p^2} \\
                    & \times \left[ E_{w1}^2 + 2m_2(e^{|x|}-1)E_{w1}+ 2m_2(e^{|x|}-1)E_{w1}^2\right]
                \end{split} 
                \end{equation}

            Secondly, we consider the bound of $Y_{\mathrm{MPC}}(\tilde{g}_1(x))  - Y_{\mathrm{MPC}}(x)$, which has been established in \eqref{Y1decreasetrueparam}.
            
            Finally, combining \eqref{eq:temp2} with \eqref{Y1decreasetrueparam} and inserting into \eqref{perturbationY1}:
            \begin{equation}
            \begin{split}
                &\mathbb{E}[Y_{\mathrm{MPC}}(g_1(x))] - Y_{\mathrm{MPC}}(x) \\
        &= \mathbb{E}[Y_{\mathrm{MPC}}(g_1(x))] - Y_{\mathrm{MPC}}(\tilde{g}_1(x)) + Y_{\mathrm{MPC}}(\tilde{g}_1(x))  - Y_{\mathrm{MPC}}(x) \\
        & \overset{\eqref{eq:temp2}}{\leq} \left[ \dfrac{(Q + \lambda_{\max}(R)\| \mathcal{R}_*^\dagger A^\kappa \|^2)  }{1-p^2} (E_{w1}+1)^2 - Q \right] \\
                    & \times m_2^2 \,\sigma(x) + \dfrac{(Q + \lambda_{\max}(R)\| \mathcal{R}_*^\dagger A^\kappa \|^2)   }{1-p^2} \\
                    & \times \left[ E_{w1}^2 + 2m_2(e^{|x|}-1)E_{w1}+ 2m_2(e^{|x|}-1)E_{w1}^2\right] \\
                    & -Q\,\sigma(x) + \frac{\big(Q+\lambda_{\max}(R)\|\mathcal{R}_*^\dagger A^\kappa\|^2\big)^2\,\sigma(x)}
{(1-p^2)^2\,Q\,(N-1)}.
            \end{split}
            \label{eq:temp3}
            \end{equation}
            Further applying Young's Inequality with any constant $\mu_1 > 0$ to get 
            \[
            \begin{aligned}
                & 2m_2(e^{|x|}-1)E_{w1}+ 2m_2(e^{|x|}-1)E_{w1}^2 \\
                & \quad\leq 8\mu_1 m_2^2 \sigma(x) + \frac{1}{\mu_1} (E_{w1}^2+E_{w1}^4)
            \end{aligned}
            \]
            and inserting it into \eqref{eq:temp3} gives us \eqref{eq:Y1decrease_CE}.
\end{proof}

In Lemma \ref{lem:MPCwithoutterminal_CE}, we have established the decrease condition of $Y_{\mathrm{MPC}}$. We now proceed to establish the decrease condition of $Y_{\mathrm{DB}}$ using a similar proof idea.
\begin{lemma} \cite[Lemma 1]{siriya2023stability}
Suppose Assumption \ref{assump2} holds. For each $u_{\max}>C$, there exist $\bar{\epsilon}_2, D_{\mathrm{DB}} > 0$ such that
\[
\left| \pi_{\mathrm{DB}}(x,\hat{\theta},C) - \pi_{\mathrm{DB}}(x,\theta^*,C) \right|
   \leq D_{\mathrm{DB}} \,\|\hat{\theta}-\theta^*\|
\]
holds for all $x \in \mathbb{R}^n$ and $\|\hat{\theta}-\theta^*\| \leq \bar{\epsilon}_2 $.
\label{lem:loballipschitzDB}
\end{lemma}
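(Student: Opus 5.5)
The plan is to reduce the claim to two ingredients: Lipschitz continuity of the unsaturated deadbeat gain in $\hat\theta$, and a Lipschitz-type estimate for $\operatorname{sat}_r$, with $r:=u_{\max}-C$. Write $K(\theta):=\mathcal{R}_\kappa(\bar A,\bar B)^\dagger \bar A^\kappa$ for $\theta=[\bar A\ \bar B]$, so that $\pi_{\mathrm{DB}}(x,\theta,C)=\operatorname{sat}_r(-K(\theta)x)$.

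\emph{Step 1 (gain).} The map $\theta\mapsto \mathcal{R}_\kappa(\bar A,\bar B)$ and the map $\theta\mapsto\bar A^\kappa$ are polynomial, hence locally Lipschitz. By Assumption~\ref{assump2}, $\sigma_{\min}(\mathcal{R}_*)\ge\underline{\sigma}_B>0$. Weyl's inequality then gives $\sigma_{\min}(\mathcal{R}_\kappa(\hat A,\hat B))\ge \underline{\sigma}_B/2$ whenever $\|\hat\theta-\theta^*\|\le\bar\epsilon_2$ for a suitable $\bar\epsilon_2$. On this ball the rank is constant, so the Moore--Penrose inverse is Lipschitz (Wedin's bound, $\|\hat{\mathcal R}^\dagger-\mathcal R_*^\dagger\|\le \sqrt2\,\|\hat{\mathcal R}^\dagger\|\|\mathcal R_*^\dagger\|\|\hat{\mathcal R}-\mathcal R_*\|$). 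Consequently $\|K(\hat\theta)-K(\theta^*)\|\le L_K\|\hat\theta-\theta^*\|$ with an explicit $L_K$.

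\emph{Step 2 (saturation).} I would use the elementary facts that $\operatorname{sat}_r$ is the projection onto the ball of radius $r$, hence $1$-Lipschitz, and that for $|z_2|\ge r$ one has $|\operatorname{sat}_r(z_1)-\operatorname{sat}_r(z_2)|\le 2r|z_1-z_2|/|z_2|$. Set $z_1=-K(\hat\theta)x$ and $z_2=-K(\theta^*)x$, so $|z_1-z_2|\le L_K\|\hat\theta-\theta^*\||x|$.
\begin{itemize}
\item If $|x|$ is bounded, the $1$-Lipschitz bound suffices.
\item If $|x|$ is large and $|K(\theta^*)x|\ge c|x|$, the second bound gives a difference of at most $2rL_K\|\hat\theta-\theta^*\|/c$, which is uniform in $x$.
\end{itemize}
Away from the kernel of $K(\theta^*)$ the two cases combine into a global constant $D_{\mathrm{DB}}$.

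\emph{Main obstacle.} The obstacle is the set where $|K(\theta^*)x|\ll|x|$. Since $\mathcal{R}_*^\dagger$ is injective on the range of $A^\kappa$, we have $\ker K(\theta^*)=\ker A^\kappa$. For $x\in\ker A^\kappa$ with $|x|$ large, $\pi_{\mathrm{DB}}(x,\theta^*,C)=0$, whereas $K(\hat\theta)x=\hat{\mathcal R}^\dagger(\hat A^\kappa-A^\kappa)x$ can be of order $\|\hat\theta-\theta^*\||x|$ and therefore saturates at norm $r$. For example, take $A=0$ and $\hat A=\varepsilon I$. Then the difference equals $r$ for all sufficiently large $|x|$, which is not $O(\varepsilon)$.

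So I expect that the global statement as written can only be proved under an extra hypothesis, for instance $A$ invertible (equivalently $\ker A^\kappa=\{0\}$). Under that hypothesis $|K(\theta^*)x|\ge \sigma_{\min}(A)^\kappa|x|/\|\mathcal{R}_*\|$ for all $x$, and Steps 1--2 close the proof. My first attempt would be to check the precise hypotheses used in \cite[Lemma 1]{siriya2023stability}. If they include invertibility of $A$, or a weighted form of the bound such as $\min\{1,|x|\}$ times $\|\hat\theta-\theta^*\|$, I would adopt them and run the case split above. Otherwise I would restrict the claim to states satisfying $|A^\kappa x|\ge c|x|$, together with bounded $|x|$, and note that the remaining region must be handled separately in the stability analysis.
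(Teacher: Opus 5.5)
Your analysis is correct. The paper has no argument of its own to compare with: it only cites \cite[Lemma 1]{siriya2023stability}. Under Assumption~\ref{assump2} as stated here, which asks only $\|A\|\le 1$ plus reachability, the claim is false.

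\textbf{Your counterexample is valid.} Take $A=0$, $B$ of full row rank, $\hat B=B$ and $\hat A=\varepsilon I$. Then $\pi_{\mathrm{DB}}(x,\theta^*,C)=0$. On the other hand, $\pi_{\mathrm{DB}}(x,\hat\theta,C)=\operatorname{sat}_r(-\varepsilon^\kappa\hat{\mathcal R}^\dagger x)$, with $r=u_{\max}-C$, has norm exactly $r$ once $|x|\ge r\|\hat{\mathcal R}\|/\varepsilon^\kappa$. Meanwhile $\|\hat\theta-\theta^*\|=\varepsilon\to 0$, so no constant $D_{\mathrm{DB}}$ works.

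\textbf{The failure covers every singular $A$.} Pick $x\in\ker A$, $\hat B=B$ and $\hat A=A+\varepsilon I$. Then $\hat A^\kappa x=\varepsilon^\kappa x\neq 0$ while $A^\kappa x=0$. For small $\varepsilon$, $\hat{\mathcal R}^\dagger$ is injective, so the same saturation argument applies.

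\textbf{Your Steps 1--2 handle the nonsingular case.} Constant rank of $\mathcal R_\kappa(\hat A,\hat B)$ near $\theta^*$ gives a Lipschitz gain $K$. The estimate $|\operatorname{sat}_r(z_1)-\operatorname{sat}_r(z_2)|\le 2r|z_1-z_2|/|z_2|$ for $|z_2|\ge r$ is true. Together with $|K(\theta^*)x|\ge\sigma_{\min}(A)^\kappa|x|/\|\mathcal R_*\|$, these give $D_{\mathrm{DB}}=2rL_K\|\mathcal R_*\|/\sigma_{\min}(A)^\kappa$. Combining both directions, under Assumption~\ref{assump2} the lemma holds if and only if $A$ is nonsingular. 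So whatever hypotheses the cited source uses, the result cannot be imported under Assumption~\ref{assump2} alone.

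\textbf{One of your fallback options would not work.} A weighted bound $D_{\mathrm{DB}}\min\{1,|x|\}\,\|\hat\theta-\theta^*\|$ does not rescue the singular case. In the counterexample the discrepancy stays equal to $r$ as $|x|\to\infty$, while that bound stays at $D_{\mathrm{DB}}\varepsilon$. What does hold for every $A$ is $\min\{2r,\;L_K|x|\,\|\hat\theta-\theta^*\|\}$, which grows with $|x|$ until it caps at a non-vanishing level.

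\textbf{Why this matters downstream.} Lemma~\ref{DB_CE} places $D_{\mathrm{DB}}\|\hat\theta-\theta^*\|$ inside $E_{w2}$ as an $x$-uniform quantity that vanishes with the estimation error. The DB mode is exactly the one active for large states, including large states along $\ker A^\kappa$, which the unbounded noise can reach. So restricting the lemma to states with $|A^\kappa x|\ge c|x|$ would not repair Proposition~\ref{SLya1} or Theorem~\ref{stabilitytheorem}. They need either an extra hypothesis such as $\ker A=\{0\}$, or a separate treatment of the kernel directions in which the resulting non-vanishing perturbation is shown not to destroy condition \eqref{ineq:1}.
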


\begin{lemma}
    Consider system \eqref{sys1} under Assumption \ref{assump1} and the certainty-equivalence DB control policy $\pi_{\mathrm{DB}}(x,\hat{\theta},C)$, where $\|\hat{\theta}-\theta^*\| \leq \bar{\epsilon}_2$ with $\bar{\epsilon}_2$ being defined in Lemma \ref{lem:loballipschitzDB}. Consider optimal cost $Y_{\mathrm{DB}}(x)$ defined as \eqref{eq:optimalcostsat}, for all $x \in \mathbb{R}^n$ and any $\mu_1 > 0$, we have
    \begin{equation}
    Q\sigma(x) \le Y_{\mathrm{DB}}(x) \le \frac{\big(Q+\lambda_{\max}(R)\|\mathcal{R}_*^\dagger A^\kappa\|^2\big)\,\sigma(x)}{1-p^2}
    \label{eq:sandboundVDB_CE} 
    \end{equation}
    \begin{equation}
    \begin{aligned}
        &\mathbb{E}Y_{\mathrm{DB}}(g_2(x)) - Y_{\mathrm{DB}}(x)\leq -Q\sigma(x)+\sigma_{\mathrm{DB}}
\end{aligned} \label{eq:Y2decrease_CE}
    \end{equation}
    where 
    \begin{align*}
    &\sigma_{\mathrm{DB}}:=\sigma_{\mathrm{DB}}(x,\theta^*,\hat{\theta},Q,R,\bar{w},\bar{v},\mu_1) \\
        & = \left\{ \left[ \tfrac{(Q + \lambda_{\max}(R)\| \mathcal{R}_*^\dagger A^\kappa \|^2)  }{1-p^2} (E_{w2}+1)^2 -Q \right]m_2^2 \right\} \\
        & \quad \sigma(x) +  \tfrac{(Q + \lambda_{\max}(R)\| \mathcal{R}_*^\dagger A^\kappa \|^2)^2 }{Q\max(N-1,1)(1-p^2)^2}  \times \sigma(x) + \mu_1 \tfrac{(Q + \lambda_{\max}(R)\| \mathcal{R}_*^\dagger A^\kappa \|^2)   }{1-p^2} \\
        & \quad \times 8m^2_2\sigma(x) + \tfrac{(Q + \lambda_{\max}(R)\| \mathcal{R}_*^\dagger A^\kappa \|^2)   }{1-p^2} E_{w2}^2  \\
        & \quad + \tfrac{1}{\mu_1}\dfrac{(Q + \lambda_{\max}(R)\| \mathcal{R}_*^\dagger A^\kappa \|^2)   }{1-p^2}(E_{w2}^4+E_{w2}^2)
    \end{align*}
    with 
    \[
    \begin{aligned}
        g_2(x)&:= A^\kappa x+\mathcal{R}_*\pi_{\mathrm{DB}}(x,\hat{\theta},C) +\mathcal{R}_*\bar{v}+ \mathcal{R}_\kappa(A,I)\bar{w} \\
        E_{w2} &:= \exp\Bigl(D_{\mathrm{DB}}\, \|\mathcal{R}_*\| \|\hat{\theta} - \theta^*\| + |h|\Bigr) - 1,\\
        & \bar{w} \overset{d}{\sim} \bar{w}_\tau, \bar{v} \overset{d}{\sim} \bar{v}_\tau.
    \end{aligned}
    \]
    where $h, m_2$ are defined the same as in Lemma \ref{lem:MPCwithoutterminal_CE} and $p$ is defined in \eqref{eq:p}.
    
    \label{DB_CE}
\end{lemma}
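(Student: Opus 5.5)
The plan is to mirror the proof of Lemma~\ref{lem:MPCwithoutterminal_CE} almost line by line, with two substitutions: the MPC decrease \eqref{Y1decreasetrueparam} is replaced by the DB decrease \eqref{eq:onestepdecrease_sat} from Lemma~\ref{lem:lyasat}, and the semi-global Lipschitz bound of Lemma~\ref{localxlipschitzMPC} is replaced by the global one of Lemma~\ref{lem:loballipschitzDB}. Because the DB bound holds for all $x\in\mathbb{R}^n$, no compact set $\mathbb{X}_l$ is needed, and the result is global. The sandwich bound \eqref{eq:sandboundVDB_CE} is exactly \eqref{eq:sandboundVsat}, so it needs no new work.

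For the decrease condition \eqref{eq:Y2decrease_CE}, I would introduce the nominal DB map $\tilde g_2(x):=A^\kappa x+\mathcal{R}_*\pi_{\mathrm{DB}}(x,\theta^*,C)=g^{\mathrm{sat}}(x)$ and split
\[
\mathbb{E}Y_{\mathrm{DB}}(g_2(x))-Y_{\mathrm{DB}}(x)=\bigl(\mathbb{E}Y_{\mathrm{DB}}(g_2(x))-Y_{\mathrm{DB}}(\tilde g_2(x))\bigr)+\bigl(Y_{\mathrm{DB}}(\tilde g_2(x))-Y_{\mathrm{DB}}(x)\bigr).
\]
The second bracket is bounded by \eqref{eq:onestepdecrease_sat}, which gives $-Q\sigma(x)+\frac{(Q+\lambda_{\max}(R)\|\mathcal{R}_*^\dagger A^\kappa\|^2)^2}{Q(N-1)(1-p^2)^2}\sigma(x)$.

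For the first bracket, I would apply the upper sandwich bound to $\mathbb{E}Y_{\mathrm{DB}}(g_2(x))$ and the lower bound $Q\sigma(\tilde g_2(x))$ to $Y_{\mathrm{DB}}(\tilde g_2(x))$. To control $\mathbb{E}\sigma(g_2(x))$, I would use the triangle inequality together with Lemma~\ref{lem:loballipschitzDB}:
\[
|g_2(x)|\le|\tilde g_2(x)|+D_{\mathrm{DB}}\|\mathcal{R}_*\|\|\hat\theta-\theta^*\|+|\mathcal{R}_*\bar v|+|\mathcal{R}_\kappa(A,I)\bar w|.
\]
I would then factor the exponential, using independence of $\bar v$ and $\bar w$ so that $\mathbb{E}e^{|g_2|}$ is at most $e^{|\tilde g_2(x)|}(E_{w2}+1)$. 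Expanding $(e^{a}(E_{w2}+1)-1)^2$ gives the same six-term expression as \eqref{eq:g1_g1tilde}, with $E_{w1}$ replaced by $E_{w2}$.

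I expect the main obstacle to be exactly this step. One must pass from $\mathbb{E}[(e^{|g_2|}-1)^2]$ to a bound in terms of $\mathbb{E}e^{|\cdot|}$, which requires the moment generating function of $2|\cdot|$, not $|\cdot|$. Jensen's inequality alone points the wrong way for bounding an expectation of a convex function from above. I would follow the paper's convention, absorbing this into the definition of $h$ (interpreting it at the level of the second moment if needed), and flag it rather than resolve it, so that the constants match those of Lemma~\ref{lem:MPCwithoutterminal_CE}.

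Next I need $\sigma(\tilde g_2(x))\le m_2^2\sigma(x)$. This follows as in \eqref{eq:relationm2}: combine $Q\sigma(\tilde g_2(x))-\gamma Q\sigma(x)\le Y_{\mathrm{DB}}(\tilde g_2(x))-Y_{\mathrm{DB}}(x)$ with \eqref{eq:onestepdecrease_sat} and rearrange. Equivalently, one can use the sharper bound $\sigma(g^{\mathrm{sat}}(x))\le p^2\sigma(x)$ from \eqref{eq:sigma_decreasebound}, though I would keep $m_2$ for uniformity. This also yields $e^{|\tilde g_2(x)|}-1\le m_2(e^{|x|}-1)$.

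Finally, I would substitute these bounds, bound the cross terms $2m_2(e^{|x|}-1)(E_{w2}+E_{w2}^2)$ by Young's inequality with parameter $\mu_1$ as before, and collect terms. This yields $\sigma_{\mathrm{DB}}$ as stated.
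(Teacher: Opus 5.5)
Your plan is the paper's plan. The paper proves this lemma in one line by rerunning the proof of Lemma~\ref{lem:MPCwithoutterminal_CE} with Lemmas~\ref{lem:loballipschitzDB} and~\ref{lem:lyasat} in place of the MPC ingredients. Your split through $\tilde g_2=g^{\mathrm{sat}}$, the use of \eqref{eq:onestepdecrease_sat}, the bound $\sigma(\tilde g_2(x))\le m_2^2\sigma(x)$ and the Young step all match it. When you substitute, first group the terms as $[\gamma Q(1+E_{w2})^2-Q]\,\sigma(\tilde g_2(x))$; this coefficient is nonnegative since $\gamma\ge1$, whereas inserting $m_2^2\sigma(x)$ into $-Q\sigma(\tilde g_2(x))$ alone would go the wrong way.

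The step you flag, however, is a genuine gap. It is present in the paper's \eqref{eq:g1_g1tilde} as well, so deferring to the paper's convention does not close it. Put $a:=|\tilde g_2(x)|+D_{\mathrm{DB}}\|\mathcal{R}_*\|\|\hat\theta-\theta^*\|$ and $Z:=|\mathcal{R}_*\bar v|+|\mathcal{R}_\kappa(A,I)\bar w|$. By independence $e^{|h|}=\mathbb{E}e^{Z}$, so Jensen gives $\mathbb{E}(e^{a+Z}-1)^2\ge(e^{a}\mathbb{E}e^{Z}-1)^2=\sigma(a+|h|)$. Thus the natural chain overshoots, and the bound $\mathbb{E}\sigma(g_2(x))\le\sigma(a+|h|)$ itself fails. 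At $x=0$, $\hat\theta=\theta^*$, its two sides tend, as $C\to0$, to $\mathbb{E}(e^{|W|}-1)^2$ and $(\mathbb{E}e^{|W|}-1)^2$, where $W:=\mathcal{R}_\kappa(A,I)\bar w$. The first strictly exceeds the second whenever $|W|$ is not almost surely constant (e.g.\ Gaussian $w_t$), so the bound fails for small $C$. Your suggested second-moment reinterpretation does not rescue it as it stands. With $h_2:=\tfrac12\ln\mathbb{E}e^{2Z}$ one gets $\mathbb{E}(e^{a+Z}-1)^2-\sigma(a+h_2)=2e^{a}\bigl((\mathbb{E}e^{2Z})^{1/2}-\mathbb{E}e^{Z}\bigr)\ge0$, again the wrong direction.

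The missing idea is to separate the deterministic and random parts, $e^{a+Z}-1=e^{a}(e^{Z}-1)+(e^{a}-1)$, keep the second moment of $e^{Z}-1$, and use Cauchy--Schwarz only on the cross term. Let $s:=\bigl(\mathbb{E}[(e^{Z}-1)^2]\bigr)^{1/2}$, which is finite because $\bar v$ is bounded and $\bar w$ is sub-Gaussian. Monotonicity of $r\mapsto(e^r-1)^2$ on $r\ge0$, together with $e^a-1\ge0$ and $\mathbb{E}(e^Z-1)\le s$, gives
\[
\mathbb{E}\sigma(g_2(x))\le e^{2a}s^2+2e^{a}(e^{a}-1)s+(e^{a}-1)^2=\bigl(e^{a}(1+s)-1\bigr)^2=\sigma\bigl(a+\ln(1+s)\bigr).
\]
This is exactly the form used in \eqref{eq:g1_g1tilde} with $|h|$ replaced by $\ln(1+s)$, i.e.\ $E_{w2}:=e^{D_{\mathrm{DB}}\|\mathcal{R}_*\|\|\hat\theta-\theta^*\|}(1+s)-1$. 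With this redefinition, and the analogous one for $E_{w1}$ in Lemma~\ref{lem:MPCwithoutterminal_CE}, the six-term expansion and every later step of your proposal go through verbatim and give $\sigma_{\mathrm{DB}}$ in the stated form. With $h$ as literally defined in the statement, neither your argument nor the paper's establishes it.
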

\begin{proof}
    The proof follows the same step as in Lemma \ref{lem:MPCwithoutterminal_CE}, by using results from Lemmas \ref{lem:loballipschitzDB} and \ref{lem:lyasat}.
\end{proof}

Based on the above two lemmas, we next prove the existence of Convex-Decay Stochastic ISS Function of the closed-loop system. We will need Lemma \ref{lem:levelsetuncertainty} below.
\begin{lemma}
\label{lem:levelsetuncertainty}
    For each $\frac{1}{2}c_3 < c_4 < c_{\mathrm{MPC}}, u_{\max},Q,R,N$, there exists $\bar{\epsilon}_3$ such that if $\|\hat{\theta}-\theta^*\| \leq  \bar{\epsilon}_3$, we have
    \begin{align*}
        Y_{\mathrm{MPC}}(\hat{\theta},\bar{x}_\tau) \geq c_{\mathrm{MPC}} &\;\Rightarrow\; Y_{\mathrm{MPC}}(\bar{x}_\tau) \geq c_{\mathrm{MPC}} - c_4,\\
        Y_{\mathrm{MPC}}(\hat{\theta},\bar{x}_\tau) \leq c_{\mathrm{MPC}}-c_3 &\;\Rightarrow\; Y_{\mathrm{MPC}}(\bar{x}_\tau) \leq c_{\mathrm{MPC}} - c_3+c_4.
    \end{align*}

\end{lemma}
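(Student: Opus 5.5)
The plan is to show that the map $\hat\theta \mapsto Y_{\mathrm{MPC}}(\hat\theta,x)$ is continuous at $\theta^*$ uniformly on the relevant sublevel region, with a modulus independent of $x$. Both implications then follow with slack $c_4$, since $c_4>0$ lets us absorb any perturbation of size below $c_4$. The difficulty is that the state space is unbounded. However, both implications only concern states where the value function is of order $c_{\mathrm{MPC}}$, so I first localize to a compact set.

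\textbf{Step 1 (localization).} By \eqref{eq:sandboundVMPC}, $Y_{\mathrm{MPC}}(x)\ge Q\sigma(x)$. I then need an analogous CE lower bound: for $\|\hat\theta-\theta^*\|\le \bar\epsilon$, $Y_{\mathrm{MPC}}(\hat\theta,x)\ge Q\sigma(x)$ as well, since the first stage cost is $Q\sigma(x)$ regardless of the model. I also need a CE upper bound $Y_{\mathrm{MPC}}(\hat\theta,x)\le Y_{\mathrm{DB}}(\hat\theta,x)\le \hat\gamma Q\sigma(x)$, obtained by repeating the argument of Lemma~\ref{lem:lyasat} for the estimated pair. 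For $\bar\epsilon$ small, $\sigma_{\min}(\mathcal{R}_\kappa(\hat A,\hat B))\ge \underline\sigma_B/2$ and $\|\hat A\|^\kappa$ is close to $\|A\|^\kappa\le 1$. Where that fails, I simply bound the $N$-step DB cost crudely by a continuous function of $(|x|,\hat\theta)$.

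With these bounds, the first implication reduces to a contrapositive. If $Y_{\mathrm{MPC}}(x)<c_{\mathrm{MPC}}-c_4$, then $Q\sigma(x)<c_{\mathrm{MPC}}$, so $|x|\le \bar x_s:=\ln(1+\sqrt{c_{\mathrm{MPC}}/Q})$. For the second implication, the hypothesis gives $Q\sigma(x)\le Y_{\mathrm{MPC}}(\hat\theta,x)\le c_{\mathrm{MPC}}$, so again $|x|\le\bar x_s$. Hence it suffices to prove
\[
\sup_{|x|\le \bar x_s}\bigl|Y_{\mathrm{MPC}}(\hat\theta,x)-Y_{\mathrm{MPC}}(x)\bigr|<c_4
\]
whenever $\|\hat\theta-\theta^*\|\le\bar\epsilon_3$.

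\textbf{Step 2 (uniform continuity on the compact set).} The feasible input set $\mathbb{U}^N_C:=\{\|\hat u_i\|_\infty\le u_{\max}-C\}$ is compact and independent of $\hat\theta$, which is the key simplification. Define $J_N(x,\mathbf u,\theta)$ as the finite-horizon cost of \eqref{MPCProblem} along the predicted trajectory of $(\hat A^\kappa,\mathcal{R}_\kappa(\hat A,\hat B))$. Each predicted state is a polynomial in $(x,\mathbf u,\theta)$, and $\ell,\ell_f$ are continuous. Therefore $J_N$ is continuous, hence uniformly continuous on the compact set
\[
\{|x|\le\bar x_s\}\times\mathbb{U}^N_C\times\{\|\theta-\theta^*\|\le 1\}.
\]
Then
\[
\bigl|\min_{\mathbf u}J_N(x,\mathbf u,\hat\theta)-\min_{\mathbf u}J_N(x,\mathbf u,\theta^*)\bigr|\le\sup_{\mathbf u}\bigl|J_N(x,\mathbf u,\hat\theta)-J_N(x,\mathbf u,\theta^*)\bigr|,
\]
and by uniform continuity there is $\bar\epsilon_3\le 1$ making the right-hand side less than $c_4$ for all such $x$. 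If desired, an explicit version comes from Lipschitz bounds: predicted states are bounded by $M(\bar x_s)$, $\sigma$ is Lipschitz with constant $2e^{M}(e^{M}-1)$ on that ball, and the difference of predicted states is bounded by a constant times $\|\hat\theta-\theta^*\|$.

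\textbf{Step 3 (conclude).} On $|x|\le\bar x_s$ we have $Y_{\mathrm{MPC}}(x)>Y_{\mathrm{MPC}}(\hat\theta,x)-c_4$ and $Y_{\mathrm{MPC}}(x)<Y_{\mathrm{MPC}}(\hat\theta,x)+c_4$. These two inequalities give both implications directly. For the first implication, the contrapositive from Step 1 covers any $x$ with $|x|>\bar x_s$, since there $Y_{\mathrm{MPC}}(x)\ge Q\sigma(x)\ge c_{\mathrm{MPC}}\ge c_{\mathrm{MPC}}-c_4$ automatically. The condition $c_4>\tfrac12 c_3$ is not needed for this lemma; it is only used downstream.

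\textbf{Main obstacle.} The step I expect to be most delicate is justifying the CE lower bound on the correct set in the second implication. Fortunately it is trivial: $\ell(x,\hat u_0)\ge Q\sigma(x)$ for every model, so no estimate-dependent bound is needed there. Thus the real work is only the uniform continuity in Step 2, which relies on compactness of the input set.
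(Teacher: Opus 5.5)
Your proof is correct, and its skeleton matches the paper's. Both arguments localize to $|x|\le \ln(1+\sqrt{c_{\mathrm{MPC}}/Q})$ using the model-independent bound $Y_{\mathrm{MPC}}(\hat\theta,x)\ge Q\sigma(x)$, which also makes the first implication automatic outside that ball. Both then apply a uniform-in-$x$ continuity estimate in $\theta$ on the ball, with slack $c_4$.

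The difference is in how that continuity estimate is obtained. The paper invokes Lemma~\ref{lem:Y1locallipschitz}. That lemma asserts the value function is $C^1$ because the cost is, and then applies the mean value theorem. For a constrained minimization, this inference needs extra justification, e.g.\ Danskin's theorem together with uniqueness and continuity of the minimizer.

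You instead use that the feasible set $\{\|\hat u_i\|_\infty\le u_{\max}-C\}$ is compact and independent of $\hat\theta$. This gives $|\min_{\mathbf u}J_N(x,\mathbf u,\hat\theta)-\min_{\mathbf u}J_N(x,\mathbf u,\theta^*)|\le\sup_{\mathbf u}|J_N(x,\mathbf u,\hat\theta)-J_N(x,\mathbf u,\theta^*)|$. After that you only need continuity, or in your explicit variant local Lipschitz continuity, of the open-loop cost on a compact set, which is immediate.

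Your route is more elementary and more robust. It needs no convexity, uniqueness, or differentiability of the value function. Your Lipschitz variant in fact gives a sound proof of Lemma~\ref{lem:Y1locallipschitz} itself. The paper's route, once repaired, yields an explicit threshold $\bar\epsilon_3=c_4/L$. Your modulus-of-continuity version gives only existence, which is all the lemma claims.

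One cleanup: drop the CE upper bound $Y_{\mathrm{MPC}}(\hat\theta,x)\le\hat\gamma Q\sigma(x)$ from Step 1. It is never used, since both localizations rely only on the lower bound. As sketched it also would not hold globally with a constant $\hat\gamma$, because the contraction in Lemma~\ref{lem:lyasat} relies on $\|A\|\le 1$, which $\hat A$ need not satisfy. You are right that the hypothesis $c_4>\tfrac12 c_3$ enters here only through $c_4>0$.
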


Now we show the existence of Convex-Decay Stochastic ISS Function.
\begin{proposition} (Existence of Convex-Decay Stochastic ISS Function)
\label{SLya1}
    Consider system (\ref{sys1_subsampled}) under the certainty-equivalence control law $\alpha(x,v,\hat{\theta})$ as defined in (\ref{controlall}). Select $c_{\mathrm{MPC}},c_{\mathrm{DB}},c_3$ according to \eqref{eq:hyswitchlevelsetconstants}, then there exists a continuously differentiable function $\rho \in \mathcal{K}_\infty$ such that, defining $\tilde{Y}_1:=\rho \circ Y_{\mathrm{MPC}}, \tilde{Y}_2:= Y_{\mathrm{DB}}$, for each $\delta \in (0,1)$, the function
    \[
    V(x,s_\tau) := \tilde{Y}_{s_\tau}(x)
    \]
    is a Convex-Decay Stochastic ISS Function for the closed-loop system if the following conditions hold:
    \begin{flalign}
    & q_3^{\max} \left[ \tfrac{(Q + \lambda_{\max}(R)\| \mathcal{R}_*^\dagger A^\kappa \|^2)  }{1-p^2} (1 + E_w)^2 -Q \right]m_2^2 \nonumber \\
    & \quad - q_3^{\min}Q + q_3^{\max}\tfrac{(Q + \lambda_{\max}(R)\| \mathcal{R}_*^\dagger A^\kappa \|^2)^2 }{Q\max(N-1,1)(1-p^2)^2} <0 \label{ineq:1} \\
    &\|\hat{\theta}-\theta^*\| \leq  \min\{\bar{\epsilon}_1, \bar{\epsilon}_2, \bar{\epsilon}_3 \} \label{ineq:2} 
    \end{flalign}
    where
    \begin{equation}
    \begin{split}
        E_w &:= \exp\Bigl((D_{\mathrm{DB}}+D_{\mathrm{MPC}}(\bar{x}_s) \|\mathcal{R}_*\| \|\hat{\theta} - \theta^*\| \\
        & \quad + |\ln\left[ \mathbb{E} e^{|\mathcal{R}_* \bar{v}_\tau|}\right] +\ln\left[ \mathbb{E} e^{| \mathcal{R}_\kappa(A,I) \bar{w}_\tau|} \right]|\Bigr) - 1,\\
        \bar{x}_s &= \ln \left( \sqrt{\frac{c_{\mathrm{MPC}}}{Q}}+1  \right).
    \end{split}
    \end{equation}
    The Lipschitz constants $D_{\mathrm{DB}},D_{\mathrm{MPC}}(\bar{x}_s)$ are from Lemmas \ref{lem:loballipschitzDB} and \ref{localxlipschitzMPC} respectively. The estimation error thresholds $\bar{\epsilon}_1, \bar{\epsilon}_2, \bar{\epsilon}_3$ are requirements inherited from Lemmas \ref{localxlipschitzMPC}, \ref{lem:loballipschitzDB} and \ref{lem:levelsetuncertainty}. And $\gamma,q_2,q_3^{\max},q_3^{\min}$ are defined the same as in Theorem \ref{stabilitytheorem}.
\end{proposition}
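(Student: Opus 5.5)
We argue by the standard hysteresis-switching construction of \cite{nesic2004framework}, treating the mode $s_\tau$ as part of the state. The three items of Definition~\ref{SLF} are checked mode by mode and across switches. Throughout, fix $\|\hat\theta-\theta^*\|\le\min\{\bar\epsilon_1,\bar\epsilon_2,\bar\epsilon_3\}$ as in \eqref{ineq:2}, so that Lemmas~\ref{localxlipschitzMPC}, \ref{lem:loballipschitzDB} and \ref{lem:levelsetuncertainty} all apply.

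\textbf{Step 1: localization of the MPC mode.} By \eqref{eq:switch}, the MPC mode is kept only while $Y_{\mathrm{MPC}}(\hat\theta,\bar x_\tau)<c_{\mathrm{MPC}}$. Lemma~\ref{lem:levelsetuncertainty} then gives $Y_{\mathrm{MPC}}(\bar x_\tau)< c_{\mathrm{MPC}}+c_4\le 2c_{\mathrm{MPC}}$ for the true-model value function. With \eqref{eq:sandboundVMPC} this yields $Q\sigma(\bar x_\tau)\le 2c_{\mathrm{MPC}}$, i.e. $|\bar x_\tau|$ is bounded by a constant of the form $\bar x_s=\ln(\sqrt{c_{\mathrm{MPC}}/Q}+1)$ (enlarged if necessary). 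Consequently Lemma~\ref{lem:MPCwithoutterminal_CE} applies with a fixed $D_{\mathrm{MPC}}(\bar x_s)$. Since $\sigma(x)\ge|x|^2$ and $\sigma$ grows exponentially, the MPC mode lives on a compact set, while the DB mode uses the globally valid Lemma~\ref{DB_CE}.

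\textbf{Step 2: choice of $\rho$ and sandwich bounds.} We take $\rho$ linear on $[0,c_{\mathrm{MPC}}]$ with slope chosen so that $\rho\circ Y_{\mathrm{MPC}}$ and $Y_{\mathrm{DB}}$ are ordered at the switching levels. Concretely, at a switch MPC$\to$DB (where $Y_{\mathrm{MPC}}\ge c_{\mathrm{MPC}}-c_4$) we require $Y_{\mathrm{DB}}\le \gamma Q\sigma\le \gamma\, Y_{\mathrm{MPC}}\cdot$const. At a switch DB$\to$MPC (where $Y_{\mathrm{DB}}<c_{\mathrm{DB}}$) we require $\rho(Y_{\mathrm{MPC}})\le$ a multiple of $Y_{\mathrm{DB}}$. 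Using $c_{\mathrm{MPC}}-c_3\ge c_{\mathrm{DB}}$ and $c_4>c_3/2$, the slope $q_2=\max\{4\gamma c_{\mathrm{MPC}}/c_3,1/(2\gamma),2\gamma\}$ makes the jump of $V$ at a switch at most a multiplicative factor absorbed into $q_3^{\max}$. Beyond $c_{\mathrm{MPC}}$ we extend $\rho$ in a $C^1$, $\mathcal K_\infty$ fashion; this part is irrelevant because MPC is inactive there. Then $\alpha_1=q_3^{\min}Q\sigma(\cdot)$ and $\alpha_2=q_3^{\max}\gamma Q\sigma(\cdot)$ bound both branches.

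\textbf{Step 3: decrease.} Condition on $\bar x_\tau$ and the mode, and split into four cases: stay in MPC, stay in DB, and the two switches. In the non-switching cases, \eqref{eq:Y1decrease_CE} (with $\rho$ linear, so no convexity loss) and \eqref{eq:Y2decrease_CE} give $\mathbb E V^+-V\le -Q\sigma(x)+\sigma_{\mathrm{MPC/DB}}$. We bound $E_{w1},E_{w2}\le E_w$, separate $E_w$ into the $\tilde\theta$-independent part and a remainder $\sigma_3(|\tilde\theta|)$ (which vanishes at $0$ by continuity of $\exp$), and pick $\mu_1$ small so that the $8\mu_1 m_2^2\sigma$ term is dominated. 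The constant pieces $E_w^2$ and $E_w^4/\mu_1$ go into $\tilde d$. In the switching cases we compare $V$ at the new mode with $V$ at the old mode via Step 2, which multiplies the coefficients by at most $q_3^{\max}$ on the growth terms and $q_3^{\min}$ on the decay term. Condition \eqref{ineq:1} then makes the net coefficient of $\sigma(x)$ strictly negative, so $\alpha_3=cQ\sigma$. Since $\alpha_3\circ\alpha_2^{-1}$ is linear, it is convex.

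\textbf{Step 4: RPI.} We take $\mathcal X_{\mathrm{RPI}}=\mathbb R^n\times\{\mathrm{MPC},\mathrm{DB}\}$, which is trivially invariant.

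The main obstacle is Step 2/3 at switching instants. The mode decided by $\hat\theta$ must be translated, through Lemma~\ref{lem:levelsetuncertainty}, into statements about true value functions, and a $\rho$ must be found so that the jump in $V$ is controlled uniformly. We would first try a piecewise-linear $\rho$ smoothed near $c_{\mathrm{MPC}}$, and then check that the constants produced are exactly those entering $q_2,q_3^{\min},q_3^{\max}$.
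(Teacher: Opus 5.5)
There is a genuine gap in Steps~2--3. A $\rho$ that is linear on $[0,c_{\mathrm{MPC}}]$ cannot work, and the upward jump it creates is not ``a multiplicative factor absorbed into $q_3^{\max}$''. From the sandwich bounds $Q\sigma\le Y_{\mathrm{MPC}}\le Y_{\mathrm{DB}}\le\gamma Q\sigma$ alone, a single slope $k$ makes the DB$\to$MPC switch term $kY_{\mathrm{MPC}}-Y_{\mathrm{DB}}$ nonpositive only if $k\le 1$. It makes the MPC$\to$DB term $Y_{\mathrm{DB}}-kY_{\mathrm{MPC}}$ nonpositive only if $k\ge\gamma>1$. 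With your slope $q_2$, at a DB$\to$MPC switch
\[
\mathbb{E}V^+-V\le q_2\bigl(-Q\sigma(x)+\sigma_{\mathrm{MPC}}\bigr)+\bigl[q_2Y_{\mathrm{MPC}}(x)-Y_{\mathrm{DB}}(x)\bigr],
\]
and the bracket is at least $(q_2-\gamma)Q\sigma(x)>0$, because $Y_{\mathrm{DB}}\le\gamma Q\sigma\le\gamma Y_{\mathrm{MPC}}$ and $q_2\ge 2\gamma$. This is an additive positive multiple of $\sigma(x)$ with no counterpart in \eqref{ineq:1}. Offsetting it against $-q_2Q\sigma(x)$ needs an extra condition of the type $q_2(\gamma-1)<\gamma$, which \eqref{ineq:1} does not imply. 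For small $E_w$ and large $N$, \eqref{ineq:1} reduces to roughly $q_3^{\max}(\gamma-1)^2<q_3^{\min}$ (cf.\ the proof of Corollary~\ref{corollarysufficientcondition}), which is quadratic in $\gamma-1$. Meanwhile $q_2\ge 4\gamma c_{\mathrm{MPC}}/c_3$ can be made arbitrarily large by taking $c_3$ small.

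The missing idea is Lemma~2 of \cite{nesic2004framework}, which the paper uses. It supplies a convex $C^1$ function $\rho$ with $\rho(s)\le\frac{2}{3\gamma}s$ on the levels $s\le c_{\mathrm{MPC}}-c_3+c_4$ that can occur at a DB$\to$MPC switch, and $\rho(s)\ge 2\gamma s$ on the levels $s\ge c_{\mathrm{MPC}}-c_4$ that can occur at an MPC$\to$DB switch. Its slope is $\frac{1}{2\gamma}$, then a ramp, then $q_2$. The ramp needs these two ranges to be disjoint, i.e.\ $2c_4<c_3$, as the denominator $c_3-2c_4$ in the paper's formula for $\rho'$ shows. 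With this $\rho$, both switch terms are decreases: $\rho(Y_{\mathrm{MPC}})-Y_{\mathrm{DB}}\le-\frac{1}{3\gamma}Q\sigma(x)$ and $Y_{\mathrm{DB}}-\rho(Y_{\mathrm{MPC}})\le-\frac12\rho(Y_{\mathrm{MPC}})$. Hence only the rescalings $q_3^{\min},q_3^{\max}$ survive. In particular, the steep part of $\rho$ must already be in force at level $c_{\mathrm{MPC}}-c_4$, inside the hysteresis band. It cannot be added above $c_{\mathrm{MPC}}$, as your closing remark suggests.

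Second, the part of $\rho$ above $c_{\mathrm{MPC}}$ is not irrelevant. $V(\bar x_\tau,s_\tau)$ is evaluated with the \emph{old} mode. Unbounded noise can carry the state, from a step where MPC was applied, to an arbitrarily high level of $Y_{\mathrm{MPC}}$ while $s_\tau=\mathrm{MPC}$. Then $V=\rho(Y_{\mathrm{MPC}}(\bar x_\tau))$ is exactly the pre-switch value in the MPC$\to$DB case, which is where $\rho(s)\ge 2\gamma s$ is needed. Likewise, in the stay-in-MPC case, $\mathbb{E}\,\rho(Y_{\mathrm{MPC}}(g_1(\bar x_\tau)))$ integrates $\rho$ over all levels, because $g_1(\bar x_\tau)$ has unbounded support. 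So ``$\rho$ linear, no convexity loss'' is false. The paper instead uses $\rho(s_1)\le\rho(s_2)+\rho'(s_1)(s_1-s_2)$ together with $\frac{1}{2\gamma}\le\rho'\le q_2$, and this is precisely where $q_2$, $q_3^{\min}$ and $q_3^{\max}$ enter.

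The remaining pieces of your outline match the paper: the four-case split, $E_{w1},E_{w2}\le E_w$, small $\mu_1$, linear $\alpha_3\circ\alpha_2^{-1}$, and $\mathcal{X}_{\mathrm{RPI}}=\mathbb{R}^n$. Your Step~1 detour is unnecessary, since $Q\sigma(\bar x_\tau)\le Y_{\mathrm{MPC}}(\hat\theta,\bar x_\tau)<c_{\mathrm{MPC}}$ gives the stated $\bar x_s$ directly, without enlarging it.
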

\begin{proof}
Firstly, we notice that condition \eqref{ineq:2} ensures that the premises of Lemmas \ref{localxlipschitzMPC}, \ref{lem:loballipschitzDB} and \ref{lem:levelsetuncertainty} all hold.

We will be using the following lemma for the proof:
\begin{lemma} [Lemma 2 of \cite{nesic2004framework}]
Given strictly positive real numbers $c_3, c_4$ and $c_{\text{MPC}}$ satisfying 
$c_3 < c_{\mathrm{MPC}}, \frac{1}{2}c_3 < c_4 < c_{\mathrm{MPC}}$ and class-$\mathcal{K}_\infty$ functions 
$\underline{\kappa}, \, \bar{\kappa}$, there exists a continuously differentiable 
function $\rho \in \mathcal{K}_\infty$, with its derivative $\rho'(s):=\frac{d\rho(s)}{ds}$ nondecreasing, such that
\begin{align}
    \rho(s) &\leq \underline{\kappa}(s) && \forall s \in [0, c_{\mathrm{MPC}} - c_3+c_4], \label{eq:lemma2_1}\\
    \bar{\kappa}(s) &\leq \rho(s) && \forall s \in [c_{\mathrm{MPC}}-c_4, \infty). \label{eq:lemma2_2}
\end{align} \label{lem:rho_nesic}
\end{lemma}

Continuing the proof of the proposition, we define
\begin{equation}
\underline{\kappa}(s):= \frac{2}{3 \gamma} s, \bar{\kappa}(s):= 2 \,\gamma \, s
\label{eq:kappadefinition}
\end{equation}
so that due to the sandwich bounds of $Y_{\mathrm{MPC}}, Y_{\mathrm{DB}}$ in \eqref{eq:sandboundVMPC} and \eqref{eq:sandboundVsat}, we have
\begin{equation}
\begin{split}
    \frac{3}{2}\underline{\kappa}(Y_{\mathrm{MPC}}(x))\leq Y_{\mathrm{DB}}(x) \leq \frac{1}{2}\bar{\kappa}(Y_{\mathrm{MPC}}(x)). 
\end{split}
\label{eq:kappainequality}
\end{equation}
Following the construction procedure of Lemma 2 of \cite{nesic2004framework}, we can construct $\rho$ satisfying Lemma \ref{lem:rho_nesic} with the following properties:
\begin{equation}
\begin{aligned}
    \frac{d \rho}{ds}(s) &=
\begin{cases}
\frac{1}{2\gamma}, s \in [0, c_{\mathrm{MPC}} - c_3+c_4] \\
q_{2}, \,s \in [c_{\mathrm{MPC}}-c_4, \infty) \\
\tfrac{q_{2} - \frac{1}{2\gamma}}{c_3-2c_4} 
(s - c_{\mathrm{MPC}} + c_3-c_4) + \frac{1}{2\gamma}, 
\text{else}
\end{cases} \\
q_2 &= \max\left\{\frac{4\, \gamma \, c_{\mathrm{MPC}}}{c_3}, \frac{1}{2 \gamma}, 2\gamma   \right\} \geq 2
\end{aligned}
\label{eq:rhodefinition}
\end{equation}
where those three conditions do not overlap due to the premises of Lemma \ref{lem:rho_nesic}. It is easily seen that $\rho$ is a convex function with non-decreasing derivative, so for any $s_1, s_2 \geq 0$, we have
\begin{equation}
    \rho(s_1) \leq \rho(s_2) + \rho'(s_1) (s_1-s_2),
    \label{eq:rhoconvexinequality}
\end{equation}
and also 
\begin{equation}
    \frac{1}{2\gamma} \leq \rho'(s) \leq q_2.
    \label{eq:rhoderivativebounds}
\end{equation}
Bsed on the above properties, the sandwich bounds for $V(x,s)$ is therefore trivial, by combining \eqref{eq:rhoderivativebounds}, \eqref{eq:sandboundVMPC} and \eqref{eq:sandboundVsat}:
\begin{equation}
    \underbrace{\frac{1}{2 \gamma} Q\sigma(x)}_{\alpha_1(|x|)}\leq V(x,s) \leq \underbrace{q_2 \gamma Q \sigma(x)}_{\alpha_2(|x|)}.
\end{equation}
Denote the following $\kappa$-step evolutions of the system under MPC and deadbeat controller as
\[
\begin{aligned}
g_1(\bar{x}_\tau)&:= A^\kappa \bar{x}_\tau+\mathcal{R}_*\pi_{\mathrm{MPC}}(\bar{x}_\tau,\hat{\theta})+\mathcal{R}_* \bar{v}_\tau+ \mathcal{R}_\kappa(A,I) \bar{w}_\tau \\
g_2(\bar{x}_\tau)&:= A^\kappa \bar{x}_\tau+\mathcal{R}_*\pi_{\mathrm{DB}}(\bar{x}_\tau,\hat{\theta},C)+\mathcal{R}_* \bar{v}_\tau+ \mathcal{R}_\kappa(A,I) \bar{w}_\tau \\
\end{aligned}
\]
respectively, and denote the overall closed-loop system as 
\[
g(\bar{x}_\tau):= A^\kappa\bar{x}_\tau  + \mathcal{R}_* \alpha(\bar{x}_\tau,\bar{v}_\tau,\hat{\theta})+\mathcal{R}_\kappa(A,I) \bar{w}_\tau.
\]
We now consider the following four cases of switching:

(1) $Y_{\mathrm{MPC}}(\hat{\theta},\bar{x}_\tau) \geq c_{\mathrm{MPC}}, s_\tau \neq \text{DB}$. We have from Lemma \ref{lem:levelsetuncertainty} that this leads to $Y_{\mathrm{MPC}}(\bar{x}_\tau)\geq c_{\mathrm{MPC}}-c_4$. This case is when the system is about to escape to $s_{\tau+1}=\text{DB}$ with $s_\tau=\text{MPC}$. We have
\[
\begin{aligned}
    & \mathbb{E}[V(g(\bar{x}_\tau),s_{\tau+1})] - V(\bar{x}_\tau,s_\tau) \\
    & = \mathbb{E} [Y_{\mathrm{DB}}(g_2(\bar{x}_\tau))] - \rho(Y_{\mathrm{MPC}}(\bar{x}_\tau)) \\
    & \overset{\eqref{eq:Y2decrease_CE}}{\leq} Y_{\mathrm{DB}}(\bar{x}_\tau)-Q\sigma(\bar{x}_\tau)- \rho(Y_{\mathrm{MPC}}(\bar{x}_\tau))+ \sigma_{\mathrm{DB}}\\
    & \overset{\eqref{eq:kappainequality}}{\leq} \frac{1}{2}\bar{\kappa}(Y_{\mathrm{MPC}}(\bar{x}_\tau)) -Q\sigma(\bar{x}_\tau)- \rho(Y_{\mathrm{MPC}}(\bar{x}_\tau))+ \sigma_{\mathrm{DB}}\\
    & \overset{\eqref{eq:lemma2_2}}{\leq} -\frac{1}{2}\rho(Y_{\mathrm{MPC}}(\bar{x}_\tau))-Q\sigma(\bar{x}_\tau) + \sigma_{\mathrm{DB}}\\
    & \overset{\eqref{eq:sandboundVMPC}}{\leq} -\frac{1}{2}\rho \circ (Q\sigma(\bar{x}_\tau))-Q\sigma(\bar{x}_\tau) + \sigma_{\mathrm{DB}} \\
    & \overset{\eqref{eq:rhodefinition}}{\leq} -\left(1+\frac{1}{4 \gamma} \right)Q\sigma(\bar{x}_\tau) + \sigma_{\mathrm{DB}}\\
    & \leq \left(1+\frac{1}{4 \gamma} \right)\big[-Q\sigma(\bar{x}_\tau)+ \sigma_{\mathrm{DB}}\big].
\end{aligned}
\]
where the last inequality uses the fact that $1+\frac{1}{4 \gamma}>1$ and $\sigma_{\mathrm{DB}}>0$.

(2) $Y_{\mathrm{DB}}(\hat{\theta},x_\tau) \leq c_{\mathrm{DB}}, s_\tau \neq \text{MPC}$. Due to the optimality of MPC, we have $Y_{\mathrm{MPC}}(\hat{\theta},x_\tau) \leq Y_{\mathrm{DB}}(\hat{\theta},x_\tau)$. In addition to condition \eqref{eq:hyswitchlevelsetconstants}, we have
    \[
        Y_{\mathrm{DB}}(\hat{\theta},x_\tau) \leq c_{\mathrm{DB}} \;\Rightarrow\; Y_{\mathrm{MPC}}(\hat{\theta},x_\tau) \leq c_{\mathrm{MPC}}-c_3,
    \]
which, according to Lemma \ref{lem:levelsetuncertainty}, implies that 
\[
Y_{\mathrm{MPC}}(\bar{x}_\tau) \leq c_{\mathrm{MPC}}-c_3 + c_4
\]
and so $\rho(s) = \frac{1}{2\gamma}s$ according to \eqref{eq:rhodefinition}, which is linear. Using this fact, we have
\begin{equation}
\begin{split}
    & \mathbb{E}[V(g(\bar{x}_\tau),s_{\tau+1})] - V(\bar{x}_\tau,s_\tau) \\
    & = \mathbb{E} [\rho \circ Y_{\mathrm{MPC}}(g_1(\bar{x}_\tau))] - Y_{\mathrm{DB}}(\bar{x}_\tau) \\
    & \overset{\eqref{eq:Y1decrease_CE}}{\leq} \rho\circ (Y_{\mathrm{MPC}}(\bar{x}_\tau)-Q\sigma(\bar{x}_\tau) + \sigma_{\mathrm{MPC}})- Y_{\mathrm{DB}}(\bar{x}_\tau)  \\
    & \overset{\eqref{eq:rhoconvexinequality}}{\leq} \rho\circ (Y_{\mathrm{MPC}}(\bar{x}_\tau)) - Y_{\mathrm{DB}}(\bar{x}_\tau)\\
    & \quad + \rho'(Y_{\mathrm{MPC}}(\bar{x}_\tau)-Q\sigma(\bar{x}_\tau)+ \sigma_{\mathrm{MPC}}) \\
    & \quad \times \big[-Q\sigma(\bar{x}_\tau)+ \sigma_{\mathrm{MPC}}\big].\\
\end{split}
\label{eq:temp1}
\end{equation}
Firstly, we have
\[
\begin{aligned}
    & \rho\circ (Y_{\mathrm{MPC}}(\bar{x}_\tau)) - Y_{\mathrm{DB}}(\bar{x}_\tau) \\
    & \overset{\eqref{eq:kappainequality}}{\leq} \rho\circ (Y_{\mathrm{MPC}}(\bar{x}_\tau))  -\frac{3}{2}\underline{\kappa}(Y_{\mathrm{MPC}}(x)) \\
    & \overset{\eqref{eq:lemma2_1}}{\leq} -\frac{1}{2}\underline{\kappa}(Y_{\mathrm{MPC}}(x))   \overset{\eqref{eq:kappadefinition}\eqref{eq:sandboundVMPC}}{\leq} -\frac{1}{3 \gamma}Q \sigma(\bar{x}_\tau).
\end{aligned}
\]
Combining with \eqref{eq:rhoderivativebounds} and \eqref{eq:temp1}, if \(-Q\sigma(\bar{x}_\tau)+ \sigma_{\mathrm{MPC}} \geq 0,\) then 
\[
\begin{aligned}
    & \mathbb{E}[V(g(\bar{x}_\tau),s_{\tau+1})] - V(\bar{x}_\tau,s_\tau) \\
    & \overset{\eqref{eq:rhodefinition}}{\leq} -\frac{1}{3 \gamma}Q \sigma(\bar{x}_\tau) + q_2 \big[ -Q\sigma(\bar{x}_\tau) + \sigma_{\mathrm{MPC}} \big] \\
    & \leq q_2 \big[ -Q\sigma(\bar{x}_\tau) + \sigma_{\mathrm{MPC}} \big]
\end{aligned}
\]
Similarly, if \(-Q\sigma(\bar{x}_\tau)+ \sigma_{\mathrm{MPC}} < 0,\) then
\[
\begin{aligned}
    & \mathbb{E}[V(g(\bar{x}_\tau),s_{\tau+1})] - V(\bar{x}_\tau,s_\tau) \\
    & \overset{\eqref{eq:rhodefinition}}{\leq} -\frac{1}{3 \gamma}Q \sigma(\bar{x}_\tau) + \frac{1}{2 \gamma} \big[ -Q\sigma(\bar{x}_\tau)  + \sigma_{\mathrm{MPC}} \big] \\
    & \leq \frac{1}{2 \gamma}  \big[ -Q\sigma(\bar{x}_\tau) + \sigma_{\mathrm{MPC}} \big]
\end{aligned}
\]

(3) $s_\tau =s_{\tau+1}=\text{MPC}$. This case corresponds to when no switching occurs and the system evolves using the MPC controller. Thus we have
\[
\begin{aligned}
    & \mathbb{E}[V(g(\bar{x}_\tau),s_{\tau+1})] - V(\bar{x}_\tau,s_\tau) \\
    & = \mathbb{E} [\rho \circ Y_{\mathrm{MPC}}(g_1(\bar{x}_\tau))] - \rho\circ(Y_{\mathrm{MPC}}(\bar{x}_\tau)) \\
    & \overset{\eqref{eq:rhoconvexinequality}}{\leq} \mathbb{E} [\rho'(Y_{\mathrm{MPC}}(g_1(\bar{x}_\tau)))\cdot (Y_{\mathrm{MPC}}(g_1(\bar{x}_\tau)) - Y_{\mathrm{MPC}}(\bar{x}_\tau))] \\
    & \overset{\eqref{eq:Y1decrease_CE}}{\leq} \mathbb{E} [\rho'(Y_{\mathrm{MPC}}(g_1(\bar{x}_\tau)))\cdot (- Q\sigma(\bar{x}_\tau) + \sigma_{\mathrm{MPC}})].
\end{aligned}
\]
Using the mean value theorem, the monotonicity of $\rho'$ and the sandwich bounds of $\rho'$ in \eqref{eq:rhoderivativebounds}, we then have that when $- Q\sigma(\bar{x}_\tau) + \sigma_{\mathrm{MPC}}>0$,
\begin{align*}
    &\mathbb{E}[V(g(\bar{x}_\tau),s_{\tau+1})] - V(\bar{x}_\tau,s_\tau) \overset{\eqref{eq:rhoderivativebounds}}{\leq} q_2 (- Q\sigma(\bar{x}_\tau) + \sigma_{\mathrm{MPC}}). 
\end{align*}
And when $- Q\sigma(\bar{x}_\tau) + \sigma_{\mathrm{MPC}} \leq 0$,
\begin{align*}
    \mathbb{E}[V(g(\bar{x}_\tau),s_{\tau+1})] - V(\bar{x}_\tau,s_\tau)&\overset{\eqref{eq:rhoderivativebounds}}{\leq} \frac{1}{2\gamma}(- Q\sigma(\bar{x}_\tau) + \sigma_{\mathrm{MPC}}).
\end{align*}

(4) $s_\tau =s_{\tau+1}=\text{DB}$. This case corresponds to when no switching occurs and the system evolves using the deadbeat controller. Thus we have
\begin{equation}
\begin{split}
    &\mathbb{E}[V(g(\bar{x}_\tau),s_{\tau+1})] - V(\bar{x}_\tau,s_\tau)\\
    & = \mathbb{E} [Y_{\mathrm{DB}}(g_2(\bar{x}_\tau))] - Y_{\mathrm{DB}}(\bar{x}_\tau) \overset{\eqref{eq:Y2decrease_CE}}{\leq} -Q\sigma(\bar{x}_\tau) + \sigma_{\mathrm{DB}}.
\end{split}
\end{equation}

Before considering the above four cases, we need to derive an upper bound of $D_{\mathrm{MPC}}(\cdot)$. The MPC controller is activated when $ Y_{\mathrm{MPC}}(\hat{\theta},\bar{x}_\tau) \leq c_{\mathrm{MPC}} $. Together with the bound $Q\sigma(\bar{x}_\tau) \leq Y_{\mathrm{MPC}}(\hat{\theta},\bar{x}_\tau)$,we can set $\bar{x}_s := \ln \left( \sqrt{\frac{c_{\mathrm{MPC}}}{Q}}+1  \right)$ for the Lipschitz constant of MPC. 

Considering the above four cases together, remember that $q_3^{\min}=\min \left\{1+\frac{1}{4\gamma}, q_2, \frac{1}{2\gamma}, 1 \right\}, q_3^{\max}=\max \left\{1+\frac{1}{4\gamma}, q_2, \frac{1}{2\gamma}, 1 \right\}$. Then we have
\begin{equation}
\begin{split}
    &\mathbb{E}[V(g(\bar{x}_\tau),s_{\tau+1})] - V(\bar{x}_\tau,s_\tau) \\
    & \overset{\eqref{eq:Y1decrease_CE}\eqref{eq:Y2decrease_CE}}{\leq} - q_3^{\min} Q\sigma(\bar{x}_\tau) + q_3^{\max}\max \{
        \sigma_{\mathrm{DB}} , \sigma_{\mathrm{MPC}} \}\\
    & \leq \bigg\{ q_3^{\max}\left[  \tfrac{(Q + \lambda_{\max}(R)\| \mathcal{R}_*^\dagger A^\kappa \|^2)  }{1-p^2} (E_w+1)^2 - Q \right]m_2^2 \\  
        & \quad -q_3^{\min} Q\bigg\} \times \sigma(\bar{x}_\tau) +  q_3^{\max} \tfrac{(Q + \lambda_{\max}(R)\| \mathcal{R}_*^\dagger A^\kappa \|^2)^2 }{Q\max(N-1,1)(1-p^2)^2} \\
        & \quad \times \sigma(\bar{x}_\tau) + q_3^{\max} \mu_1 \tfrac{(Q + \lambda_{\max}(R)\| \mathcal{R}_*^\dagger A^\kappa \|^2)   }{1-p^2} \\
        & \quad \times 8m^2_2\sigma(\bar{x}_\tau) + q_3^{\max} \tfrac{(Q + \lambda_{\max}(R)\| \mathcal{R}_*^\dagger A^\kappa \|^2)   }{1-p^2} E_w^2  \\
        & \quad + \tfrac{q_3^{\max}}{\mu_1}\tfrac{(Q + \lambda_{\max}(R)\| \mathcal{R}_*^\dagger A^\kappa \|^2)   }{1-p^2}(E^4_w+E^2_w)
\end{split}
\end{equation}
The above inequality satisfies the second condition of Definition \ref{SLF} with 
        \[
        \begin{aligned}
            &-\alpha_3(|\bar{x}_\tau|) := \Big\{ q_3^{\max}\Big[ \tfrac{(Q + \lambda_{\max}(R)\| \mathcal{R}_*^\dagger A^\kappa \|^2)  }{1-p^2} (E_w+1)^2 \\
            & \quad -Q \Big]m_2^2 - q_3^{\min} Q \Big\} \sigma(\bar{x}_\tau)  +  \tfrac{q_3^{\max} (Q + \lambda_{\max}(R)\| \mathcal{R}_*^\dagger A^\kappa \|^2)^2 }{Q\max(N-1,1)(1-p^2)^2} \sigma(\bar{x}_\tau) \\
            & \quad + \mu_1 q_3^{\max}\tfrac{(Q + \lambda_{\max}(R)\| \mathcal{R}_*^\dagger A^\kappa \|^2)   }{1-p^2} 8m^2_2\sigma(\bar{x}_\tau) \\
            &\tilde{d} + \sigma_3(|\tilde{\theta}|) :=  q_3^{\max}\tfrac{(Q + \lambda_{\max}(R)\| \mathcal{R}_*^\dagger A^\kappa \|^2)   }{1-p^2} E_w^2  \\
            & \qquad \qquad \qquad \qquad+ \tfrac{q_3^{\max}}{\mu_1}\tfrac{(Q + \lambda_{\max}(R)\| \mathcal{R}_*^\dagger A^\kappa \|^2)   }{1-p^2}(E^4_w+E^2_w)
        \end{aligned}
        \]
        Since $\mu_1$ is arbitrary, if (\ref{ineq:1}) holds, then $\alpha_3(|x|)$ is a class $\mathcal{K}_\infty$ function with small enough $\mu_1$ satisfying
        \[
        \begin{aligned}
    & \mu_1 < \Bigg| q_3^{\max}\left[ \tfrac{(Q + \lambda_{\max}(R)\| \mathcal{R}_*^\dagger A^\kappa \|^2)  }{1-p^2} (1 + E_w)^2 -Q \right]m_2^2 - q_3^{\min} Q \nonumber \\
    & \quad + \tfrac{q_3^{\max}(Q + \lambda_{\max}(R)\| \mathcal{R}_*^\dagger A^\kappa \|^2)^2 }{Q\max(N-1,1)(1-p^2)^2}\Bigg| \, \bigg / \Bigg[q_3^{\max}\tfrac{(Q + \lambda_{\max}(R)\| \mathcal{R}_*^\dagger A^\kappa \|^2)   }{1-p^2} 8m^2_2 \Bigg].
    \end{aligned}
        \]

        Lastly, we prove that \(\alpha_3 \circ \alpha_2^{-1}\) is lower-bounded by some convex function in \(\mathcal{K}_\infty\). 
        \[
\begin{aligned}
&(\alpha_3\circ \alpha_2^{-1})(r) \\
&=
\frac{r}{q_2\gamma Q}
\Bigg[
q_3^{\min}Q
-
q_3^{\max}\Bigg(
 \tfrac{Q+\lambda_{\max}(R)\|\mathcal{R}_*^\dagger A^\kappa\|^2}{1-p^2}
 (E_w+1)^2
 -Q
 \Bigg)m_2^2
\\
&\quad
-
\tfrac{
q_3^{\max}
\left(
Q+\lambda_{\max}(R)\|\mathcal{R}_*^\dagger A^\kappa\|^2
\right)^2
}{
Q\max(N-1,1)(1-p^2)^2
} 
-
\mu_1 q_3^{\max}
\tfrac{
Q+\lambda_{\max}(R)\|\mathcal{R}_*^\dagger A^\kappa\|^2
}{1-p^2}
8m_2^2
\Bigg].
\end{aligned}
\]
        Clearly, $(\alpha_3\circ\alpha_2^{-1})(r)$ is convex in $r$. If (\ref{ineq:1}) is satisfied, then there exists some $\mu_1>0$ small enough such that $\alpha_3 \circ \alpha_2^{-1}(r)$ is a $\mathcal{K}_\infty$ function. Since $\mu_1$ can be arbitrary, the conclusion follows. In addition, the above inequality holds for all $\bar{x}_\tau \in \mathbb{R}^n$, so that the closed-loop system has $\mathbb{R}^n = \mathcal{X}_{\mathrm{RPI}}$.
\end{proof}

Based on the above derivation, we propose the following $\kappa$-step state bound of the closed-loop system.
\begin{proposition}
Consider system~\eqref{sys1} under Assumptions~\ref{assump1}--\ref{assump2}, and the control laws in Algorithm~\ref{algoall} with switching thresholds $c_{\mathrm{MPC}},\,c_{\mathrm{DB}},\,c_3$ satisfying~\eqref{eq:hyswitchlevelsetconstants}. For any horizon $N \ge 2$, confidence level $\delta \in (0,1)$, initial state $x_0 \in \mathbb{R}^n$, and threshold $\frac{1}{2}c_3 < c_4 < c_{\mathrm{MPC}}$, if \eqref{condition_ISS} holds, then there exists a constant $c \ge 0$, and a function $\eta \in \mathcal{L}$ such that
\begin{equation}
\label{eq:kappa_step_state_bound}
    \mathbb{P}\bigl(\,|\bar{x}_\tau| \le \eta(\tau) + c\,\bigr) \ge 1 - \delta, \forall \tau \ge \left\lceil \frac{t_1}{\kappa} \right\rceil,
\end{equation}
where 
\(
t_1 = t_1\!\bigl(\delta, x_0, A, B, Q, R, N, c_{\mathrm{MPC}}, c_{\mathrm{DB}}, c_3, c_4, u_{\max}, C\bigr).
\)

\label{stabilityproposition}
\end{proposition}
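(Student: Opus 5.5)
The plan is to combine three ingredients: the non-asymptotic RLS bound of Proposition~\ref{non-asymp bound}, the Convex-Decay Stochastic ISS Function of Proposition~\ref{SLya1}, and the stability framework of \cite{siriya2025framework}. That framework turns a convex-decay stochastic ISS function, valid whenever the estimation error is below a threshold, into a high-probability $\mathcal{L}$-plus-offset bound on the state.

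\textbf{Step 1: the convergence time $t_1$.} Set $\bar{\epsilon}:=\min\{\bar{\epsilon}_1,\bar{\epsilon}_2,\bar{\epsilon}_3,\bar{\epsilon}_4\}$. Here $\bar{\epsilon}_4>0$ is chosen so that \eqref{ineq:1} holds whenever $\|\hat\theta-\theta^*\|\le\bar{\epsilon}_4$. Such an $\bar{\epsilon}_4$ exists because $E_w$ is continuous in $\|\hat\theta-\theta^*\|$, equals $\bar{E}_w$ at zero error, and \eqref{condition_ISS} is strict. (I would double check that \eqref{condition_ISS}, with its $\gamma$ and $m_2$, is precisely \eqref{ineq:1} at $E_w=\bar{E}_w$, using $\tfrac{Q+\lambda_{\max}(R)\|\mathcal{R}_*^\dagger A^\kappa\|^2}{1-p^2}=Q\gamma$ and reading $\gamma$ as absorbing the factor $(1+\bar{E}_w)^2$.)

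Next, $\epsilon(t,\delta,x_0)\to 0$: the numerator grows like $\sqrt{\ln t}$, since $\beta_{\max}$ is polynomial in $t$, while the denominator grows like $\sqrt{t}$. So there is a finite time $t_1\ge T_{\text{burn-in}}(\delta,x_0)$ with $\epsilon_t\le\bar{\epsilon}$ for all $t\ge t_1$. On the event $\mathcal{E}$ of Proposition~\ref{non-asymp bound}, which has probability at least $1-\delta$, we then have $\|\hat\theta_{\kappa\tau}-\theta^*\|\le\epsilon_{\kappa\tau}\le\bar{\epsilon}$ for all $\tau\ge\lceil t_1/\kappa\rceil$.

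\textbf{Step 2: apply the ISS framework.} For $\tau\ge\tau_1:=\lceil t_1/\kappa\rceil$, Proposition~\ref{SLya1} gives $V(\bar{x}_\tau,s_\tau)$ satisfying Definition~\ref{SLF} on $\mathcal{X}_{\mathrm{RPI}}=\mathbb{R}^n$. The constants are $\alpha_1=\tfrac{1}{2\gamma}Q\sigma$, $\alpha_2=q_2\gamma Q\sigma$, $\alpha_3$ linear in $\sigma$ with $\alpha_3\circ\alpha_2^{-1}$ linear, and thus convex, and $\tilde d+\sigma_3(\|\tilde\theta\|)$ given by the $E_w$ terms.

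I then invoke the main theorem of \cite{siriya2025framework} on the sub-sampled system \eqref{sys1_subsampled}, restarted at time $\tau_1$ and conditioned on $\mathcal{E}$. Its mechanism is to truncate $V$ via the stopping time at which the error bound fails. The convex lower bound lets Jensen's inequality give $\mathbb{E}V_{\tau+1}\le \mathbb{E}V_\tau-\bar\alpha(\mathbb{E}V_\tau)+\tilde d+\sigma_3(\epsilon_{\kappa\tau})$, with $\bar\alpha$ convex. A comparison lemma then yields $\mathbb{E}[V\mathbf{1}]\le\beta(\mathbb{E}V_{\tau_1},\tau-\tau_1)+\text{const}$. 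Markov's inequality, at a cost of another $\delta$, converts this into a high-probability bound, and $\alpha_1^{-1}$ converts it to a bound on $|\bar{x}_\tau|$.

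\textbf{Step 3: control the initial value.} The initial value $\mathbb{E}V_{\tau_1}$ is handled by the high-probability state bound $\bar x(t_1,\delta,x_0)$, at a cost of one more $\delta$; the $\delta$-allocation is then rescaled to get $1-\delta$ in \eqref{eq:kappa_step_state_bound}. Finally, the $\beta$-term, where $\sigma_3(\epsilon_{\kappa\tau})\to0$ feeds into the decay, is absorbed into $\eta\in\mathcal{L}$, and the $\tilde d$-contribution gives $c$. Because $\alpha_1^{-1}(s)=\ln(1+\sqrt{2\gamma s/Q})$ is concave, sums split as $\alpha_1^{-1}(a+b)\le\alpha_1^{-1}(2a)+\alpha_1^{-1}(2b)$, which separates the decaying part from the offset.

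The main obstacle is Step 2. The key difficulty is matching the hypotheses of \cite{siriya2025framework} exactly: a time-varying error bound $\epsilon_t$ that holds only on an event, a switching state $(x,s)$, and a $\tilde d$ that depends on $\|\tilde\theta\|$ through $E_w$. My first attempt would split $E_w$ as $\bar{E}_w$ plus a $\mathcal{K}$-function of $\|\tilde\theta\|$, via the inequality $e^{a+b}-1\le (e^a-1)e^b+(e^b-1)$, so that $\tilde d$ is constant and $\sigma_3\in\mathcal{K}$ as Definition~\ref{SLF} requires.
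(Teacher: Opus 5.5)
Your plan follows the paper's proof: the paper sets $t_1=\max\{t_3,t_4\}$, with $t_3$ taken from Proposition~\ref{non-asymp bound} to enforce \eqref{ineq:2} and $t_4$ enforcing \eqref{ineq:1} through a margin $d>0$ and the bound $|E_w-\bar E_w|\le\alpha_4(D\|\hat\theta_{\kappa\tau}-\theta^*\|)$ (your continuity/$\bar\epsilon_4$ step), and then cites \cite[Theorem 2]{siriya2025framework} with the function of Proposition~\ref{SLya1}; your stopping-time truncation, initial-value bound, $\delta$-rescaling and splitting of $E_w$ into $\bar E_w$ plus a $\mathcal{K}$-term spell out what the paper leaves to that citation. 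The mismatch you noticed is real and the paper shares it: its proof claims \eqref{condition_ISS} gives a margin $d>0$ for an expression whose first bracket is $Q\gamma(1+\bar E_w)^2-Q$, so \eqref{condition_ISS} has to be read with $(1+\bar E_w)^2$ inserted in that bracket only, not absorbed into $\gamma$, which also enters $m_2$, $q_2$ and the $\gamma^2/(N-1)$ term.
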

\begin{proof}
    Firstly, by Proposition \ref{non-asymp bound}, for each user assigned $\delta \in (0,1)$ and initial state $x_0 \in \mathbb{R}^n$, there must exist a time step $t_3(\delta,x_0,A,B,Q,R,N,u_{\max},c_{\mathrm{MPC}},c_3,c_4)$ such that \eqref{ineq:2} holds for all $t \geq t_3(\delta,x_0,A,B,Q,R,N,u_{\max},c_{\mathrm{MPC}},c_3,c_4)$ with probability at least $1-\delta$. The dependence on $c_{\mathrm{MPC}},c_3,c_4$ is inherited from Lemma \ref{lem:levelsetuncertainty}. Next, we show that if (\ref{condition_ISS}) is satisfied, then there exists a time $t_4(\delta,x_0,A,B,Q,c_{\mathrm{MPC}})$ such that (\ref{ineq:1}) starts to hold. Denote $\Delta_E = E_w - \bar{E}_w$ and $D=\max\{ D_{\mathrm{MPC}}(\bar{x}_s), D_{\mathrm{DB}}\} $. By inspection, there must exist a $\mathcal{K}_\infty$ function $\alpha_4(\cdot)$ such that 
    \begin{equation}
        |\Delta_E| \leq \alpha_4(D\,\|\hat{\theta}_{\kappa \tau}-\theta^*\|).
        \label{eq:temp5}
    \end{equation}
    Thus, (\ref{ineq:1}) is equivalent to 
\[
\begin{aligned}
    &q_3^{\max} \left[ \tfrac{(Q + \lambda_{\max}(R)\| \mathcal{R}_*^\dagger A^\kappa \|^2)  }{1-p^2} (1 + \bar{E}_w + \Delta_E)^2 -Q \right]m_2^2 \\
    & \quad - q_3^{\min}Q + q_3^{\max}\tfrac{(Q + \lambda_{\max}(R)\| \mathcal{R}_*^\dagger A^\kappa \|^2)^2 }{Q\max(N-1,1)(1-p^2)^2} < 0
\end{aligned}
\]
\begin{equation}
\label{eq:temp4}
\begin{aligned}
\Leftrightarrow &\tfrac{q_3^{\max} \left(Q + \lambda_{\max}(R)\,\| \mathcal{R}_*^\dagger A^\kappa \|^2\right)m_2^2}{1-p^2} \,\Delta_E^2 \\[4pt]
&\ + \tfrac{2q_3^{\max} \left(Q + \lambda_{\max}(R)\,\| \mathcal{R}_*^\dagger A^\kappa \|^2\right)\left(1+\bar{E}_w\right)m_2^2}{1-p^2} \,\Delta_E \\[4pt]
&\ + q_3^{\max}\left[ \tfrac{\left(Q + \lambda_{\max}(R)\,\| \mathcal{R}_*^\dagger A^\kappa \|^2\right)\left(1+\bar{E}_w\right)^2}{1-p^2} - Q \right] m_2^2 \\[4pt]
&\ - q_3^{\min}Q + \tfrac{q_3^{\max}\left(Q + \lambda_{\max}(R)\,\| \mathcal{R}_*^\dagger A^\kappa \|^2\right)^2}{Q\max(N-1,1)\,(1-p^2)^2} \;<\; 0
\end{aligned}
\end{equation}
Due to (\ref{condition_ISS}), there must exist some constant $d>0$ such that 
\[
\begin{aligned}
    &q_3^{\max}\left[ \tfrac{\left(Q + \lambda_{\max}(R)\,\| \mathcal{R}_*^\dagger A^\kappa \|^2\right)\left(1+\bar{E}_w\right)^2}{1-p^2} - Q \right] m_2^2 \\
&\ - q_3^{\min}Q + \tfrac{q_3^{\max}\left(Q + \lambda_{\max}(R)\,\| \mathcal{R}_*^\dagger A^\kappa \|^2\right)^2}{Q\max(N-1,1)\,(1-p^2)^2} \leq -d
\end{aligned}
\]
and that for \eqref{eq:temp4} to hold, it suffices to show that
\[
\begin{aligned}
&\tfrac{q_3^{\max} \left(Q + \lambda_{\max}(R)\,\| \mathcal{R}_*^\dagger A^\kappa \|^2\right)m_2^2}{1-p^2} \,\Delta_E^2 \\[4pt]
&+ \tfrac{2 q_3^{\max}\left(Q + \lambda_{\max}(R)\,\| \mathcal{R}_*^\dagger A^\kappa \|^2\right)\left(1+\bar{E}_w\right)m_2^2}{1-p^2} \,\Delta_E  \;<\; d.
\end{aligned}
\]
Combining with \eqref{eq:temp5}, it suffices to satisfy
\begin{equation}
\begin{aligned}
&\tfrac{q_3^{\max}\left(Q + \lambda_{\max}(R)\,\| \mathcal{R}_*^\dagger A^\kappa \|^2\right)m_2^2}{1-p^2} \ \alpha^2_4 \!\left(D\,\|\hat{\theta}_{\kappa \tau}-\theta^*\|\right) \\[4pt]
&\quad + \tfrac{2q_3^{\max}\left(Q + \lambda_{\max}(R)\,\| \mathcal{R}_*^\dagger A^\kappa \|^2\right)\left(1+\bar{E}_w\right)m_2^2}{1-p^2} \\
& \quad \times \alpha_4 \!\left(D\,\|\hat{\theta}_{\kappa \tau}-\theta^*\|\right)  \;<\; d
\end{aligned}
\label{eq:t_4}
\end{equation}
The LHS of the above condition is quadratic in $\alpha_4(\cdot)$ with positive coefficients. Combining with Proposition \ref{non-asymp bound}, there must exist a time $t_4(\delta,x_0,A,B,Q,c_{\mathrm{MPC}})$ such that for all $\kappa \tau \geq t_4$, $\alpha_4(\cdot)$ is small enough to satisfy the above inequality. The dependence of $t_4$ on $c_{\mathrm{MPC}}$ is inherited from that of $D$. 

Define $t_1(\delta,x_0,A,B,Q,R,N,u_{\max},c_{\mathrm{MPC}},c_3,c_4):= \max\{t_3,t_4 \}$. 
% \[
% \begin{aligned}
%     &t_1(\delta,x_0,A,B,Q,R,N,u_{\max},c_{\mathrm{MPC}},c_3,c_4) \\
%     &=\max
%     \begin{cases}
%         t_3(\delta,x_0,A,B,Q,R,N,u_{\max},c_{\mathrm{MPC}},c_3,c_4)\\
%         t_4(\delta,x_0,A,B,Q,c_{\mathrm{MPC}})\}
%     \end{cases}.
% \end{aligned}
% \]
Then from \cite[Theorem 2]{siriya2025framework}, by showing the existence of a Convex-Decay Stochastic ISS Function as in Proposition \ref{SLya1}, a high probability $\kappa$-step state bound \eqref{eq:kappa_step_state_bound} can be derived. The explicit form of $\eta, c$ are available from \cite[Remark 3]{siriya2025framework},which will then be used to construct $\tilde{\eta},\tilde{c}$ in Theorem \ref{stabilitytheorem}. Specifically, $\eta(\tau)$ is related to the estimation error, which decreases over time with high probability. And $c$ represents the non-vanishing influence of the process and excitory noises.
\end{proof}

\textit{Proof of Theorem \ref{stabilitytheorem}}. 

To continue with a per-step state bound with high probability, we first consider the deadbeat controller and the MPC controller separately before taking a union bound. We required the following lemma.
\begin{lemma} 
\label{lemma:kappato1step_dragan}
    For system \eqref{sys1}, if there exist constants $0<\delta<\frac{1}{3},c_1,c_2 \geq 0$, $r^s,r^b \in R_{>0}\cup\{\infty\}$, $\tau_0 \in \mathbb{N}$, $\beta, \beta_1 \in \mathcal{KL}$ and $\tilde{\gamma},\tilde{\gamma}_1,\gamma_1 \in \mathcal{K}_\infty$ such that
    \begin{equation}
    \begin{split}
    \mathbb{P}\Big(|\bar{x}_{\tau_0}| \leq r^s \;\Rightarrow\; |\bar{x}_{\tau}| &\leq \beta(|\bar{x}_{\tau_0}|,\tau-\tau_0)+c_1\Big), \\
    &  \geq 1-\delta, \forall \tau \geq \tau_0 \geq 0,
    \end{split}
    \label{eq:kappa_dragan1}
    \end{equation}
    \begin{equation}
    % \begin{split}
    %     \mathbb{P}\Big(\forall \tau_0 \geq 0, |\bar{x}_{\tau_0}| \leq r^b &\Rightarrow |x_{\tau \kappa + i}| \leq \tilde{\gamma}(|\bar{x}_{\tau_0}|)+c_2, \\
    %     &\forall i =0,...,\kappa-1\Big)\geq 1-2\delta,
    % \end{split}
    \begin{split}
\mathbb{P}\Big(
\forall &\tau \geq \tau_0,\,
|\bar{x}_{\tau}| \leq r^b
\Rightarrow
|x_{\tau \kappa + i}|
\leq
\tilde{\gamma}(|\bar{x}_{\tau}|)
\\
&\qquad +
\beta_1\!\left(
\epsilon_{\kappa \tau_0},
\kappa(\tau-\tau_0)
\right)
+c_2, \\
& \qquad \qquad \qquad \forall i=0,\ldots,\kappa-1
\Big)
\geq 1-2\delta .
\end{split}
    \label{eq:kappa_dragan2}
    \end{equation}
    then there exist $r^x \geq 0,c_3 \geq 0$ and $\bar{\beta} \in \mathcal{KL}$ such that
    \begin{equation}
    \begin{split}
        \mathbb{P}\Big(|\bar{x}_{\tau_0}| \leq r^x \;\Rightarrow\; |x_t| &\leq \bar{\beta}(|\bar{x}_{\tau_0}|+\epsilon_{\kappa\tau_0}, t - \tau_0 \kappa)+c_3\Big) \\
        & \geq 1-3\delta, \forall t \geq \tau_0 \kappa > 0.
    \end{split}
    \label{eq:kappa_dragan3}
    \end{equation}
\end{lemma}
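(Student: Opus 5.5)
The plan is to work on the intersection of the two high-probability events \eqref{eq:kappa_dragan1} and \eqref{eq:kappa_dragan2}. By a union bound it has probability at least $1-3\delta$, and on it we combine the sub-sampled decay bound with the intra-block growth bound. Throughout, fix a realization in this intersection.

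\textbf{Step 1: time decomposition.} For any $t\ge \tau_0\kappa$, write $t=\tau\kappa+i$ with $\tau\ge\tau_0$ and $i\in\{0,\dots,\kappa-1\}$. Set $r^x:=\min\{r^s,\cdot\}$, chosen so that the hypothesis $|\bar x_{\tau_0}|\le r^x$ triggers \eqref{eq:kappa_dragan1}. Then
\[
|\bar x_\tau|\le \beta(|\bar x_{\tau_0}|,\tau-\tau_0)+c_1 \le \beta(|\bar x_{\tau_0}|,0)+c_1 .
\]
\textbf{Step 2: membership in the domain of \eqref{eq:kappa_dragan2}.} We need $|\bar x_\tau|\le r^b$ for every $\tau\ge\tau_0$. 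If $r^b=\infty$ this is automatic. Otherwise we shrink $r^x$ further so that $\beta(r^x,0)+c_1\le r^b$, which requires $c_1<r^b$. If this fails, we take $r^x$ as large as allowed; since the statement only asserts existence of some $r^x\ge0$, this is harmless, though it is the one delicate point.

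\textbf{Step 3: substitution.} Plugging the bound of Step 1 into \eqref{eq:kappa_dragan2} gives
\[
|x_t|\le \tilde\gamma\big(\beta(|\bar x_{\tau_0}|,\tau-\tau_0)+c_1\big)+\beta_1(\epsilon_{\kappa\tau_0},\kappa(\tau-\tau_0))+c_2 .
\]
Use the weak triangle inequality $\tilde\gamma(a+b)\le\tilde\gamma(2a)+\tilde\gamma(2b)$ to separate $\tilde\gamma(2\beta(|\bar x_{\tau_0}|,\tau-\tau_0))$ from the constant $\tilde\gamma(2c_1)$. The new constant is $c_3:=\tilde\gamma(2c_1)+c_2$.

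\textbf{Step 4: conversion to a single $\mathcal{KL}$ bound in $t-\tau_0\kappa$.} Since $\tau-\tau_0=\lfloor (t-\tau_0\kappa)/\kappa\rfloor\ge (t-\tau_0\kappa)/\kappa-1$, define
\[
\hat\beta(s,r):=\tilde\gamma\big(2\beta(s,\max\{r/\kappa-1,0\})\big),
\]
which is $\mathcal{KL}$ after a standard upper-bounding adjustment so that it is strictly decreasing in $r$. For the estimation term, $\kappa(\tau-\tau_0)\ge t-\tau_0\kappa-\kappa+1$, so $\beta_1(\epsilon,\kappa(\tau-\tau_0))\le\beta_1(\epsilon,\max\{r-\kappa+1,0\})$. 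Finally set
\[
\bar\beta(s,r):=\hat\beta(s,r)+\tilde\beta_1(s,r),
\]
and use monotonicity in the first argument with $s=|\bar x_{\tau_0}|+\epsilon_{\kappa\tau_0}$, which dominates each piece. This yields \eqref{eq:kappa_dragan3}.

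The main obstacle is Step 2: guaranteeing that the trajectory stays in $\{|\bar x|\le r^b\}$ uniformly in $\tau$, on the same probability event, so that \eqref{eq:kappa_dragan2} applies. A secondary issue is producing a genuine $\mathcal{KL}$ function from the shifted, floored time arguments; this is handled by bounding above by a strictly decreasing majorant.
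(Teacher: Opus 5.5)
Your proof follows the paper's in all essentials. It uses the decomposition $t=\kappa\tau+i$, the choice $r^x\le r^s$ with $\beta(r^x,0)+c_1\le r^b$, substitution into \eqref{eq:kappa_dragan2}, the split $\tilde\gamma(a+b)\le\tilde\gamma(2a)+\tilde\gamma(2b)$ giving $c_3=\tilde\gamma(2c_1)+c_2$, and a union bound. Your explicit majorant for $\bar\beta$ via $\tau-\tau_0=\lfloor (t-\tau_0\kappa)/\kappa\rfloor$ replaces the paper's appeal to Lemma~3 of Teel and Praly. It is fine, and slightly more self-contained.

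Two points need correcting.

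\textbf{The fallback in Step~2 is not harmless.} If $r^b<\infty$ and $c_1>r^b$, no $r^x\ge 0$ satisfies $\beta(r^x,0)+c_1\le r^b$. Even $r^x=0$ leaves the non-vacuous premise $\bar x_{\tau_0}=0$. Under it, \eqref{eq:kappa_dragan1} only gives $|\bar x_\tau|\le c_1$, possibly above $r^b$, where \eqref{eq:kappa_dragan2} says nothing about the intra-block states. So, as the paper explicitly states, the argument needs the compatibility condition $c_1<r^b$, or $r^b=\infty$. The latter is the case actually used later, with $r^s=r^b=\infty$. The fact that the statement only asks for some $r^x\ge 0$ does not rescue this.

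\textbf{Your "main obstacle" is not one, and pursuing it would break the probability count.} \eqref{eq:kappa_dragan1} is only a per-$\tau$ guarantee, so you cannot intersect it over all $\tau\ge\tau_0$ and keep $1-\delta$. You don't need to. For fixed $t$, only the single $\tau$ with $t=\kappa\tau+i$ matters, and on that event $E_{1,\tau}$ your choice of $r^x$ puts $\bar x_\tau$ in $\{|\bar x|\le r^b\}$ deterministically. Intersecting that one event with the (already uniform) event of \eqref{eq:kappa_dragan2} gives $1-3\delta$. This is exactly why the conclusion holds for each fixed $t$ rather than uniformly in $t$, and how the paper sets up its events $E_{1,\tau}$, $E_2$, $E_{3,t}$.
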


\begin{proof}
We first prove that the determinisitc results hold. The proof is inspired by \cite[Theorem 1]{nevsic1999formulas}. Let \(t\geq \kappa\tau_0>0\). Then there exist unique
\(
\tau\geq \tau_0,\  i\in{0,\ldots,\kappa-1}
\)
such that
\(
t=\kappa\tau+i.
\)
Choose \(r^x>0\) such that
\(
r^x\leq r^s,
\ 
\beta(r^x,0)+c_1\leq r^b.
\)
This choice is immediate if \(r^b=\infty\); if \(r^b<\infty\), it requires the usual compatibility condition \(r^b>c_1\).

Assume that
\(
|\bar x_{\tau_0}|\leq r^x.
\)
Since \(r^x\leq r^s\), \eqref{eq:kappa_dragan1} gives
\(
|\bar x_{\tau}|
\leq
\beta(|\bar x_{\tau_0}|,\tau-\tau_0)+c_1.
\)
Using monotonicity of \(\beta(\cdot,\tau-\tau_0)\) and the fact that
\(
\beta(r^x,\tau-\tau_0)\leq \beta(r^x,0),
\)
we obtain
\(
|\bar x_{\tau}|
\leq
\beta(r^x,0)+c_1
\leq r^b.
\)
Therefore the premise of \eqref{eq:kappa_dragan2} is satisfied at this \(\tau\). Hence, 
\[
\begin{aligned}
|x_t|
&=
|x_{\kappa\tau+i}|
\leq
\tilde\gamma(|\bar x_\tau|)
+
\beta_1\!\left(
\epsilon_{\kappa\tau_0},
\kappa(\tau-\tau_0)
\right)
+c_2 \\                                            \
&\leq
\tilde\gamma\!\left(
\beta(|\bar x_{\tau_0}|,\tau-\tau_0)+c_1
\right)
+
\beta_1\!\left(
\epsilon_{\kappa\tau_0},
\kappa(\tau-\tau_0)
\right)
+c_2 .
\end{aligned}
\]
Using the elementary inequality
\(
\tilde\gamma(a+b)\leq \tilde\gamma(2a)+\tilde\gamma(2b),
 \forall a,b\geq0,
\)
we get
\[
\begin{aligned}
|x_t|
&\leq
\tilde\gamma\!\left(
2\beta(|\bar x_{\tau_0}|,\tau-\tau_0)
\right)
+
\beta_1\!\left(
\epsilon_{\kappa\tau_0},
\kappa(\tau-\tau_0)
\right)
+
\tilde\gamma(2c_1)+c_2 .
\end{aligned}
\]
Define
\(
c_3:=\tilde\gamma(2c_1)+c_2.
\)
Monotonicity gives
\[
\begin{aligned}
|x_t|
&\leq
\tilde\gamma\!\left(
2\beta(
|\bar x_{\tau_0}|+\epsilon_{\kappa\tau_0},
\tau-\tau_0
)
\right) 
+
\beta_1\!\left(
|\bar x_{\tau_0}|+\epsilon_{\kappa\tau_0},
\kappa(\tau-\tau_0)
\right)
+c_3 .
\end{aligned}
\]
Moreover,
\(
t-\kappa\tau_0
=
\kappa(\tau-\tau_0)+i,
\ i\in{0,\ldots,\kappa-1}.
\)
Thus the quantities \(\tau-\tau_0\) and \(t-\kappa\tau_0\) differ only by the fixed block length \(\kappa\) and integer $i$. From \cite[Lemma 3]{teel2000smooth}, there exists some
\(
\bar\beta\in\mathcal{KL}
\)
such that
\[
\begin{aligned}
&\tilde\gamma\!\left(
2\beta(
|\bar x_{\tau_0}|+\epsilon_{\kappa\tau_0},
\tau-\tau_0
)
\right)
+
\beta_1\!\left(
|\bar x_{\tau_0}|+\epsilon_{\kappa\tau_0},
\kappa(\tau-\tau_0)
\right)  \\                                                \
&\qquad\leq
\bar\beta\!\left(
|\bar x_{\tau_0}|+\epsilon_{\kappa\tau_0},
t-\kappa\tau_0
\right).
\end{aligned}
\]
Consequently,
\(
|x_t|
\leq
\bar\beta\!\left(
|\bar x_{\tau_0}|+\epsilon_{\kappa\tau_0},
t-\kappa\tau_0
\right)
+c_3.
\)

Now we continue to the proof of the probabilistic case. Define the events
\[
E_{1,\tau}
:=
\left\{
|\bar x_{\tau_0}|\le r^s
\;\Rightarrow\;
|\bar x_\tau|
\le
\beta(|\bar x_{\tau_0}|,\tau-\tau_0)+c_1
\right\},
\]
\[
\begin{aligned}
E_2
&:=
\bigcap_{\tau \ge \tau_0} \bigcap_{j=0}^{\kappa-1}
 \{
|\bar x_\tau|\le r^b
\;\Rightarrow\;
|x_{\tau\kappa+j}|
\le
\tilde\gamma\bigl(|\bar x_\tau|\bigr) \\
& \qquad +
\beta_1\!\left(
\epsilon_{\kappa \tau_0},
\kappa(\tau-\tau_0)\right)+ c_2
\}, \\
E_{3,t}
&:=
\left\{
|\bar x_{\tau_0}|\le r^x
\Rightarrow
|x_t|
\le
\bar\beta(|\bar{x}_{\tau_0}|+\epsilon_{\kappa\tau_0},t-\tau_0\kappa)+c_3
\right\}.
\end{aligned}
\]
By assumption,
\(
\mathbb P(E_{1,\tau})\ge 1-\delta,
\ 
\mathbb P(E_2)\ge 1-2\delta.
\)
The above proof has shown that
\(
E_{1,\tau}\cap E_2 \subseteq E_{3,t}.
\)
Therefore,
\[
\mathbb P(E_{3,t})
\ge
\mathbb P(E_{1,\tau}\cap E_2)
\ge
1-\mathbb P(E_{1,\tau}^c)-\mathbb P(E_2^c)
\ge
1-3\delta.
\]
Since $t\ge \tau_0\kappa$ was arbitrary, \eqref{eq:kappa_dragan3} holds for all fixed
$t\ge \tau_0\kappa$.
\end{proof}

Since we already have \eqref{eq:kappa_dragan1} satisfied through Proposition \ref{stabilityproposition}, it now suffices to prove that \eqref{eq:kappa_dragan2} holds. We will consider the case of deadbeat controller and MPC controller one by one. 

For the deadbeat controller, we can prove \eqref{eq:kappa_dragan2}. Due to the definition of $\pi_{\mathrm{DB}}(\cdot)$, we know that there exists state-dependent scaling coefficients $\alpha(x):\mathbb{R}^n \rightarrow (0,1]$ such that $\pi_{\mathrm{DB}}(x,\theta^*,C)=-\alpha(x)\mathcal R_\kappa^\dagger A^\kappa x$. For $0 \le i<\kappa$, unrolling the closed-loop dynamics under policy $\pi_{\mathrm{DB}}(x_{\tau \kappa},\hat{\theta}_{\tau \kappa},C)+\bar{v}_\tau$ gives for all $i=0,1,...,\kappa-1$:
\[
\begin{aligned}
    &|x_{\tau \kappa+i}|\\
    &=|A^i x_{\tau \kappa} + \sum_{s=0}^{i-1}A^{i-1-s}Bu_{\tau\kappa+s} + \sum_{s=0}^{i-1}A^{i-1-s}w_{\tau\kappa+s}| \\
&\leq \left| \left(A^i - [\alpha(\bar{x}_{\tau})\mathcal R_\kappa^\dagger A^\kappa]_{0:i-1}\right)x_{\tau \kappa}\right| \\
& \quad + \left|\sum_{s=0}^{i-1}A^{i-1-s}B \, D_{\text{DB}}\|\hat{\theta}_{\tau \kappa}-\theta^*  \|   \right| \\
& \quad + \left|\sum_{s=0}^{i-1}A^{i-1-s}B \, v_{\tau\kappa+s} \right| + \left|\sum_{s=0}^{i-1}A^{i-1-s}w_{t+s} \right| \\
& \leq |(1-\alpha(\bar{x}_{\tau}))A^i x_{\tau \kappa} + \alpha(\bar{x}_{\tau})\Big(A^i - [\mathcal R_\kappa^\dagger A^\kappa]_{0:i-1}\Big)x_{\tau \kappa}| \\
& \quad + \left|\sum_{s=0}^{i-1}A^{i-1-s}B D_{\text{DB}}\|\hat{\theta}_{\tau \kappa}-\theta^*  \|   \right|   + \left|\sum_{s=0}^{i-1}A^{i-1-s}B \, \sqrt{m} \, C \right| \\
& \quad + \sum_{s=0}^{i-1} \left| A^{i-1-s} \right| \sigma_W \sqrt{2n \ln\left( \frac{n \pi^2 i^2}{3 \delta} \right)} \text{ w.p. $1-\delta$}\\
\end{aligned}
\]
\[
\begin{aligned}
& \leq \underbrace{\max\{\|A^i\|, \big\|A^i - [\mathcal R_\kappa^\dagger A^\kappa]_{0:i-1} \big\| \} |x_{\tau \kappa}|}_{\gamma^{\mathrm{DB}}_{1,i}(|x_{\tau \kappa}|)} \\
& \quad + \underbrace{\left|\sum_{s=0}^{i-1}A^{i-1-s}B \, \sqrt{m} \, C \right|}_{\gamma_{v_1,i}(C)}   + \underbrace{\left|\sum_{s=0}^{i-1}A^{i-1-s}B \, D_{\text{DB}}\right| \, \epsilon_{\tau_0 \kappa}}_{\gamma^{\mathrm{DB}}_{\tilde{\theta}_1,i}(\epsilon_{\tau_0 \kappa})} \\
& \quad + \underbrace{\left|\sum_{s=0}^{i-1}A^{i-1-s} \right| \sigma_W \sqrt{2n \ln\left( \frac{n \pi^2 i^2}{3 \delta} \right)}}_{\gamma_{w_1,i}(\sigma_W)} \text{ w.p. $1-2\delta$},
\end{aligned}
\]
where the first inequality is from Lemma \ref{lem:loballipschitzDB}, and the second inequality with probability is from Assumption \ref{assump1} together with \cite[Lemma 9]{siriya2024non}. The last inequality is from Proposition \ref{non-asymp bound} and the fact taht $\tau \geq \tau_0$. So \eqref{eq:kappa_dragan2} is satisfied.

% Note that $\gamma_1, \gamma_{\tilde{\theta}_1}, \gamma_{v_1}, \gamma_{w_1}$ are class-$\mathcal{K}_\infty$ functions, and that $\gamma_{\tilde{\theta}_1}(\|\hat{\theta}_{\tau \kappa}-\theta^*  \|)$ can be upper bounded by a class-$\mathcal{L}$ function due to Proposition \ref{non-asymp bound}. 

% https://pweb.fbe.hku.hk/~pingyu/6066/LN/LN4_Maximum%20Theorem%2C%20Implicit%20Function%20Theorem%20and%20Envelope%20Theorem.pdf? Theorem 1
For MPC controller, according to Berge's Maximum Theorem \cite[Theorem 17.31]{aliprantis2006infinite} and the uniqueness of our MPC solution, the optimal control sequence to problem \(\mathbf{P}_N(x,\hat{A}, \hat{B})\) is continuous in $x$. We denote the $i^{th}$ CE MPC solution at time $t=\tau \kappa$ as $\hat{u}_{i|\tau \kappa}(x_{\tau \kappa}), i=0,1,...,\kappa-1$, and the MPC solution with the true parameters as $u_{i|\tau \kappa}^*(x_{\tau \kappa}), i=0,1,...,\kappa-1$. Since the MPC cost satisfies
\(
    Y_{\mathrm{MPC}}(x)
    \geq
    \sum_{j=0}^{N-1}
    \bigl(u^*_{j|\tau\kappa}(x)\bigr)^\top R u^*_{j|\tau\kappa}(x), R\succ 0,
\)
and since $Y_{\mathrm{MPC}}$ admits a class-$\mathcal K_\infty$ upper bound as in \eqref{eq:sandboundVMPC}, there exists a class $\mathcal{K}_\infty$ function $\alpha_5$ such that $|u_{i|\tau \kappa}^*(x)| \leq \alpha_5(|x|)$ for all $x \in \mathbb{X}_{\text{MPC}}$. Then the closed-loop dynamics when MPC controller is active becomes 
\[
\begin{aligned}
    &|x_{\tau \kappa+i}|=A^i x_{\tau \kappa} + \sum_{s=0}^{i-1}A^{i-1-s}B \, (\hat{u}_{s|\tau \kappa}(x_{\tau \kappa}) + v_{\tau\kappa+s})  \\
    & \quad+ \sum_{s=0}^{i-1}A^{i-1-s} w_{\tau\kappa+s} \\
    &\leq|A^i| | x_{\tau \kappa}| + \left|\sum_{s=0}^{i-1}A^{i-1-s}B \right| \left| \hat{u}_{s|\tau \kappa}(x_{\tau \kappa}) \right|  \\
    & \quad + \left| \sum_{s=0}^{i-1}A^{i-1-s}  w_{\tau\kappa+s} \right| + \left|\sum_{s=0}^{i-1}A^{i-1-s}B \, v_{\tau\kappa+s} \right| \\
&\leq \underbrace{|A^i| | x_{\tau \kappa}| + \left|\sum_{s=0}^{i-1}A^{i-1-s}B \right| \alpha_5(|x_{\tau \kappa}|)}_{\gamma^{\mathrm{MPC}}_{1,i}(|x_{\tau \kappa}|)} \\
& \quad + \underbrace{\left|\sum_{s=0}^{i-1}A^{i-1-s}B \, \sqrt{m} \, C \right|}_{\gamma_{v_1,i}(C)}  + \underbrace{\left|\sum_{s=0}^{i-1}A^{i-1-s}B \, D_{\text{MPC}}\right| \, \epsilon_{\tau \kappa}}_{\gamma^{\mathrm{MPC}}_{\tilde{\theta}_{1},i}(\epsilon_{\tau \kappa})} \\
& \quad + \underbrace{\sum_{s=0}^{i-1} \left|A^{i-1-s} \right| \sigma_W \sqrt{2n \ln\left( \frac{n \pi^2 i^2}{3 \delta} \right)}}_{\gamma_{w_1,i}(\sigma_W)}  \text{ w.p. $1-2\delta$},
\end{aligned}
\]

Finally, define the unified comparison functions by taking the maximum over all prediction indices $i=0,\ldots,\kappa-1$ and over both controller modes: \( \gamma_1(r) := \max_{\substack{i=0,\ldots,\kappa-1\\ q\in\{\mathrm{MPC},\mathrm{DB}\}}} \gamma^q_{1,i}(r), \) and \( \beta_1(r) := \max_{\substack{i=0,\ldots,\kappa-1\\ q\in\{\mathrm{MPC},\mathrm{DB}\}}} \gamma^q_{\tilde{\theta}_1,i}(r). \) Similarly, define \( \gamma_{v_1}(C) := \max_{i=0,\ldots,\kappa-1} \gamma_{v_1,i}(C), \gamma_{w_1}(\sigma_W) := \max_{i=0,\ldots,\kappa-1} \gamma_{w_1,i}(\sigma_W). \) Since the maxima are taken over finitely many class-$\mathcal K_\infty$ functions, the resulting functions remain of class $\mathcal K_\infty$. In addition, the MPC and deadbeat situations derived above all depend on the same events with probability $1-2\delta$. Hence, \eqref{eq:kappa_dragan2} holds by setting \( \tilde{\gamma}(s):=\gamma_1(s), c_2 := \gamma_{v_1}(C) + \gamma_{w_1}(\sigma_W), \) and \( r_s=r_b=r_x=+\infty. \). Finally, setting $\tau_0 = \left\lceil \frac{t_1}{\kappa} \right\rceil$ gives us the final conclusion.

% Finally, we take $\gamma_1(|x_{\tau \kappa}|):=\max\{\gamma^{\mathrm{MPC}}_1(|x_{\tau \kappa}|),\gamma^{\mathrm{DB}}_1(|x_{\tau \kappa}|) \}, \gamma_{\tilde{\theta}_1}(\|\hat{\theta}_{\tau \kappa}-\theta^*  \|) := \max\{ \gamma^{\mathrm{MPC}}_{\tilde{\theta}_1}(\|\hat{\theta}_{\tau \kappa}-\theta^*  \|), \gamma^{\mathrm{DB}}_{\tilde{\theta}_1}(\|\hat{\theta}_{\tau \kappa}-\theta^*  \|)\} $, and set $\tilde{\gamma}(s):=\gamma_1(s), c_2: = \gamma_{v_1}(C) + \gamma_{\tilde{\theta}_1}(\|\hat{\theta}_{\tau \kappa}-\theta^*  \|) + \gamma_{w_1}(\sigma_W), r_s=r_b=r_x = +\infty$. More detailed construction of the functions involved is available in \cite[Theorem 1]{nevsic1999formulas}.

\section{Simulation}
To demonstrate the effectiveness of the control method in Algorithm \ref{algoall}, we consider the following linear system
\[
x_{t+1} =
\begin{bmatrix}
\cos\frac{\pi}{4} & \sin\frac{\pi}{4} \\
-\sin\frac{\pi}{4} & \cos\frac{\pi}{4}
\end{bmatrix} x_t
+
\begin{bmatrix}
1 & 0.2 \\ -0.3 & 1
\end{bmatrix} u_t
+ w_t
\]
where $u_{\max}=6, w_t \overset{\mathrm{iid}}{\sim} \mathcal{N}(0,1), C=3, \hat{A}_0=\mathbf{I}_{2 \times 2}, \hat{B}_0=\mathbf{0}_{2 \times 2}, x_0=[20 \ -20]^\intercal, Q=1, R=0.1\times\mathbf{I}_{2 \times 2}, \lambda=0.1, N=5, c_{MPC} = 200, c_{DB} = 100, c_3 = 50$. We run the simulation for 8000 steps per episode, and 20 episodes are run with the same initial conditions to plot the ensemble averages of the states and the estimation error in Fig. \ref{estimationerror_ensemble}. With the true parameters, the DB control law and MPC problem are
\[
    \pi_{\mathrm{DB}}(x,\theta^*,C) =\text{sat}_{3}\left(-\begin{bmatrix}
0.80 & 0.53 \\
-0.47 & 0.87
\end{bmatrix} x \right),
\]
\[
\begin{aligned}
\min_{\{\pi_i\}_{i=0}^{4}} \quad & \sum_{i=0}^{4} \ell(\hat{x}_i, \pi_i) + \ell_f(\hat{x}_N), \\
\text{s.t.} \quad & \hat{x}_{0} = x, \\
& \hat{x}_{i+1}=\begin{bmatrix}
\cos\frac{\pi}{4} & \sin\frac{\pi}{4} \\
-\sin\frac{\pi}{4} & \cos\frac{\pi}{4}
\end{bmatrix} \hat{x}_i
+
\begin{bmatrix}
1 & 0.2 \\ -0.3 & 1
\end{bmatrix} \pi_i, \\
& \|\pi_i\|_\infty \leq 3.
\end{aligned}
\]

\begin{figure}[htbp]
\centering
\includegraphics[scale=0.1]{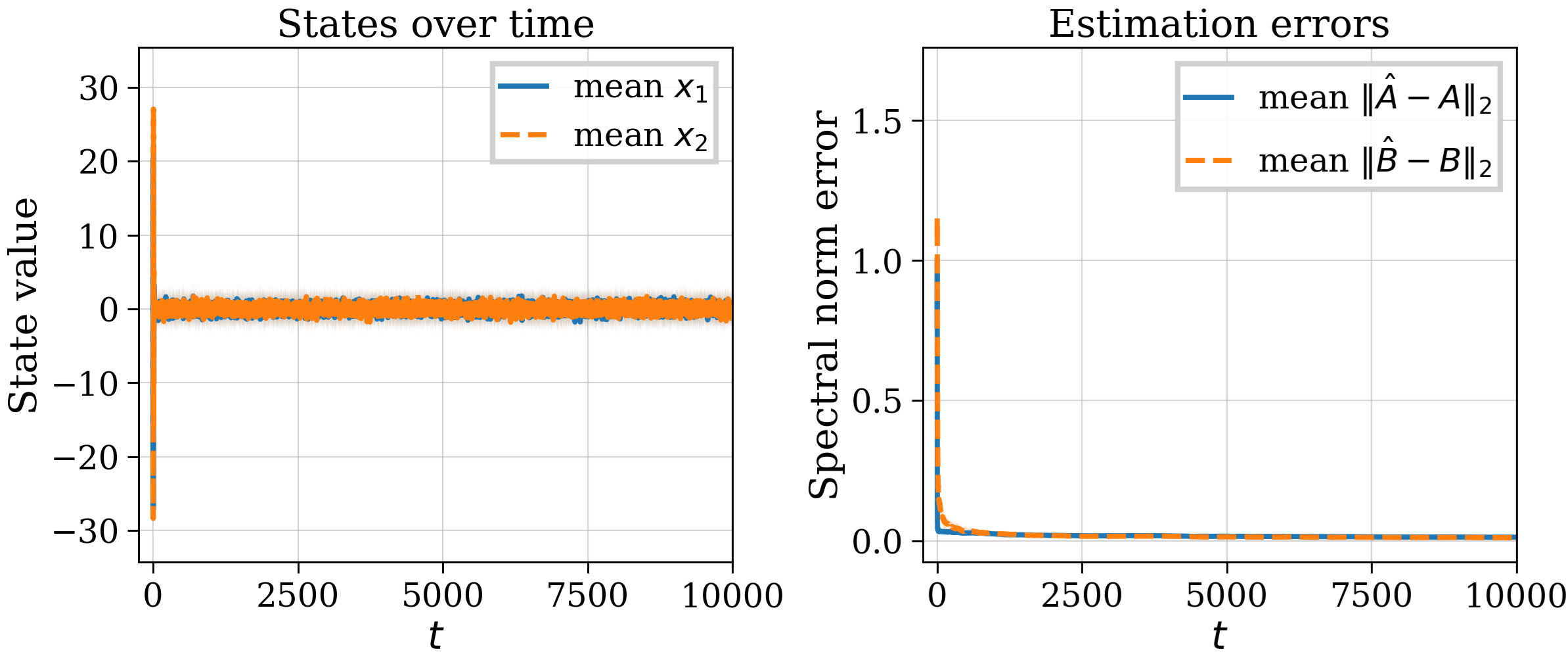}
\caption{States and estimation errors ensemble} \label{estimationerror_ensemble}
\end{figure}

\section{Conclusion}
In this paper, we propose an excited certainty equivalence Model Predictive Control scheme for adaptive control of discrete-time linear systems subject to additive i.i.d bounded stochastic disturbances with input constraints, assuming that the system is at least marginally stable. We proved the probabilistic PE condition for any process disturbance that obeys the sub-Gaussian description. In addition, a uniform lower bound on the prediction horizon is derived that ensures stability of the MPC without terminal constraints. A high probability practical non-asymptotic stability bound on the closed-loop system is established.

\bibliographystyle{IEEEtran}  
\bibliography{references}

\appendix

% \begin{figure}[htbp]
% \centering
% \includegraphics[scale=0.5]{theorem_dependence.PNG}
% \caption{Theorem dependence graph} \label{theorem_dependence}
% \end{figure}

\section*{Proof of Corollary \ref{corollarysufficientcondition}}
From~\eqref{eq:R-sufficient-new}, one has
\(
\frac{
Q+\lambda_{\max}(R)\|\mathcal{R}_*^\dagger A^\kappa\|^2
}{Q}
\le
1+\frac{\xi}{2}.
\)
Moreover, from~\eqref{eq:umax-sufficient-new},
\(
p^2
=
\exp\left(
-2\sigma_{\min}(\mathcal{R}_*)(u_{\max}-C)
\right)
\le
\frac{\xi}{2(1+\xi)}.
\)
Therefore,
\(
1-p^2
\ge
1-\frac{\xi}{2(1+\xi)}
=
\frac{2+\xi}{2(1+\xi)}.
\)
Hence
\(
\gamma
\le
\left(1+\frac{\xi}{2}\right)
\frac{2(1+\xi)}{2+\xi}
=
1+\xi .
\)
Since $\gamma\ge 1$, it follows that
\(
0\le \gamma-1\le \xi .
\)
Next, from~\eqref{eq:m2} and~\eqref{eq:gamma},
\(
m_2^2
=
\frac{\gamma^2}{N-1}+\gamma-1 .
\)
Thus condition~\eqref{condition_ISS} is equivalent to
\(
q_3^{\max}Q
\left[
(\gamma-1)m_2^2
+
\frac{\gamma^2}{N-1}
\right]
<
q_3^{\min}Q .
\)
Since $Q>0$, this is equivalent to
\(
(\gamma-1)m_2^2
+
\frac{\gamma^2}{N-1}
<
\frac{q_3^{\min}}{q_3^{\max}}.
\)
Substituting
\(
m_2^2=\gamma-1+\frac{\gamma^2}{N-1},
\)
we obtain
\(
(\gamma-1)m_2^2
+
\frac{\gamma^2}{N-1}
=
(\gamma-1)^2
+
\frac{\gamma^3}{N-1}.
\)
Therefore, it is sufficient to prove
\(
(\gamma-1)^2
+
\frac{\gamma^3}{N-1}
<
\frac{q_3^{\min}}{q_3^{\max}}.
\)
Because $\gamma\ge 1$ and $c_{\mathrm{MPC}}=2c_3$, one has
\(
q_2
=
\max\left\{
\frac{4\gamma c_{\mathrm{MPC}}}{c_3},
\frac{1}{2\gamma},
2\gamma
\right\}
=
8\gamma .
\)
Consequently,
\(
q_3^{\min}
=
\min\left\{
1+\frac{1}{4\gamma},
q_2,
\frac{1}{2\gamma},
1
\right\}
=
\frac{1}{2\gamma},
\)
and
\(
q_3^{\max}
=
\max\left\{
1+\frac{1}{4\gamma},
q_2,
\frac{1}{2\gamma},
1
\right\}
=
8\gamma .
\)
Further using $\gamma\le 1+\xi$, we have
\(
\frac{q_3^{\min}}{q_3^{\max}}
=
\frac{1}{16\gamma^2}
\ge
\frac{1}{16(1+\xi)^2}.
\)
Also,
\(
(\gamma-1)^2
+
\frac{\gamma^3}{N-1}
\le
\xi^2
+
\frac{(1+\xi)^3}{N-1}.
\)
By~\eqref{eq:N-lb-new},
\(
\frac{(1+\xi)^3}{N-1}
<
\frac{1}{16(1+\xi)^2}-\xi^2.
\)
Therefore,
\(
(\gamma-1)^2
+
\frac{\gamma^3}{N-1}
<
\xi^2
+
\frac{1}{16(1+\xi)^2}
-
\xi^2
=
\frac{1}{16(1+\xi)^2}
\le
\frac{q_3^{\min}}{q_3^{\max}}.
\)
Thus,
\(
(\gamma-1)^2
+
\frac{\gamma^3}{N-1}
<
\frac{q_3^{\min}}{q_3^{\max}},
\)
which implies
\(
q_3^{\max}\left( Q\gamma-Q\right)m_2^2
-
q_3^{\min}Q
+
\frac{q_3^{\max}Q\gamma^2}{N-1}
<0.
\)
Hence condition~\eqref{condition_ISS} holds.

% Define, for brevity, $\rho \;\coloneqq\; (1+\bar{E}_w)^2\ge 1$. With this notation, we have $m_2^2 \;=\;\frac{\gamma^2}{N-1}+\gamma-1.$ For \eqref{condition_ISS}, after substitution, it suffices to guarantee
% \(
% \begin{aligned}
%     &(\gamma\rho-1)\Big(\tfrac{\gamma^2}{N-1}+\gamma-1\Big)-1+\tfrac{\gamma^2}{N-1}<0.
% \end{aligned}
% \)
% Since $\gamma>0$, this is equivalent to
% \(
% \frac{\gamma^2}{N-1}+\gamma \;<\; 1+\frac{1}{\rho}
% \;=\;1+\frac{1}{(1+\bar{E}_w)^2}.
% \)
% If $\gamma<1+\frac{1}{\rho}$, it suffices that
% \(
% N \ge 1+\tfrac{\gamma^2}{\bigl(1+\tfrac{1}{\rho}\bigr)-\gamma}.
% \) Because
% \(
% \gamma=\tfrac{Q+\lambda_{\max}(R)\| \mathcal{R}_*^\dagger A^\kappa \|^2}{Q(1-p^2)}
% = \tfrac{1+\frac{\lambda_{\max}(R)\| \mathcal{R}_*^\dagger A^\kappa \|^2}{Q}}{1-p^2},
% \)
% the choice \eqref{eq:R-sufficient} ensures
% \(
% 1+\frac{\lambda_{\max}(R)\| \mathcal{R}_*^\dagger A^\kappa \|^2}{Q}\ \le\ 1+\kappa.
% \)
% Hence $\gamma \le \tfrac{1+\kappa}{1-p^2}$. To make $\gamma<1+\frac{1}{\rho}=1+\frac{1}{(1+\bar{E}_w)^2}$, it suffices that
% \(
% \tfrac{1+\kappa}{1-p^2} < 1+\frac{1}{(1+\bar{E}_w)^2}
% \,\Longleftrightarrow \,
% p^2< \tfrac{\frac{1}{(1+\bar{E}_w)^2}-\kappa}{1+\frac{1}{(1+\bar{E}_w)^2}}.
% \)
% Using $p^2=e^{-2\,\sigma_{\min}(\mathcal{R}_*)(u_{\max}-C)}$ yields exactly the lower bound
% \eqref{eq:umax-sufficient} on $u_{\max}$.

\section*{Proof of Lemma \ref{localxlipschitzMPC}}

We first show the following supporting lemma.
\begin{lemma}
    Consider system \eqref{sys1} under Assumption \ref{assump2}, for each system matrices $A,B$ and $N \in \mathbb{N}, N \geq \kappa$, there exists $\bar{\epsilon}_1>0$ such that for all $\hat{\theta} = [\hat{A} \ \hat{B}] \in \mathcal{B}_{\bar{\epsilon}_1}(\theta_*)$, the minimal singular value $\sigma_{min}(\mathcal{R}_N(\hat{A},\hat{B})) > 0$, where $\mathcal{R}_N(\hat{A},\hat{B}) = \begin{bmatrix}
     \hat{A}^{N-1} \hat{B} &  \hat{A}^{N-2} \hat{B} & \dots & \hat{B} \end{bmatrix} $.
    \label{perturbationReachabilityMatrix}
\end{lemma}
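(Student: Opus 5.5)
The argument combines a rank statement about the true parameters with continuity of singular values. The dimensions need care first. As written, $\mathcal{R}_N(\hat A,\hat B)=[\hat A^{N-1}\hat B\ \cdots\ \hat B]\in\mathbb{R}^{n\times Nm}$, and "minimal singular value" means the $n$-th largest singular value $\sigma_n$. Positivity of $\sigma_n$ is then equivalent to full row rank $n$.

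The first step is to show $\sigma_{\min}(\mathcal{R}_N(A,B))>0$ at the true parameter. Since $N\ge\kappa$, the column blocks of $\mathcal{R}_N(A,B)$ are $\{A^{j}B\}_{j=0}^{N-1}$, which contain $\{A^{j}B\}_{j=0}^{\kappa-1}$. The reversed column ordering does not affect rank, so the range of $\mathcal{R}_N(A,B)$ contains the range of $\mathcal{R}_*$. By Assumption~\ref{assump2}, the latter is all of $\mathbb{R}^n$. More quantitatively, $\mathcal{R}_N\mathcal{R}_N^\top \succeq \mathcal{R}_*\mathcal{R}_*^\top$, since the Gram matrix is a sum of nonnegative terms $A^jBB^\top (A^j)^\top$. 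This gives $\sigma_{\min}(\mathcal{R}_N(A,B))\ge\sigma_{\min}(\mathcal{R}_*)\ge\underline{\sigma}_B>0$.

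The second step is a quantitative perturbation bound. The map $\hat\theta\mapsto\mathcal{R}_N(\hat A,\hat B)$ is polynomial, hence locally Lipschitz. For $\|\hat\theta-\theta^*\|\le 1$, write $\hat A^j\hat B-A^jB=(\hat A^j-A^j)\hat B+A^j(\hat B-B)$. Telescoping gives $\hat A^j-A^j=\sum_{k=0}^{j-1}\hat A^{k}(\hat A-A)A^{j-1-k}$. Using $\|\hat A\|\le\|A\|+1\le 2$ and $\|\hat B\|\le\|B\|+1$, one obtains
\[
\|\mathcal{R}_N(\hat A,\hat B)-\mathcal{R}_N(A,B)\|\le L_N\,\|\hat\theta-\theta^*\|
\]
for an explicit constant $L_N$ depending on $N$, $\|B\|$, and $m$. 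The block norms combine via $\|[M_1\ \cdots\ M_N]\|\le\sqrt{N}\max_j\|M_j\|$, which gives a factor of $\sqrt N$ times a geometric sum such as $j2^{j-1}(\|B\|+1)+2^j$.

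The third step applies Weyl's inequality for singular values: $|\sigma_n(M+E)-\sigma_n(M)|\le\|E\|$. This yields
\[
\sigma_{\min}(\mathcal{R}_N(\hat A,\hat B))\ge\underline{\sigma}_B-L_N\|\hat\theta-\theta^*\|.
\]
Choosing $\bar\epsilon_1:=\min\{1,\underline{\sigma}_B/(2L_N)\}$ then gives $\sigma_{\min}\ge\underline{\sigma}_B/2>0$ on $\mathcal{B}_{\bar\epsilon_1}(\theta_*)$. The explicit margin $\underline{\sigma}_B/2$ is worth recording because the subsequent Lipschitz argument for $\pi_{\mathrm{MPC}}$ will likely need a uniform lower bound, not mere positivity. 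None of the steps is deep. The only point needing care is the first step, namely making sure the singular value in question is $\sigma_n$ of a wide matrix and that $N\ge\kappa$ really embeds the $\kappa$-step reachability columns. The Lipschitz constant in the second step only needs to be finite, so the bookkeeping there is routine.
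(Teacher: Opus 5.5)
Your proof is correct and follows the same route as the paper: Weyl's inequality for singular values, combined with a bound showing that $\|\mathcal{R}_N(\hat A,\hat B)-\mathcal{R}_N(A,B)\|$ vanishes as $\|\hat\theta-\theta^*\|\to 0$. Your version is more careful than the paper's in three places. You prove $\sigma_{\min}(\mathcal{R}_N(A,B))\ge\underline{\sigma}_B$ for $N\ge\kappa$ via $\mathcal{R}_N\mathcal{R}_N^\top\succeq\mathcal{R}_*\mathcal{R}_*^\top$, which the paper takes for granted. You apply Weyl to the difference of the reachability matrices, not to $\|[A\ B]-[\hat A\ \hat B]\|$ as the paper's first display states. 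You also obtain an explicit $\bar\epsilon_1$ with the uniform margin $\underline{\sigma}_B/2$.
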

\begin{proof}
    Denote $\Delta_A,\Delta_B$ as perturbations of $A,B$ respectively so that $\hat{A} = A + \Delta_A, \hat{B} = B + \Delta_B$. According to Weyl's Inequality,
\(
\begin{aligned}
    |\sigma_{\min}(\mathcal{R}_N(A,B))-\sigma_{\min}(\mathcal{R}_N(\hat{A},\hat{B}))| \le \| [A\ B] - [\hat{A} \ \hat{B}] \|.
\end{aligned}
\)
Hence, if the perturbation is small enough such that
\(
\| [A\ B] - [\hat{A} \ \hat{B}] \| < \sigma_{\min}(\mathcal{R}_N(A,B)),
\)
then it follows that $\sigma_{\min}(\mathcal{R}_N(\hat{A},\hat{B})) > 0$. 

Denote the difference between two reachability matrices 
    \begin{equation}
    \begin{aligned}
        \Delta_N &= \mathcal{R}_N(\hat{A},\hat{B})-\mathcal{R}_N(A,B) \\
        &= \left[ (\hat{B}-B) \quad ... \quad (\hat{A}^{N-1}\hat{B} - A^{N-1}B) \right],
    \end{aligned}
    \end{equation}
    which can be bounded as below
    \[
    \begin{aligned}
    & \lVert \Delta_N \rVert \leq  \sum_{k=1}^N \|\hat{A}^k\hat{B} - A^kB \| \\
    & = \sum_{k=1}^N \|(A+\Delta_A)^{k-1}(B+\Delta_B)-A^{k-1}B \| \\
    & \leq \sum_{h=1}^{N}  \sum_{j=1}^{h} \tfrac{h!}{(h-j)!j!}\lVert A^{h-j} \rVert \Delta_A^j(\lVert B \rVert + \Delta_B)+ \sum_{h=0}^{i} \lVert A^h \rVert \Delta_B 
    \end{aligned}
    \]
    where the last inequality follows from triangle inequality of norms. Inspecting the RHS of the above inequality, we can conclude that the upper bounds of $\lVert \mathcal{R}_N(A,B)-\mathcal{R}_N(\hat{A},\hat{B}) \rVert$ decrease as the perturbation bounds $\Delta_A,\Delta_B$ decrease. Therefore, for each $A,B,N$, there must exist an upper bound $\bar{\epsilon}_1>0$ such that if $\max \{\Delta_A,\Delta_B \} \leq \bar{\epsilon}_1$, we have $\sum_{h=1}^{N}  \sum_{j=1}^{h} \frac{h!}{(h-j)!j!}\lVert A^{h-j} \rVert \Delta_A^j(\lVert B \rVert + \Delta_B)+ \sum_{h=0}^{i} \lVert A^h \rVert \Delta_B  \leq \sigma_{\min}(\mathcal{R}_N(A,B))$, which implies $\sigma_{\min}(\mathcal{R}_N(\hat{A},\hat{B}))>0$.  
\end{proof}

Now we are ready to state the proof of Lemma \ref{localxlipschitzMPC}.
Denote $U_k$ a concatenation of $u_1$ to $u_k$, and $\hat{\theta}=[\hat{A} \ \hat{B}]$. We define $k$ step system evolution in a compact form as 
\begin{equation}
\begin{split}
x_k &= \hat{A}^k x_0 + \begin{bmatrix}
\hat{A}^{k-1} \hat{B} &  \hat{A}^{k-2} \hat{B} & \dots & \hat{B} \end{bmatrix} U_k\\
& = \hat{A}^k x_0 + \mathcal{R}_k(\hat{A},\hat{B})U_k 
\end{split}
\label{eq:x_kexpanded}
\end{equation}
and so the cost function can be reformulated as 
\begin{equation}
\label{eq:costexpanded}
\begin{aligned}
f(U_k,x_0,\hat{A},\hat{B}) =
\sum_{i=0}^{N-1} \left[
Q\sigma(x_k) + u_k^\top Ru_k
\right] + Q\sigma(x_N)
\end{aligned}
\end{equation}
subject to input constraints
\begin{equation}
\begin{bmatrix}
I_{N\kappa m} \\
- I_{N\kappa m}
\end{bmatrix}
U_k
+
\begin{bmatrix}
- u_{\max}\mathbf{1}_{N\kappa m} \\
\;\;u_{\max}\mathbf{1}_{N\kappa m}
\end{bmatrix}
\le \mathbf{0}.
\label{eq:inputsconstraintsmatrix}
\end{equation}
Define matrix
\(
M
=
\begin{bmatrix}
\nabla_1 H_1(u)
\end{bmatrix}
=
\begin{bmatrix}
I_{N\kappa m} \\
- I_{N\kappa m}
\end{bmatrix}.
\)
We are now verifying the following four conditions for any given fixed $x_0 \in \mathbb{X}_l$ and consider $\hat{A},\hat{B}$ as the parameters.
\begin{enumerate}
\item Since $H_1$ is a constant matrix, so it is $\mathbf{C}^2$, and the components of $H_1(\cdot)$ are convex for any $\hat{A},\hat{B}$ within their domains. We now prove that $f(U_k,x_0,\hat{A},\hat{B})$ is also $\mathbf{C}^2$ by showing that it is a composition of two $\mathbf{C}^2$ functions. Firstly, $\sigma(x)$ is a $\mathbf{C}^2$ function wrt $x$. It suffices to prove that $x_k$ is a $\mathbf{C}^2$ function of $\hat{A},\hat{B},U_k$. From (\ref{eq:x_kexpanded}), $\hat{A}^k x_0$ is a polynomial function of $\hat{A}$, and is therefore a $C^\infty$ function of $\hat{A}$. Similarly, $\mathcal{R}_k(\hat{A},\hat{B})$ is a polynomial function of $\hat{A},\hat{B}$ and $\mathcal{R}_k(\hat{A},\hat{B})U_k$ is a linear operator of $U_k$, which are both $C^\infty$ functions. Therefore, $x_k$ is derived from compositions of many functions that are at least $\mathbf{C}^2$ continuous. Therefore, $f(U_k,x_0,\hat{A},\hat{B})$ is a $\mathbf{C}^2$ continuous function with respect to all its arguments.

\item For any $\hat{A},\hat{B}$, there exists a unique solution to the problem because it is a convex problem with convex constraints.
\item Both $\hat{A},\hat{B}$ and $U_k$ are bounded in norm, and $|x_0| \leq \bar{x}_s$ when MPC is activated.
\item $M^\intercal$ is full row rank, so there always exists a constant $\beta = \sigma_{min}(M^\intercal) > 0$ such that $\|M^\intercal \lambda \| \geq \beta \|\lambda \|$ for any $\lambda$. Besides,
\begin{equation}
\begin{split}
\frac{\partial^2 f}{\partial U_k^2}
&=
\sum_{k=0}^{N-1}
\Bigl\{
\mathcal{R}_k(\hat{A},\hat{B})^\intercal
\Bigl[
2Q\!\left(2e^{2|x_k|} - e^{|x_k|}\right)
\Bigr]\\
& \quad \times \mathcal{R}_k(\hat{A},\hat{B})
+ 
2R
\Bigr\} + 
\mathcal{R}_N(\hat{A},\hat{B})^\intercal \\
& \quad \times 
\Bigl[
2Q\!\left(2e^{2|x_N|} - e^{|x_N|}\right)
\Bigr]
\mathcal{R}_N(\hat{A},\hat{B}).
\end{split}
\end{equation}
From Lemma \ref{perturbationReachabilityMatrix}, for each $A,B,k$, there exists $\bar{\epsilon}_1$ such that for all $[\hat{A} \ \hat{B}] \in \mathcal{B}_{\bar{\epsilon}_1}(\theta_*)$, $\mathcal{R}_k(\hat{A},\hat{B})$ is full rank. Besides $2Q\bigl(2 \exp\bigl(2|x|\bigr) - \exp\bigl(|x|\bigr)\bigr)$ is positive. So $\frac{\partial^2 f}{\partial U_k^2}$ is positive definite, and there exists $\alpha>0$ such that $v^\intercal \frac{\partial^2 f}{\partial U_k^2} v \geq \alpha \|v\|^2$ for any size-conforming vector $v$.
\end{enumerate}

According to Appendix D in \cite{hager1979lipschitz}, there exists a finite Lipschitz constant for $\pi_{\mathrm{MPC}}(x,\theta)$ w.r.t $\theta$. Finally, we note that the Lipschitz constant $D(\bar{x}_s)$ is defined as  
\[
D(\bar{x}_s) := \sup_{\substack{\|x\|\le \bar{x}_s \\ \hat{\theta},\,\theta \in \mathcal{B}_{\bar{\epsilon}_1}(\theta^*)}}
\frac{\|\pi_{\mathrm{MPC}}(x,\hat{\theta}) - \pi_{\mathrm{MPC}}(x,\theta)\|}{\|\hat{\theta} - \theta\|}.
\]
Let $0 < \bar{x}_s^1 \le \bar{x}_s^2$ and define  
$\mathbb{X}_l^1 := \{x : \|x\| \le \bar{x}_s^1\}$,  
$\mathbb{X}_l^2 := \{x : \|x\| \le \bar{x}_s^2\}$.  
Since $\mathbb{X}_l^1 \subseteq \mathbb{X}_l^2$, it follows from the definition of the supremum that  
\(
D(\bar{x}_s^1) = \sup_{x \in \mathbb{X}_l^1} \dots \ \ \le\ \ \sup_{x \in \mathbb{X}_l^2} \dots = D(\bar{x}_s^2).
\)
Therefore, $D(\bar{x}_s)$ is non-decreasing in $\bar{x}_s$.

% \section*{Proof of Lemma \ref{lemma:kappato1step_dragan}}

\section*{Proof of Lemma \ref{lem:levelsetuncertainty}}
The proof requires the following lemma.
\begin{lemma}
For each compact set $\mathbb{X}_l =\{ x \mid |x| \leq \bar{x}_s \}$ with a constant $\bar{x}_s>0$, there exists positive constants $\bar{\epsilon}_1>0$ and $L<\infty$ such that $\forall x \in \mathbb{X}_l, \forall  \hat{\theta} \in \mathcal{B}_{\bar{\epsilon}_1}(\theta^*),$ 
\[
    \big|Y_{\mathrm{MPC}}(\theta^*, x)-Y_{\mathrm{MPC}}(\hat{\theta}, x)\big|  \le L\big(\|\hat{\theta}- \theta^*\|\big).
\]
\label{lem:Y1locallipschitz}
\end{lemma}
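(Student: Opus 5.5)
The plan is to obtain local Lipschitz continuity of the CE value function $Y_{\mathrm{MPC}}(\hat\theta,x)$ in $\hat\theta$ from the Lipschitz dependence of the prediction map on the parameters, uniformly over the compact set $\mathbb{X}_l\times\mathcal{B}_{\bar\epsilon_1}(\theta^*)$. The feasible set $\{\|\hat u_i\|_\infty\le u_{\max}-C\}$ is the same compact box $\mathbb{U}_N$ for every $\hat\theta$. Hence the difference of two minima over a common set is at most the supremum of the difference of the objectives:
\[
\big|Y_{\mathrm{MPC}}(\theta^*,x)-Y_{\mathrm{MPC}}(\hat\theta,x)\big|\le \sup_{U\in\mathbb{U}_N}\big|f(U,x,\theta^*)-f(U,x,\hat\theta)\big|,
\]
where $f$ is the expanded cost \eqref{eq:costexpanded}. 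This follows by evaluating each problem at the other's minimizer. Minimizers exist by continuity of $f$ and compactness of $\mathbb{U}_N$; uniqueness is not needed here.

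Next, I will bound the objective difference term by term. The input-penalty terms $u^\top Ru$ do not depend on $\hat\theta$ and cancel. For the state terms, write the $k$-th predicted state as $\hat x_k(\hat\theta)=(\hat A^\kappa)^k x+\sum_j(\hat A^\kappa)^{k-1-j}\mathcal{R}_\kappa(\hat A,\hat B)\hat u_j$ using the sub-sampled model in \eqref{MPCProblem}. This is a polynomial in the entries of $\hat\theta$ with coefficients depending linearly on $x$ and $U$. On the compact set $\{|x|\le\bar x_s\}\times\mathbb{U}_N\times\mathcal{B}_{\bar\epsilon_1}(\theta^*)$ its derivative with respect to $\hat\theta$ is bounded by a constant $L_x$. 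Concretely, by expanding $(A+\Delta_A)^j-A^j$ as in the proof of Lemma \ref{perturbationReachabilityMatrix}, one gets $|\hat x_k(\hat\theta)-\hat x_k(\theta^*)|\le L_x\|\hat\theta-\theta^*\|$. The same compactness gives a uniform bound $|\hat x_k|\le \bar X$.

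Then use that $\sigma(z)=(e^{|z|}-1)^2$ is locally Lipschitz, with constant $2e^{\bar X}(e^{\bar X}-1)$ on the ball of radius $\bar X$. This follows from $|\,|z_1|-|z_2|\,|\le|z_1-z_2|$ and the mean value theorem for $s\mapsto(e^s-1)^2$. Summing over $k=1,\dots,N$ (the $k=0$ term is identical for both parameters and cancels), I obtain
\[
\sup_{U}|f(U,x,\theta^*)-f(U,x,\hat\theta)|\le Q\,N\,2e^{\bar X}(e^{\bar X}-1)L_x\|\hat\theta-\theta^*\|=:L\|\hat\theta-\theta^*\|,
\]
which is the claim. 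Here $\bar\epsilon_1$ can be taken as in Lemma \ref{localxlipschitzMPC}, and in fact any finite radius works.

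The only delicate point is keeping every constant uniform. The predicted trajectories must be bounded independently of $U$ and $\hat\theta$, which holds because inputs are bounded by $u_{\max}-C$, $|x|\le\bar x_s$, and $\|\hat\theta\|\le\|\theta^*\|+\bar\epsilon_1$; I expect this bookkeeping, not any conceptual step, to be the main work. Compactness supplies the rest. Notably, no strong convexity or reachability is needed, unlike for the policy Lipschitz bound in Lemma \ref{localxlipschitzMPC}.
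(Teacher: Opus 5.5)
Your argument is correct, but it follows a different route from the paper.

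The paper's proof is a two-line appeal to regularity. It asserts that, because the predicted states and the expanded cost are smooth in $(U,x,\hat\theta)$, the value function $Y_{\mathrm{MPC}}(\hat\theta,x)$ is itself at least $\mathbf{C}^1$. It then applies the mean value theorem on compact sets to conclude local Lipschitz continuity.

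You never need any regularity of the value function. Since the feasible box for $U$ does not depend on $\hat\theta$, you reduce the claim to a bound on the difference of objectives that is uniform in $U$. You then push Lipschitz continuity through the polynomial prediction map and the locally Lipschitz function $\sigma$.

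Your route gains rigor and explicitness. It yields a concrete constant $L=2QNe^{\bar X}(e^{\bar X}-1)L_x$ and shows that any finite radius $\bar\epsilon_1$ works. It uses neither uniqueness of the minimizer, nor convexity, nor reachability. By contrast, smoothness of the cost alone does not make a minimum smooth. To make the paper's step airtight, one would need strict convexity of the problem in $U$ (from $R\succ 0$ and convexity of $\sigma$) to get a unique, continuous minimizer. One would then need a Danskin-type envelope theorem to get differentiability in $\hat\theta$.

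The paper's proof also claims joint local Lipschitz continuity in $(x,\hat\theta)$. Your argument extends to that immediately: the feasible box is independent of $x$ as well, and $\hat x_k$ is Lipschitz in $x$ on compacts, so the same min-difference bound applies.
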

\begin{proof}
    From \eqref{eq:x_kexpanded} and \eqref{eq:costexpanded}, we know that the value function $Y_{\mathrm{MPC}}(\hat{\theta},x)$ is at least $\mathbf{C}^1$ continuous. Therefore, according the mean value theorem, within any compact sets of all its arguments, $Y_{\mathrm{MPC}}(\hat{\theta},x)$ is locally Lipschitz continuous \cite[Lemma in p389]{hirsch2013differential}. 
% Namely, for all $x \in \mathbb{X}_l$, with compact constraints on the input \eqref{eq:inputsconstraintsmatrix} and the parameter satisfying $\hat{\theta} \in \mathcal{B}_{\bar{\epsilon}_1}(\theta^*)$, $Y_{\mathrm{MPC}}(\hat{\theta},x)$ is Lipschitz continuous in all its arguments. 
\end{proof}

    Firstly of all, $\frac{1}{2}c_3 < c_4 < c_{\mathrm{MPC}}$ guarantees that $c_{\mathrm{MPC}} - c_3+c_4 > 0$. Since $Y_{\mathrm{MPC}}(\hat{\theta},x) \geq Q \sigma(x)$ and $ Y_{\mathrm{MPC}}(x) \geq Q \sigma(x)$ hold due to the positive definiteness, for ant $|x| \geq \ln\left(\sqrt{\frac{c_{\mathrm{MPC}}}{Q}}+1 \right)$, we have $Y_{\mathrm{MPC}}(\hat{\theta},x) \geq c_{\mathrm{MPC}}, Y_{\mathrm{MPC}}(x) \geq c_{\mathrm{MPC}} \geq c_{\mathrm{MPC}} - c_4$. For any $|x| < \ln\left(\sqrt{\frac{c_{\mathrm{MPC}}}{Q}}+1 \right)$, using Lemma \ref{lem:Y1locallipschitz}, there must exist a small enough $\bar{\epsilon}_3 \leq \bar{\epsilon}_1$ such that if $\|\hat{\theta}-\theta^*\| \leq  \bar{\epsilon}_3$, we have
    \(
    |Y_{\mathrm{MPC}}(\hat{\theta},x)- Y_{\mathrm{MPC}}(x)| \leq L \, \bar{\epsilon}_3 \leq c_4,
    \)
    so that 
    \(
    Y_{\mathrm{MPC}}(\hat{\theta},x) \geq c_{\mathrm{MPC}} \;\Rightarrow\; Y_{\mathrm{MPC}}(x) \geq c_{\mathrm{MPC}} - c_4.
    \)
    The proof of 
    \(
    Y_{\mathrm{MPC}}(\hat{\theta},x) \leq c_{\mathrm{MPC}}-c_3 \;\Rightarrow\; Y_{\mathrm{MPC}}(x) \leq c_{\mathrm{MPC}} - c_3+c_4
    \)
    is similar.

\end{document}